\documentclass[a4paper,11pt]{article}

\usepackage{complexity} 
\usepackage{enumerate}
\usepackage{amsmath, amssymb, amsthm,complexity}
\usepackage{euler}
\usepackage{authblk}
\usepackage{todonotes}

\def\margin{2.9cm}
\usepackage[left=\margin,right=\margin,top=\margin,bottom=\margin]{geometry}

\usepackage{thmtools}
\usepackage{thm-restate}
\usepackage{hyperref}
\usepackage{cleveref}

\usepackage{authblk}

\theoremstyle{plain}
\newtheorem{theorem}{Theorem}

\newtheorem{lemma}[theorem]{Lemma}
\newtheorem{corollary}[theorem]{Corollary}

\newtheorem{preprocess rule}{Preprocessing Rule}[section]
\newtheorem{reduction rule}{Reduction Rule}[section]
\newtheorem{branching rule}{Branching Rule}[section]
\usepackage{complexity}

\title{Revisiting Connected Vertex Cover: FPT Algorithms and Lossy Kernels}

\date{}
\author{R. Krithika}
\author{Diptapriyo Majumdar}
\author{Venkatesh Raman}
\affil{The Institute of Mathematical Sciences, HBNI, Chennai, India.\\
  \texttt{\{rkrithika|diptapriyom|vraman\}@imsc.res.in}}

\usepackage{todonotes}
\usepackage{microtype}
\usepackage{multirow}
\usepackage{amsmath,amssymb}
\newtheorem{observation}{Observation}
\usepackage{comment}
\usepackage{enumerate}

\usepackage{xspace}

\usepackage{listings}
\usepackage{color}
\usepackage{cite}
\usepackage{complexity}

\usepackage{graphicx}
\RequirePackage{fancyhdr}
 \usepackage{xcolor}
\usepackage{boxedminipage}

\newcommand{\CC}{{\mathcal C}}

\newcommand{\OO}{{\mathcal O}}

\newcommand{\nka}{\NP\ $\subseteq$ \coNP/poly}
\newcommand{\cvc}{\textsc{Connected Vertex Cover}}

\DeclareMathOperator{\comp}{\#comp}

\pagestyle{plain}

\bibliographystyle{plain}

\begin{document}

\maketitle

\begin{abstract}
The \cvc\ problem asks for a vertex cover in a graph that induces a connected subgraph. The problem is known to be fixed-parameter tractable (\FPT), and is unlikely to have a polynomial sized kernel (under complexity theoretic assumptions) when parameterized by the solution size. In a recent paper, Lokshtanov et al. [STOC 2017], have shown an $\alpha$-approximate kernel for the problem for every $\alpha > 1$, in the framework of approximate or lossy kernelization. We exhibit lossy kernels and \FPT\ algorithms for \cvc\ for parameters that are more natural and functions of the input, and in some cases, smaller than the solution size.

Our first result is a lossy kernel for \cvc\ parameterized by the size $k$ of a split deletion set. A split graph is a graph whose vertex set can be partitioned into a clique and an independent set and a split deletion set is a set of vertices whose deletion results in a split graph. Let $n$ denote the number of vertices in the input graph. We show that 
\begin{itemize}
\item \cvc\ parameterized by the size $k$ of a split deletion set admits an $\alpha$-approximate kernel with $\mathcal{O}(k+(2k+\lceil \frac{2\alpha-1}{\alpha-1} \rceil)^{\lceil \frac{2\alpha-1}{\alpha-1} \rceil})$ vertices and a $3^k n^{\OO(1)}$ time algorithm. 
\item For the special case when the split deletion set is a clique deletion set, the algorithm runs in $2^k n^{\OO(1)}$ time and the lossy kernel has $\OO(k + \lceil \frac{2\alpha-1}{\alpha-1} \rceil)$ vertices.
\end{itemize}
To the best of our knowledge, this (approximate) kernel is one of the few lossy kernels for problems parameterized by a structural parameter (that is not solution size). We extend this lossy kernelization to \cvc\ parameterized by an incomparable parameter, and that is the size $k$ of a clique cover. A clique cover of a graph is a partition of its vertex set such that each part induces a clique. We show that
\begin{itemize}
\item \cvc\ parameterized by the size $k$ of a clique cover is \W[1]-hard but admits an $\alpha$-approximate kernel with $\mathcal{O}(k \lceil \frac{2\alpha-1}{\alpha-1} \rceil)$ vertices for every $\alpha > 1$. This is one of the few problems that are not \FPT\ but admit a lossy kernel.
\end{itemize}
Then, we consider the size of a cluster deletion set as parameter. A cluster graph is a graph in which every component is a complete graph and a cluster deletion set is a set of vertices whose deletion results in a cluster graph. We show that
\begin{itemize}
\item \cvc\ parameterized by the size $k$ of a cluster deletion set is \FPT\ via an $4^k n^{\OO(1)}$ time algorithm and admits an $\alpha$-approximate kernel with $\OO(k^2+\lceil \frac{2\alpha-1}{\alpha-1} \rceil \cdot \frac{k}{\alpha-1}+ \lceil \frac{\alpha}{\alpha-1} \rceil \cdot k^{\lceil \frac{\alpha}{\alpha-1} \rceil})$ vertices for every $\alpha > 1$.
\item For the special case when the cluster deletion set is a degree-$1$ modulator (a subset of vertices whose deletion results in a graph with maximum degree $1$), the \FPT\ algorithm runs in $3^k n^{\OO(1)}$ time and the lossy kernel has $\OO(k^2+\frac{k}{\alpha-1}+ \lceil \frac{\alpha}{\alpha-1} \rceil \cdot k^{\lceil \frac{\alpha}{\alpha-1} \rceil})$ vertices.
\end{itemize}
Finally, we consider \cvc\ parameterized by the size $k$ of a chordal deletion set. A chordal graph is a graph in which every cycle of length at least $4$ has a chord -- an edge joining non-adjacent vertices of the cycle. A chordal deletion set of a graph is a subset of vertices whose deletion results in a chordal graph. We show that \cvc\ parameterized by $k$ is \FPT\ using the known algorithm for \cvc\ on bounded treewidth graphs. 
\end{abstract}

\section{Introduction, Motivation and Our Results}
\label{sec:intro}
A {\em vertex cover} in a graph is a set of vertices that has at least one endpoint from every edge of the graph. In the \textsc{Vertex Cover} problem, given a graph $G$ and an integer $\ell$, the task is to determine if $G$ has a vertex cover of size at most $\ell$. Undoubtedly, \textsc{Vertex Cover} is one of the most well studied problems in parameterized complexity. In this framework, each problem instance is associated with a non-negative integer called {\em parameter}. The pair consisting of a decision problem and a parameterization is called a {\em parameterized problem}. A common parameter is a bound on the size of an optimum solution for the problem instance. A problem is said to be {\em fixed-parameter tractable} (\FPT) with respect to parameter $k$ if it can be solved in $f(k) n^{\mathcal{O}(1)}$ time for some computable function $f$, where $n$ is the input size. Such an algorithm is called a {\em parameterized algorithm} or {\em \FPT\ algorithm}. For convenience, the running time $f(k)n^{\mathcal{O}(1)}$ where $f$ grows super-polynomially with $k$ is denoted as $\mathcal{O}^*(f(k))$. A {\em kernelization algorithm} is a polynomial-time algorithm that transforms an arbitrary instance of the problem to an equivalent instance of the same problem whose size is bounded by some computable function $g$ of the parameter of the original instance. The resulting instance is called a {\em kernel} and if $g$ is a polynomial function, then it is called a polynomial kernel and we say that the problem admits a polynomial kernel. In order to classify parameterized problems as being \FPT\ or not, the \W-hierarchy: \FPT\ $\subseteq$ \W$[1] \subseteq$ \W$[2] \subseteq \cdots \subseteq$ \XP\ is defined. It is believed that the subset relations in this sequence are all strict and a parameterized problem that is hard for some complexity class above \FPT\ in this hierarchy is unlikely to be \FPT. A parameterized problem is said to be in \XP\ if it has an algorithm with runtime $f(k)n^{g(k)}$ time for some computable functions $f$ and $g$. A problem is said to be para-\NP-hard if it is not in \XP\ unless \P=\NP. The complexity classes \FPT\ and para-\NP\ can be viewed as the parameterized analogues of \P\ and \NP.

\textsc{Vertex Cover} is \FPT\ via an easy $\OO^*(2^{\ell})$ algorithm and after a long race, the current fastest algorithm runs in $\OO^*(1.2738^{\ell})$ time~\cite{CKJ01}. \textsc{Vertex Cover} also has a kernel with $2\ell - \OO(\log \ell)$ vertices and $\OO(\ell^2)$ edges~\cite{LNRRS14}. In the \cvc\ problem, we seek a {\em connected vertex cover}, i.e., a vertex cover that induces a connected subgraph. It is easy to observe that \textsc{Vertex Cover} reduces to \cvc\ implying that the latter is at least as hard as the former in general graphs. In fact, \cvc\ is \NP-hard even on bipartite graphs~\cite{EGM10} where \textsc{Vertex Cover} is solvable in polynomial time (Theorem $2.1.1$~\cite{Diestel12book}). 
However, \cvc\ is polynomial-time solvable on chordal graphs and sub-cubic graphs~\cite{EGM10,UKG88}
and has an $\OO^*(2^{\OO(t)})$ algorithm when the input graph has treewidth upper bounded by $t$~\cite{BCKN15}.
The best known parameterized algorithm for \cvc\ takes $\OO^*(2^{\ell})$ time where $\ell$ is the size of the connected vertex cover that we are looking for~\cite{Cygan12swat}. 
Further, the problem does not admit a polynomial kernel unless the polynomial hierarchy collapses (specifically, unless \nka)~\cite{DLS14}.

As the goal in parameterized algorithms is to eventually solve the given instance of a problem, the application of a (classical) kernelization algorithm is typically followed by an exact or approximation algorithm that finds a solution for the reduced instance. However, the current definition of kernels provide no insight into how this solution relates to a solution for the original instance and the basic framework is not amenable to be a precursor to approximation algorithms or heuristics. Recently, Lokshtanov et al.~\cite{LPRS16} proposed a new framework, referred to as {\em lossy kernelization}, that is a bit less stringent than the notion of polynomial kernels and combines well with approximation algorithms and heuristics. The key new definition in this framework is that of an {\em $\alpha$-approximate kernelization}. The precise definitions are deferred to Section \ref{sec:prelims}. Informally, an $\alpha$-approximate polynomial kernelization (lossy kernelization) is a polynomial time preprocessing algorithm that takes as input an instance of a parameterized problem and outputs another instance of the same problem whose size is bounded by a polynomial function of the parameter of the original instance. Additionally, for every $c \geq 1$, a $c$-approximate solution for the reduced instance can be turned into a $(\alpha c)$-approximate solution for the original instance in polynomial time. The authors of \cite{LPRS16} exhibit lossy polynomial kernels for several problems that do not admit classical polynomial kernels including \cvc\ parameterized by the solution size. In this paper, we extend this lossy kernel for \cvc\ to parameters that are more natural and functions of the input, and in some cases, smaller than the solution size.

In the initial work on parameterized complexity, the parameter was almost always the solution size (with treewidth being a notable exception).  A recent trend is to study the complexity of the given problem with respect to structural parameters that are more likely to be small in practice. Also, once a problem is shown to be \FPT\ or to have a polynomial sized kernel by a parameterization, it is natural to ask whether the problem is \FPT\ (and admits a polynomial kernel) when parameterized by a smaller parameter. Similarly, once a problem is shown to be \W-hard by a parameterization, it is natural to ask whether the problem is \FPT\ when parameterized by a larger parameter. Structural parameterizations of {\sc Vertex Cover} \cite{FJR13,BJ13}, {\sc Feedback Vertex Set}~\cite{JRV14} and {\sc Graph Coloring}~\cite{JK13} have been explored extensively. We refer to \cite{FJR13} and \cite{jansen-thesis} for a detailed introduction to the whole program. A parameter that has gained significant attention recently is the size of a {\em modulator} to a family of graphs. Let $\mathcal{F}$ denote a hereditary graph class (which is closed under induced subgraphs) on which \textsc{Vertex Cover} is polynomial-time solvable. Suppose $S$ is a set of vertices (called a $\mathcal{F}$-modulator) such that $G-S \in \mathcal{F}$. Then, \textsc{Vertex Cover} is \FPT\ when parameterized by the size of $|S|$. Following is an easy $\OO^*(2^{|S|})$ algorithm: guess the intersection of the required solution with the modulator and solve the problem on the remaining graph in polynomial time. In contrast, a similar generic result does not seem easy for \cvc\ and a study of such structural parameterizations for \cvc\ is another goal of this paper.

\smallskip
\noindent 
{\bf Our Results. }The starting point of our study is to extend the lossy kernelization known for \cvc\ parameterized by solution size to \cvc\ parameterized by a smaller parameter. A {\em split graph} is a graph whose vertex set can be partitioned into a clique and an independent set. Split graphs can be recognized in polynomial time and such a partition can be obtained in the process~\cite{golumbic}. A {\em split deletion set} is a set of vertices whose deletion results in a split graph. \cvc\ has an easy polynomial time algorithm in split graphs, and it is easy to verify that the size of minimum connected vertex cover is at least the size of the minimum vertex cover which is at least the size of the minimum split deletion set. We show that
\begin{itemize}
\item
\cvc\ parameterized by the size $k$ of a split deletion set is \FPT\ via an $\OO^*(3^k)$ algorithm and that the $\alpha$-approximate kernel for \cvc\ parameterized by solution size can be extended in a non-trivial way to an $\alpha$-approximate kernel with $\mathcal{O}(k+(2k+\lceil \frac{2\alpha-1}{\alpha-1} \rceil)^{\lceil \frac{2\alpha-1}{\alpha-1} \rceil})$ vertices for every $\alpha > 1$. This adds to the small list of problems for which lossy kernels with respect to structural parameterizations are known. 
\item
For the special case when the split deletion set is a {\em clique deletion set} (a set of vertices whose deletion results in a complete graph), the algorithm runs in $\OO^*(2^k)$ time and the lossy kernel has $\OO(k + \lceil \frac{2\alpha-1}{\alpha-1} \rceil)$ vertices (linear size). 
\end{itemize}
Then, we consider \cvc\ parameterized by the size of a {\em clique cover}. A {\em clique cover} of a graph is a partition of its vertex set such that each part induces a clique. We show that
\begin{itemize}
\item \cvc\ parameterized by the size $k$ of a clique cover is \W[1]-hard but admits an $\alpha$-approximate kernel with $\mathcal{O}(k \lceil \frac{2\alpha-1}{\alpha-1} \rceil)$ vertices for every $\alpha > 1$. This is one of the few problems that are not \FPT\ but admit a lossy kernel.
\end{itemize}
Then, we consider a parameter related (structurally) to clique cover size, namely, the size of a {\em cluster deletion set} and show that \cvc\ is \FPT\ with respect to this parameter. A {\em cluster graph} is a graph in which every component is a complete graph and a cluster deletion set is a set of vertices whose deletion results in a cluster graph. We show that
\begin{itemize}
\item \cvc\ parameterized by the size $k$ of a cluster deletion set is \FPT\ via an $\OO^*(4^k)$ algorithm and admits an $\alpha$-approximate kernel with $\OO(k^2+\lceil \frac{2\alpha-1}{\alpha-1} \rceil \cdot \frac{k}{\alpha-1}+ \lceil \frac{\alpha}{\alpha-1} \rceil \cdot k^{\lceil \frac{\alpha}{\alpha-1} \rceil})$ vertices for every $\alpha > 1$.
\item For the special case when the cluster deletion set is a degree-$1$ modulator (a subset of vertices whose deletion results in a graph with maximum degree $1$), the algorithm runs in $\OO^*(3^k)$ time and the lossy kernel has $\OO(k^2+\frac{k}{\alpha-1}+ \lceil \frac{\alpha}{\alpha-1} \rceil \cdot k^{\lceil \frac{\alpha}{\alpha-1} \rceil})$ vertices.
\end{itemize}
Finally, we consider \cvc\ parameterized by the size $k$ of a {\em chordal deletion set}. A {\em chordal graph} is a graph in which every cycle of length at least $4$ has a chord -- an edge joining non-adjacent vertices of the cycle. A chordal deletion set of a graph is a subset of vertices whose deletion results in a chordal graph. We show that \cvc\ parameterized by $k$ is \FPT\ using the known algorithm for \cvc\ on bounded treewidth graphs. A summary of these results is presented in Figure \ref{fig:parameter-hierarchy}.  
\begin{figure}[ht]
\centering
	\includegraphics[scale=0.35]{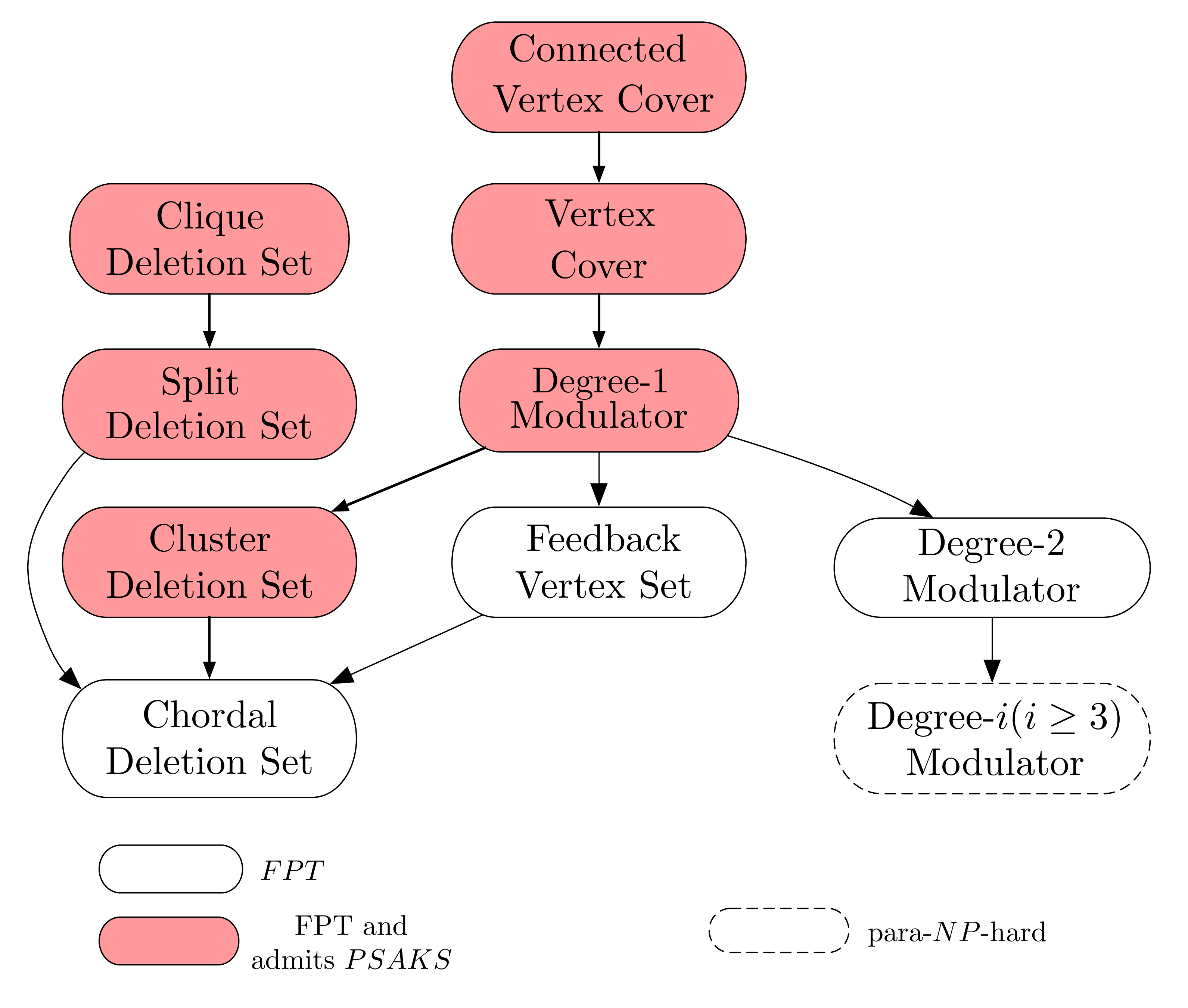}
	\caption{Ecology of Parameters for \cvc. The parameter values are the minimum possible for a given graph. An arrow from parameter $x$ to parameter $y$ means $y \leq x$.}
	\label{fig:parameter-hierarchy}
\end{figure}

\smallskip
\noindent 
{\bf Techniques. }Our FPT algorithms involve a reduction to the \textsc{Steiner Tree} problem on bipartite graphs. Given a graph $G$, an integer $p$ and a set $T$ (called {\em terminals}) of vertices of $G$, the \textsc{Steiner Tree} problem is the task of determining whether $G$ contains a tree (called {\em Steiner tree}) on at most $p$ vertices that contains $T$. A {\em bipartite graph} is a graph whose vertex set can be partitioned into 2 independent sets. Such a partition is called a {\em bipartition}. We use the following result known for \textsc{Steiner Tree} on bipartite graphs.

\begin{lemma}{\em \cite{Cygan12swat,steiner-best}}
\label{bip-st}
Given a connected bipartite graph $G$ with bipartition $(P,Q)$, there is an algorithm running in $\mathcal{O}^*(2^{|Q|})$ time that computes a minimum set $X$ of vertices of $G$ such that $Q \subseteq X$ and $G[X]$ is connected. 
\end{lemma}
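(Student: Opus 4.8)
The plan is to recognize this as the \textsc{Steiner Tree} problem with terminal set $Q$ and solve it by the inclusion--exclusion / branching-walk method for \textsc{Steiner Tree} parameterized by the number of terminals.

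First I would record the reformulation. Any $X$ with $Q\subseteq X$ and $G[X]$ connected has the form $X=Q\cup P'$ with $P'\subseteq P$; a spanning tree of $G[X]$ is then a Steiner tree for $Q$ of size $|X|$, and conversely the vertex set of any Steiner tree for $Q$ is a connected set containing $Q$. Hence $\min\{|X|: Q\subseteq X,\ G[X]\text{ connected}\}$ equals the minimum number of vertices of a Steiner tree for $(G,Q)$, and it suffices to compute such a minimum Steiner tree and output its vertex set.

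For the core decision step I would, for a target size $j$, decide whether $G$ has a Steiner tree on at most $j$ vertices containing $Q$, in $\mathcal{O}^*(2^{|Q|})$ time. Fix a root terminal $r\in Q$ and encode candidate trees as \emph{branching walks} from $r$: ordered rooted trees $T$ equipped with a map $\phi$ to $V(G)$ sending the root to $r$ and each edge of $T$ to an edge of $G$; the \emph{length} of $(T,\phi)$ is $|E(T)|$ and its image $\phi(V(T))$ is always a connected subgraph of size at most $\mathrm{length}+1$. For a fixed graph $H$ and length $\ell$, the number $b_\ell(v,H)$ of branching walks of length $\ell$ from $v$ in $H$ obeys a simple polynomial-time convolution recurrence in $v$ and $\ell$. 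Letting $N_j$ be the number of branching walks of length $j-1$ from $r$ whose image includes all of $Q$, inclusion--exclusion over the set $W$ of terminals a walk is forbidden to touch gives
\[
N_j \;=\; \sum_{W\subseteq Q\setminus\{r\}} (-1)^{|W|}\, b_{\,j-1}\bigl(r,\, G-W\bigr).
\]
Each term $b_{\,j-1}(r,G-W)$ is a nonnegative integer, so $N_j$ is itself a count and is positive exactly when some branching walk of length $j-1$ reaches every terminal; the image of such a walk is a connected subgraph of size $\le j$ containing $Q$, while a minimum Steiner tree of size $s^\ast$ is itself a branching walk of length $s^\ast-1$. Thus the minimum Steiner tree size is the least $j$ with $N_j>0$; scanning $j=|Q|,|Q|+1,\dots$ and evaluating the $2^{|Q|-1}$ terms (each in $n^{\mathcal{O}(1)}$ time) gives the optimum value in $\mathcal{O}^*(2^{|Q|})$ time. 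To output the set $X$ rather than just its size, I would use self-reduction: with the optimum value $\mathrm{opt}$ in hand, scan the vertices of $P$ and delete $v$ whenever $(G-v,Q)$ still has optimum $\mathrm{opt}$; the surviving vertices together with $Q$ form a minimum $X$. This costs an extra factor $n$, absorbed into $\mathcal{O}^*(2^{|Q|})$.

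The step I expect to be the main obstacle is setting up the branching-walk count so that both requirements hold simultaneously: the recurrence for $b_\ell$ must be genuinely polynomial time (which forces \emph{ordered} rooted trees and careful bookkeeping at the root), and the alternating sum must count walks reaching all of $Q$ with no spurious cancellation --- which is precisely where the nonnegativity of each $b_{\,j-1}(r,G-W)$, and the fact that a short branching walk cannot hit many distinct vertices, are used. (Alternatively, one may simply invoke the known $\mathcal{O}^*(2^{t})$-time algorithm for \textsc{Steiner Tree} with $t$ terminals with terminal set $Q$, which is the route taken by the cited references.)
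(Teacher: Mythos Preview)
The paper does not prove this lemma at all: it is stated as a known result with citations to \cite{Cygan12swat,steiner-best} and used as a black box. Your proposal is therefore strictly more than what the paper provides; you correctly recognize the problem as \textsc{Steiner Tree} with terminal set $Q$ and sketch the inclusion--exclusion/branching-walk algorithm of Nederlof (which is precisely the content of the reference \cite{steiner-best}), together with a standard self-reduction to recover the set $X$ itself. Your final parenthetical remark --- that one may simply invoke the known $\mathcal{O}^*(2^{t})$ algorithm for \textsc{Steiner Tree} with $t$ terminals --- is exactly the route the paper takes.
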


Our lower bound results are based on the hardness results known for the \textsc{Set Cover} problem. Given a family $\mathcal{F}$ of subsets of a universe $U$ and a positive integer $\ell$, the \textsc{Set Cover} problem is the task of determining whether there is a subfamily $\mathcal{F}' \subseteq \mathcal{F}$ of size at most $\ell$ such that $\bigcup_{X \in \mathcal{F}'} X=U$. \textsc{Set Cover} can be solved in $\mathcal{O}^*(2^{|U|})$ time by a dynamic programming routine \cite{exactFK} and the {\em Set Cover Conjecture} states that no $\mathcal{O}^*((2-\epsilon)^{|U|})$ time algorithm exists for any $\epsilon >0$~\cite{CDLMNOPSW16}. Our lower bound results are based on the following result that is a consequence of this conjecture.

\begin{lemma}{\em \cite{CDLMNOPSW16}}
\label{cvc-lb}
\cvc\ parameterized by the solution size $\ell$ has no $O^*((2-\epsilon)^{\ell})$ time algorithm under the Set Cover Conjecture. 
\end{lemma}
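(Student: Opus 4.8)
The plan is to prove the lemma by a polynomial‑time many‑one reduction from \textsc{Set Cover} (parameterized by $|U|$) to \cvc\ (parameterized by solution size), in the style of the fine‑grained lower bound reductions of Cygan et al. Given a \textsc{Set Cover} instance $(U,\mathcal{F},t)$ with $U=\{u_1,\dots,u_n\}$ and (without loss of generality) every element lying in at least one set, I would first build a graph $G$ as follows: introduce one vertex $u_i$ for each element and attach a pendant $u_i'$ to it; introduce one vertex $s_F$ for each $F\in\mathcal{F}$; introduce a ``connector'' vertex $c$ with a pendant $c'$; and add the edges $cs_F$ for every $F\in\mathcal{F}$ together with the edges $s_Fu_i$ whenever $u_i\in F$. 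Set $\ell=n+1+t$. Since every degree‑one vertex forces its unique neighbour into any connected vertex cover, every minimum connected vertex cover of $G$ must contain all of $u_1,\dots,u_n$ and $c$.

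Next I would establish the exact correspondence. Any connected vertex cover $S$ of $G$ contains $\{c,u_1,\dots,u_n\}$, and the only way to connect $u_i$ to $c$ inside $G[S]$ is through some $s_F\in S$ with $u_i\in F$; hence $\mathcal{F}_S=\{F:s_F\in S\}$ covers $U$, so $|S|\ge n+1+t^\star$, where $t^\star$ is the minimum set cover size. Conversely, if $\mathcal{F}'\subseteq\mathcal{F}$ covers $U$ then $\{c,u_1,\dots,u_n\}\cup\{s_F:F\in\mathcal{F}'\}$ is a connected vertex cover of $G$ of size $n+1+|\mathcal{F}'|$ (the pendant edges and the $cs_F$ and $s_Fu_i$ edges are all covered, and $c$ together with the chosen $s_F$'s connects everything). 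Thus the minimum connected vertex cover of $G$ has size exactly $n+1+t^\star$.

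The step I expect to be the main obstacle is that this alone only rules out $\OO^*((2-\epsilon)^{\ell})$ for $\ell$ comparable to $n$, whereas here $t^\star$ can be as large as $n$, so $\ell$ may be $\approx 2n$ and $(2-\epsilon)^{2n}$ is not below $2^n$. To close this gap I would first fix the desired $\epsilon>0$, pick a constant $p=p(\epsilon)$ with $(2-\epsilon)^{1+1/p}<2$, and then, before applying the construction, replace $\mathcal{F}$ by $\mathcal{F}^{(p)}$, the family of all unions of $p$ members of $\mathcal{F}$ (repetitions allowed). A minimum cover of $(U,\mathcal{F})$ of size $t^\star$ corresponds exactly to a minimum cover of $(U,\mathcal{F}^{(p)})$ of size $\lceil t^\star/p\rceil$ (the ``expand a set of $q<\lceil t^\star/p\rceil$ super‑sets into fewer than $t^\star$ original sets'' argument, using that a minimal cover has size at most $n$ by a privacy argument), and $|\mathcal{F}^{(p)}|\le|\mathcal{F}|^p$ is polynomial in the input size since $p$ is a constant. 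Running the construction on $(U,\mathcal{F}^{(p)})$ therefore yields, using $t^\star\le n$, a \cvc\ instance with $\ell\le n+1+\lceil n/p\rceil\le n(1+1/p)+2$, in polynomial time.

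Finally I would assemble the contradiction: an $\OO^*((2-\epsilon)^{\ell})$ algorithm for \cvc, composed with this reduction, would decide \textsc{Set Cover} in time $\OO^*((2-\epsilon)^{n(1+1/p)+2})=\OO^*\big(((2-\epsilon)^{1+1/p})^n\big)=\OO^*((2-\epsilon')^n)$ for some $\epsilon'>0$ by the choice of $p$, contradicting the Set Cover Conjecture; as is standard for such fine‑grained lower bounds, the reduction is allowed to depend on $\epsilon$ through the constant $p$. The only delicate points are the identity ``minimum connected vertex cover $=n+1+t^\star$'', which rests on the pendant‑forcing argument and the ``connect each $u_i$ to $c$ through a chosen set'' observation, and the verification that passing to $\mathcal{F}^{(p)}$ divides the optimum cover size by $p$ with rounding up, exactly.
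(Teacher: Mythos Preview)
The paper does not actually prove this lemma: it is stated with the citation \cite{CDLMNOPSW16} and used as a black box, with no proof given in the paper itself. So there is no ``paper's own proof'' to compare against here.

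That said, your reconstruction is correct and is essentially the argument from the cited reference. The bipartite incidence graph with pendants on the element vertices and on the connector vertex $c$ forces $\{c,u_1,\dots,u_n\}$ into every connected vertex cover, and connectivity forces at least one set vertex covering each element, giving the exact correspondence $\mathrm{opt}_{\mathrm{CVC}}(G)=n+1+t^\star$. The $p$-fold union preprocessing $\mathcal{F}\mapsto\mathcal{F}^{(p)}$ is precisely the device used in \cite{CDLMNOPSW16} to push the solution size down to $(1+1/p)n+O(1)$ so that a hypothetical $\OO^*((2-\epsilon)^\ell)$ algorithm yields an $\OO^*((2-\epsilon')^{|U|})$ algorithm for \textsc{Set Cover}. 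Your verification that the optimum over $\mathcal{F}^{(p)}$ is exactly $\lceil t^\star/p\rceil$, together with the bound $t^\star\le n$, is what makes the arithmetic close. The only cosmetic point is that you should state explicitly that a \emph{minimum} connected vertex cover contains no pendant vertex (easy, since a pendant is a leaf of $G[S]$ whose neighbour is already present), so that $|S|=n+1+|\{F:s_F\in S\}|$ rather than merely $|S|\ge n+1+t^\star$; your backward direction already implicitly uses this.
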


For each of the parameterized problems considered in this paper, we assume that the modulator (whose size is the parameter) under consideration is given as part of the input. In most cases, this assumption is reasonable. For instance, there is an algorithm by Cygan and Pilipczuk~\cite{CP13} that runs in $\OO^*(1.2732^{k+o(k)})$ time and returns a split deletion set of size at most $k$ (or decides that no such set exists). In the case of cluster deletion set size as parameter, we use the  algorithm by Boral et al.~\cite{BCKP16} that runs in $\OO^*(1.9102^k)$ time and returns a cluster deletion set of size at most $k$, if one exists. Also, a degree-$1$ modulator $S$ of size at most $k$, if one exists, can be obtained in $\OO^*(1.882^k)$ time using the algorithm by Wu~\cite{Wu2015cocoon}. Finally, there is an algorithm by Marx~\cite{Marx10algo} that runs in $\OO^*(2^{\OO(k\log k)})$ time and either outputs a chordal deletion set of size $k$ or decides that no such set exists. 

\section{Preliminaries} 
\label{sec:prelims}
We refer to \cite{Diestel12book} for graph theoretic terms and notation that are not explicitly defined here. We use $[r]$ to denote the set $\{1,2,\ldots,r\}$. Given a finite set $A$, we use ${{A}\choose{k}}$ to denote the collection of subsets of $A$ containing exactly $k$ elements and we use ${{A}\choose{\leq k}}$ to denote the collection of subsets of $A$ containing at most $k$ elements.

For a graph $G$, $V(G)$ and $E(G)$ denote the set of vertices and edges respectively. Two vertices $u, v$ are said to be \emph{adjacent} if there is an edge $\{u,v\}$ in the graph. The neighbourhood of a vertex $v$, denoted by $N_G(v)$, is the set of vertices adjacent to $v$ and its degree $d_G(v)$ is $|N_G(v)|$. The subscript in the notation for neighbourhood and degree is omitted if the graph under consideration is clear. For a set $S \subseteq V(G)$, $G - S$ denotes the graph obtained by deleting $S$ from $G$ and $G[S]$ denotes the subgraph of $G$ induced by set $S$. The {\em contraction} operation of an edge $e =uv$ in $G$ results in the deletion of $u$ and $v$ and the addition of a new vertex $w$ adjacent to vertices that were adjacent to either $u$ or $v$. Any parallel edges added in the process are deleted so that the graph remains simple. This operation is extended to a subset of edges and the resultant graph is oblivious to the contraction sequence. A {\em path} is a sequence of distinct vertices where every consecutive pair of vertices are adjacent. A set of pairwise non-adjacent vertices is called as an {\em independent set} and a set of pairwise adjacent vertices is called as a {\em clique}. A {\em complete graph} is a graph whose vertex set is a clique. The number of components of a graph $G$ is denoted by $\comp(G)$ and by convention if $G$ is connected then $\comp(G)$ is $1$. Two non-adjacent vertices $u$ and $v$ are called {\em false twins} if $N(u)=N(v)$.

A {\em tree-decomposition} of a graph $G$ is a pair $(\mathbb{T},\mathcal{ X}=\{X_{t}\}_{t\in V({\mathbb T})})$ such that $\mathbb{T}$ is a tree whose every node $t$ is assigned a vertex subset $X_t$, called a bag, satisfying the following properties.
\begin{itemize}
	\item $\bigcup_{t\in V(\mathbb{T})}{X_t}=V(G)$.
	\item For every edge $\{x,y\}\in E(G)$ there is a $t\in V(\mathbb{T})$ such that  $\{x,y\}\subseteq X_{t}$.
	\item For every  vertex $v\in V(G)$ the subgraph of $\mathbb{T}$ induced by the set  $\{t\mid v\in X_{t}\}$ is connected.
\end{itemize}
\noindent The {\em width} of a tree decomposition is $\max_{t\in V(\mathbb{T})} |X_t| -1$ and the {\em treewidth} of $G$, denoted by $tw(G)$, is the minimum width over all tree decompositions of $G$.


\smallskip
\noindent 
{\bf Lossy Kernelization: }Parameterized complexity terminology and definitions not stated here can be found in \cite{CFKLMPPS15}. We now state terminology and definitions related to lossy kernelization given in \cite{LPRS16}. The key definition in this framework is the notion of a {\em parameterized optimization (maximization / minimization) problem} which is the parameterized analogue of an optimization problem in the theory of approximation algorithms. A {\em parameterized minimization problem} is a computable function $\Pi: \Sigma^* \times \mathbb{N} \times \Sigma^* \mapsto \mathbb{R} \cup \{\pm \infty\}$. The instances of $\Pi$ are pairs $(I,k) \in \Sigma^* \times \mathbb{N}$ and a solution for $(I,k)$ is a string $S \in \Sigma^*$ such that $|S| \leq |I|+k$. The {\em value} of a solution $S$ is $\Pi(I,k,S)$. The {\em optimum value} of $(I,k)$ is $\textsc{OPT}_{\Pi}(I, k)= \underset{S \in \Sigma^*,\, |S| \leq |I|+k}\min \Pi(I,k,S)$, and an {\em optimum solution} for $(I,k)$ is a solution $S$ such that $\Pi(I,k,S)=\textsc{OPT}_{\Pi}(I, k)$.  A \emph{parameterized maximization problem} is defined in a similar way. We will omit the subscript $\Pi$ in the notation for optimum value if the problem under consideration is clear from context.

Next, we define the notion of a {\em strict $\alpha$-approximate polynomial-time preprocessing algorithm} for $\Pi$. It is defined as a pair of polynomial-time algorithms, called the {\em reduction algorithm} and the {\em solution lifting algorithm}, that satisfy the following properties.
\begin{itemize}
\item Given an instance $(I,k)$ of $\Pi$, the reduction algorithm computes an instance $(I',k')$ of $\Pi$. 
\item Given the instances $(I,k)$ and $(I',k')$ of $\Pi$, and a solution $S'$ to $(I',k')$, the solution lifting algorithm computes a solution $S$ to $(I,k)$ such that $\frac{\Pi(I,k,S)}{\textsc{OPT}(I,k)} \leq  \max \{ \frac{\Pi(I',k',S')}{\textsc{OPT}(I',k')} , \alpha \}$.
\end{itemize}
\noindent A {\em reduction rule} is the execution of the reduction algorithm on an instance, and we say that it is applicable on an instance if the output instance is different from the input instance. An {\em $\alpha$-approximate kernelization (or $\alpha$-approximate kernel)} for $\Pi$ is an $\alpha$-approximate polynomial-time preprocessing algorithm such that the size of the output instance is upper bounded by a computable function $g:\mathbb{N}\to\mathbb{N}$ of $k$. A reduction rule is said to be {\em $\alpha$-safe} for $\Pi$ if there is a solution lifting algorithm, such that the rule together with this algorithm constitute a strict $\alpha$-approximate polynomial-time preprocessing algorithm for $\Pi$. A reduction rule is {\em safe} if it is $1$-safe. Observe that this definition is more strict that the definition of safeness in classical kernelization. A {\em polynomial-size approximate kernelization scheme (PSAKS)} for $\Pi$ is a family of $\alpha$-approximate polynomial kernelization algorithms for each $\alpha >1$. Note that, the size of an output instance of a PSAKS, when run on $(I,k)$ with approximation parameter $\alpha$, must be upper bounded by $f(\alpha) k^{g(\alpha)}$ for some functions $f$ and $g$ independent of $|I|$ and $k$. A PSAKS is said to be {\em time efficient} if the reduction algorithm and the solution lifting algorithm run in $f(\alpha) |I|^c$, for some computable function $f$. 
\noindent We encourage the reader to see ~\cite{LPRS16} for a more comprehensive discussion of these ideas and definitions.

\section{Connected Vertex Cover parameterized by Split Deletion Set}
In this section, we describe an \FPT\ algorithm and a lossy polynomial kernel for \cvc\ parameterized by the size of a split deletion set. Recall that a split deletion set is a set of vertices whose deletion results in a split graph. 

\subsection{An FPT algorithm}
\label{cvc-cliq-del-set}
In this subsection we describe an $\OO^*(3^k)$ time algorithm for \cvc\ parameterized by the size $k$ of a split deletion set. 

\begin{theorem}
\label{split-fpt}
Given a graph $G$, a split deletion set $S$ and a positive integer $\ell$, there is an algorithm that determines whether $G$ has a connected vertex cover of size at most $\ell$ in $\OO^*(3^{|S|})$ time. Moreover, no $O^*((2-\epsilon)^{|S|})$ time algorithm exists for any $\epsilon>0$ for this problem under the Set Cover Conjecture. Further, if $S$ is a clique deletion set, then there is an algorithm that runs in $\OO^*(2^{|S|})$ time. 
\end{theorem}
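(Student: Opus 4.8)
The plan is to reduce \cvc\ parameterized by split deletion set size to \textsc{Steiner Tree} on a bipartite graph and apply \Cref{bip-st}. Let $S$ be the given split deletion set with $|S|=k$, and let $(C,I)$ be a partition of $V(G)\setminus S$ into a clique $C$ and an independent set $I$ (computable in polynomial time). The first step is to guess, by trying all $3^{k}$ functions $S\to\{\text{in},\text{out}\}$... more precisely all partitions of $S$ into $(S_{\text{in}},S_{\text{out}})$ where $S_{\text{in}}$ is the part of the solution inside $S$ — wait, that is only $2^{k}$ guesses. The $3$ in $3^{k}$ must come from a three-way decision per vertex of $S$; I would instead guess for each $v\in S$ whether $v$ is in the connected vertex cover and, if not, record that its neighbours must be covered. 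Actually the cleanest route giving $3^{k}$: guess the partition of $S$ into $(S_1,S_2,S_3)$ where $S_1$ = vertices of $S$ in the solution, $S_2$ = vertices of $S$ not in the solution but... Let me instead just note the standard trick: after fixing $S_{\text{in}}=S\cap X$ for the sought connected vertex cover $X$, every vertex of $S_{\text{out}}=S\setminus S_{\text{in}}$ must have all its neighbours in $X$; then within $C\cup I$ we must pick a vertex cover, and since $C$ is a clique all but at most one vertex of $C$ must be chosen, and every vertex of $I$ with a neighbour outside $X$ so far is forced — the remaining freedom is small, and the factor $3$ comes from a refined case analysis on $C$ (e.g. which vertex of $C$, if any, is excluded, combined with whether we need an extra connector).

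Concretely, the key steps I would carry out are: (1) Enumerate all $2^{k}$ subsets $S_{\text{in}}\subseteq S$; discard those where $S\setminus S_{\text{in}}$ is not independent or where a vertex of $S\setminus S_{\text{in}}$ has a neighbour in $S\setminus S_{\text{in}}$ (such a guess cannot be extended to a vertex cover). (2) Determine the set $F\subseteq C\cup I$ of vertices forced into $X$: namely every neighbour in $C\cup I$ of a vertex in $S\setminus S_{\text{in}}$, plus, since $C$ is a clique, all of $C$ except possibly one vertex — branch over the at most $|C|+1$ choices of which vertex of $C$ (if any) is left out, and for each such choice compute the forced set of $I$-vertices (an $I$-vertex is forced exactly when it has a neighbour not yet in $X$). (3) At this point the only remaining undetermined vertices are in $I$ and are pure ``connector'' candidates: they are isolated as far as covering is concerned (all their neighbours already lie in the tentative $X$), so they are only useful to make $G[X]$ connected. (4) Build the instance of \textsc{Steiner Tree} / connected-supergraph: contract or delete appropriately so that the graph on which we must find a minimum connected vertex set containing a fixed terminal set is bipartite with one side being the free $I$-vertices together with $S_{\text{in}}$ (small), and invoke \Cref{bip-st} to get, in $\OO^{*}(2^{|Q|})$ time with $|Q|=\OO(k)$, the minimum-size completion to a connected vertex cover extending the current guess. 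Take the best over all guesses and all branches and compare with $\ell$. A careful bookkeeping of the branching — $2^{k}$ for $S_{\text{in}}$ times a polynomial factor for the choice in $C$, with the Steiner step contributing $2^{\OO(k)}$ — collapses to $\OO^{*}(3^{k})$ after merging the $S_{\text{in}}$ guess with the terminal set of the Steiner instance (the terminals are $S_{\text{in}}$ together with the forced vertices, and the bipartite side $Q$ has size at most $k$).

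For the clique deletion set special case, $C\cup I = C$ is itself a clique and $I=\emptyset$, so step (3) leaves no free connector vertices at all: once $S_{\text{in}}$ is guessed and the at-most-one excluded vertex of $C$ is chosen, $X$ is completely determined, and we only check it is a connected vertex cover of size $\le\ell$. Hence the running time drops to $\OO^{*}(2^{k})$ (the $2^{k}$ guesses for $S_{\text{in}}$, times a polynomial factor), with no Steiner-tree call needed, or with a trivially-sized one. The lower bound $\neg\,\OO^{*}((2-\epsilon)^{|S|})$ follows immediately from \Cref{cvc-lb}: in any graph $G$ the whole graph minus a minimum connected vertex cover is an independent set, hence a (very special) split graph, so $|S|\le \ell$; an $\OO^{*}((2-\epsilon)^{|S|})$ algorithm for the split-deletion parameterization would yield an $\OO^{*}((2-\epsilon)^{\ell})$ algorithm for \cvc\ parameterized by solution size, contradicting the Set Cover Conjecture.

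The main obstacle I anticipate is \textbf{step (2)--(3)}: pinning down exactly which vertices of $C\cup I$ are forced and showing that the residual problem really is a connectivity-only (Steiner) problem on a bipartite graph of the right side-size, so that the three sources of exponential blow-up — the $S_{\text{in}}$ guess, the choice inside the clique $C$, and the Steiner dynamic program — can be combined to give $3^{k}$ rather than, say, $4^{k}$ or $2^{k}\cdot 2^{k}$. The delicate points are handling a vertex of $I$ whose only neighbours are in $S\setminus S_{\text{in}}$ (it might be forced or might be free depending on whether those neighbours end up covered by the other endpoint of their edges) and ensuring the final graph fed to \Cref{bip-st} is connected and genuinely bipartite after the contractions; I would deal with the non-connected case by solving each component separately and summing, and with the bipartiteness by contracting $G[C\cap X]$ to a single vertex and noting everything else split into the two independent sides.
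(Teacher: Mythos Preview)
Your overall approach matches the paper's, and your handling of the clique deletion set case and the lower bound is correct. The gap you yourself flag in steps~(2)--(3) is real, and the missing observation is this: the terminal side of the Steiner instance has size at most $|S_{\text{in}}|+1$, not merely $k$. Concretely, once you fix $Z=S_{\text{in}}\subseteq S$ and $Y=X^*\cap C$ (at most $|C|+1$ choices, polynomial), set $T=Y\cup Z$. Since $Y$ is a clique, $G[T]$ has at most $|Z|+1$ components. The forced set $R=(N(C\setminus Y)\cup N(S\setminus Z))\setminus T$ lies entirely in $I$; if some $r\in R$ has no neighbour in $T$ the guess is infeasible, and otherwise adding $R$ to $T$ cannot increase the number of components, so $\comp(G[T\cup R])\le |Z|+1$. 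Now contract each component of $G[T\cup R]$ to a single vertex to obtain the terminal set $T^*$, put $I\setminus R$ on the other side, and apply \Cref{bip-st} in $\OO^*(2^{|T^*|})=\OO^*(2^{|Z|+1})$ time. Summing over the guesses for $Z$ gives $\sum_{i=0}^{k}\binom{k}{i}2^{i+1}=2\cdot 3^{k}$, which is the desired $\OO^*(3^k)$.

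Two related points where your write-up drifts: the small side $Q$ fed to \Cref{bip-st} is \emph{not} ``$S_{\text{in}}$ together with the forced vertices'' (that set can be large), but the set of \emph{components} of the current partial solution; and it is precisely the fact that this component count is bounded by $|S_{\text{in}}|+1$ rather than by $k$ that turns the na\"ive $2^k\cdot 2^k$ into $3^k$ via the binomial identity. There is no ``three-way decision per vertex of $S$'' and no need for a partition $(S_1,S_2,S_3)$.
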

\begin{proof}
Let $H$ denote the split graph $G-S$ whose vertex set can be partitioned into a clique $C$ and an independent set $I$. Let $|S|=k$ and let $X^*$ be a connected vertex cover of $G$ of size at most $\ell$ (if one exists). We show that once we guess the intersection of $X^*$ with $S$ and $C$, the problem reduces to a version of the Steiner tree problem with at most $|S|$ terminals. As $X^*$ can afford to exclude at most one vertex from $C$, there are at most $|C|+1$ choices for $X^* \cap C$. Also, there are at most $2^{k}$ choices for $X^* \cap S$. Consider one such choice $(Y,Z)$ where $Y=X^* \cap C$ and $Z=X^* \cap S$. Let $T$ denote the set $Y \cup Z$. We will extend $T$ into a connected vertex cover of $G$. 
If $(C \setminus Y) \cup (S \setminus Z)$ is not an independent set, no such connected vertex cover exists and we skip to the next choice of $(Y,Z)$. 
Let us first consider the case when $I=\emptyset$. That is, $S$ is a clique deletion set of $G$. In this case, as we have already
made our choice in $S$ and $C$, and so if $G[T]$ is not connected, we can skip to the next choice of $(Y,Z)$. If none of the choices leads to the required solution, we declare that $G$ has no connected vertex cover of size at most $\ell$. The overall running time of the algorithm is $\mathcal{O}^*(2^k)$.

When $I\neq\emptyset$, observe that $G[T]$ has at most $|Z|+1$ components. Let $R$ denote the set $(N(C \setminus Y) \cup N(S \setminus Z)) \setminus (Y \cup Z)$. Clearly, $R \subseteq I$ and has to be included in the solution. See Figure \ref{fig:svd}. 
If there is a vertex $r \in R$ that has no neighbours in $T$, then $r$ cannot be connected to $T$ by vertices from $I$ as $R \subseteq I$. Therefore, if there is such a vertex $r$, then we skip to the next choice of $(Y,Z)$. Otherwise, by adding $R$ to $T$, $\comp(G[T])$ cannot increase in this process. The problem now reduces to finding a set $J \subseteq I \setminus R$ such that $G[T \cup R \cup J]$ is connected.  

\begin{figure}[h]
\centering
	\includegraphics[scale=0.35]{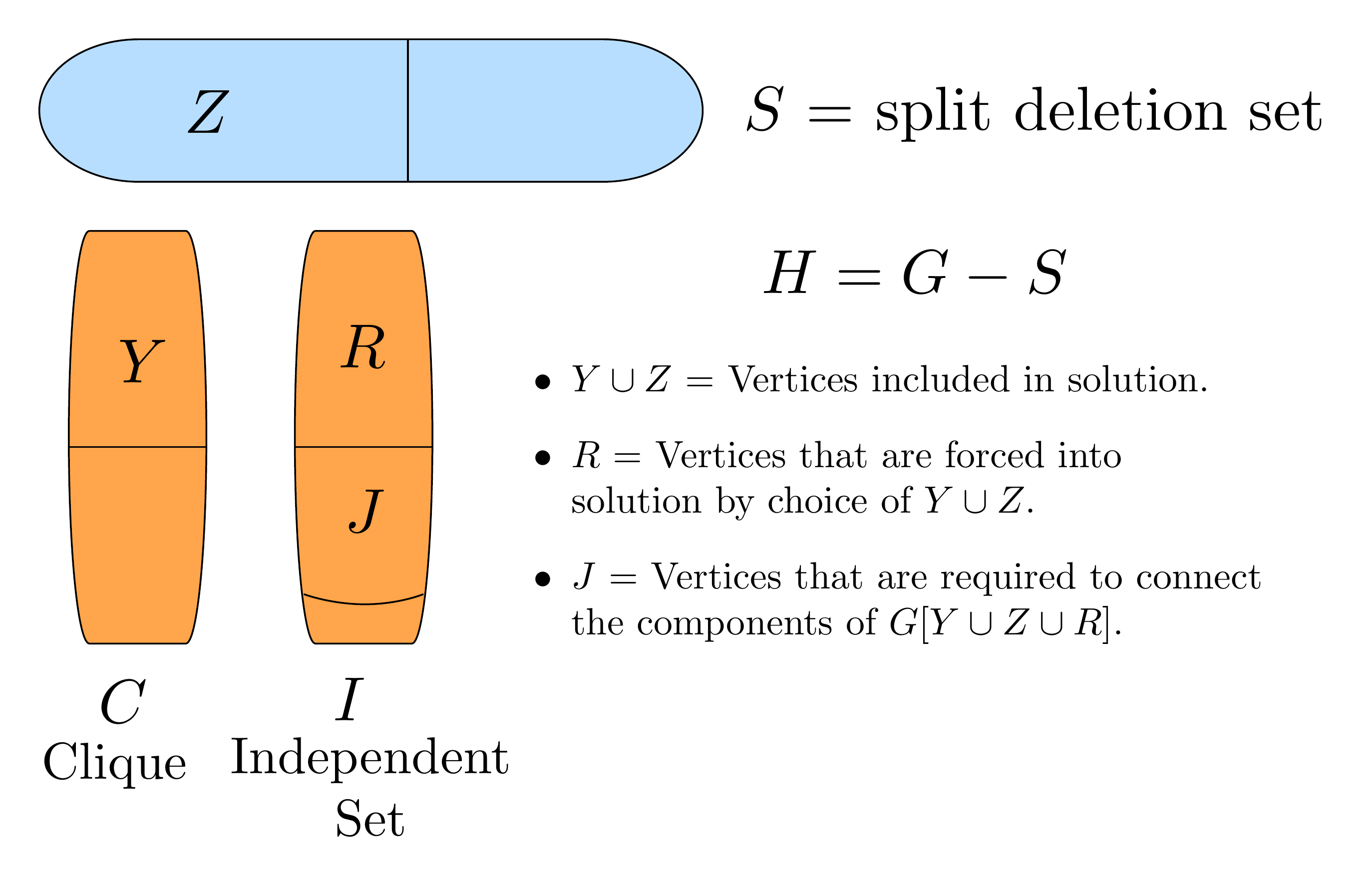}
	\caption{Connected vertex cover parameterized by split deletion set.}
	\label{fig:svd}
\end{figure}

Now consider the bipartite graph $\widehat{G}$ with bipartition $(T^*, I \setminus R)$ where each vertex of $T^*$ corresponds to a component of $G[T \cup R]$. We draw an edge between a vertex $p \in I \setminus R$ and a vertex $c$ in $T^*$ if there is an edge in $G$ between $p$ and a vertex in the component corresponding to $c$. The task now is to find a minimum Steiner tree of $\widehat{G}$ with terminal set $T^*$ which can be done in $\mathcal{O}^*(2^{|T^*|})$ time using Lemma \ref{bip-st}. Therefore, the overall running time of the algorithm is $\sum_{i=0}^{k} {k \choose i} 2^i $ where the first term in the product denotes the number of choices for $X^* \cap S$ where $|X^* \cap S|=i$ and the second term is the time taken to find a minimum Steiner tree for an instance with at most $i$ terminals. Thus, the algorithm runs in $O^*(3^k)$ time. As an edgeless graph is a split graph, we have that the size of a minimum connected vertex cover is at least the size of a minimum connected split deletion set. Therefore, the claimed lower bound follows from Lemma \ref{cvc-lb}. 
\end{proof}

In order to obtain a tighter and refined analysis of the running time of the above algorithm, we generalize the following lemma. Here $\mathcal{VC}(H)$ denotes the set of vertex covers of $H$.

\begin{lemma}{\em \cite{Cygan12swat}}
\label{no-of-vc1}
Let $H$ be a connected graph on $h$ vertices. Then, $\underset{C \in \mathcal{VC}(H)}\sum 2^{\comp(H[C])} \leq 3 \cdot 2^{h-1}$.
\end{lemma}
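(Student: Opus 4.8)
The plan is to reinterpret $\sum_{C \in \mathcal{VC}(H)} 2^{\comp(H[C])}$ as a count of certain vertex $3$-colourings of $H$, then use connectivity to pass to a spanning tree, on which the count is immediate.

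First, observe that for a fixed vertex cover $C$ the term $2^{\comp(H[C])}$ is exactly the number of subsets of the set of connected components of $H[C]$, i.e.\ the number of sets $D \subseteq C$ such that $H$ has no edge between $D$ and $C \setminus D$. Hence
\[
\sum_{C \in \mathcal{VC}(H)} 2^{\comp(H[C])} \;=\; \#\bigl\{(C,D)\ :\ C \in \mathcal{VC}(H),\ D \subseteq C,\ H \text{ has no edge between } D \text{ and } C \setminus D\bigr\}.
\]
Setting $A = D$, $B = C \setminus D$ and $E = V(H)\setminus C$, such pairs correspond bijectively to ordered partitions $(A,B,E)$ of $V(H)$ for which $E$ is an independent set (equivalently, $A\cup B$ is a vertex cover) and there is no edge between $A$ and $B$. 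Call such a partition \emph{admissible} and let $\gamma(G)$ denote the number of admissible partitions of a graph $G$. It now suffices to prove $\gamma(H) \le 3\cdot 2^{h-1}$.

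Next, note that deleting edges can only create admissible partitions: if $T$ is a spanning subgraph of $H$ on the same vertex set, then every admissible partition of $H$ is admissible for $T$, so $\gamma(H) \le \gamma(T)$. Since $H$ is connected it has a spanning tree $T$ on its $h$ vertices, so it is enough to show $\gamma(T) = 3\cdot 2^{h-1}$ for every tree $T$. Root $T$ at an arbitrary vertex $r$ and list its vertices so that every non-root vertex follows its parent. Build an admissible partition by first assigning $r$ to one of $A$, $B$, $E$ ($3$ choices), then processing the remaining vertices in order: when a vertex $v$ with parent $p$ is treated, the unique edge of $T$ joining $v$ to an already-assigned vertex is $\{p,v\}$, and admissibility forbids exactly one class for $v$ --- namely $E$ if $p\in E$, $B$ if $p\in A$, and $A$ if $p\in B$ --- leaving exactly $2$ valid choices, independently of all earlier decisions. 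Every admissible partition of $T$ is produced exactly once, so $\gamma(T)=3\cdot 2^{h-1}$, and combining, $\sum_{C\in\mathcal{VC}(H)} 2^{\comp(H[C])}=\gamma(H)\le\gamma(T)=3\cdot 2^{h-1}$.

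The only step that is not purely mechanical is the first one: recognising that $2^{\comp(H[C])}$ packages a choice of a union of components, which lets us re-encode the whole sum as a count of admissible $\{A,B,E\}$-colourings of $V(H)$. After that, the monotonicity observation and the tree count are routine, and as a by-product they show equality holds precisely when $H$ is a tree. (Equivalently, $\gamma(G)$ equals the number of homomorphisms from $G$ into the graph on $\{a,b,e\}$ with loops at $a$ and $b$ and edges $ae$ and $be$; this target graph is $2$-regular, which makes the tree value $3\cdot 2^{h-1}$ transparent.)
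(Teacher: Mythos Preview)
Your proof is correct. Note, however, that the paper does not actually prove this lemma: it is stated with a citation to \cite{Cygan12swat} and used as a black box (the paper only proves the multiplicative extension to disconnected graphs in the subsequent lemma). So there is no ``paper's own proof'' to compare against here.

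That said, your argument is essentially the one underlying the cited result. The reinterpretation of $2^{\comp(H[C])}$ as the number of ways to split $C$ into two parts with no crossing edge, the recoding of the whole sum as the number of admissible $(A,B,E)$-partitions (equivalently, homomorphisms into the fixed three-vertex target graph you describe), the monotonicity under edge deletion, and the exact count $3\cdot 2^{h-1}$ on a spanning tree are all standard steps in Cygan's analysis. Your write-up is clean and self-contained; the only thing worth adding is that the homomorphism viewpoint you mention parenthetically is exactly how one usually phrases it, and it makes the spanning-tree step a one-liner once the target graph is seen to be $2$-regular.
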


We next generalize Lemma \ref{no-of-vc1} to graphs that are not necessarily connected.

\begin{lemma}
\label{no-of-vc}
Let $H$ be a graph on $h$ vertices with $\comp(H)=d$. Then, $\underset{C \in \mathcal{VC}(H)}\sum 2^{\comp(H[C])} \leq 3^d 2^{h-d}$.
\end{lemma}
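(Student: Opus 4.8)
The plan is to reduce the statement about a general graph $H$ to the connected case handled by Lemma~\ref{no-of-vc1}, by exploiting the multiplicativity of both sides over connected components. Let $H_1,\dots,H_d$ be the connected components of $H$, with $H_i$ having $h_i$ vertices so that $\sum_{i=1}^d h_i = h$. The first observation is that any vertex cover $C$ of $H$ is exactly a disjoint union $C = C_1 \cup \dots \cup C_d$ where $C_i = C \cap V(H_i)$ is a vertex cover of $H_i$, and conversely any such choice gives a vertex cover of $H$; moreover $H[C]$ is the disjoint union of the graphs $H_i[C_i]$, so $\comp(H[C]) = \sum_{i=1}^d \comp(H_i[C_i])$. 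Hence $2^{\comp(H[C])} = \prod_{i=1}^d 2^{\comp(H_i[C_i])}$, and the sum over all vertex covers $C$ of $H$ factors as a product of the per-component sums.

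With that factorization in hand, I would write
\[
\sum_{C \in \mathcal{VC}(H)} 2^{\comp(H[C])}
= \prod_{i=1}^{d} \Biggl( \sum_{C_i \in \mathcal{VC}(H_i)} 2^{\comp(H_i[C_i])} \Biggr)
\le \prod_{i=1}^{d} 3 \cdot 2^{h_i - 1}
= 3^d \cdot 2^{\left(\sum_{i=1}^d h_i\right) - d}
= 3^d \, 2^{h-d},
\]
where the inequality applies Lemma~\ref{no-of-vc1} to each connected graph $H_i$. A small caveat: Lemma~\ref{no-of-vc1} is stated for connected graphs, and a component that is a single isolated vertex has $h_i = 1$; there the only vertex cover is the empty set with $\comp(H_i[\emptyset]) = 0$, giving a per-component sum of $1 \le 3 \cdot 2^{0} = 3$, so the bound still holds (and one should check $H$ has no isolated vertices is not needed). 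Thus the edge cases are harmless.

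The only real subtlety, and the step I would be most careful about, is justifying the factorization of the sum over $\mathcal{VC}(H)$ into a product over components. This is a standard counting argument — summing a product over a product set equals the product of the sums — but it relies on the clean bijection between vertex covers of $H$ and tuples of vertex covers of the $H_i$, together with the additivity of the component count of induced subgraphs across a disjoint union. Once that is spelled out, the rest is just substituting Lemma~\ref{no-of-vc1} and simplifying exponents, and the proof is complete.
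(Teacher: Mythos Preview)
Your proof is correct and follows exactly the paper's approach: factor the sum over $\mathcal{VC}(H)$ as a product over the connected components and apply Lemma~\ref{no-of-vc1} to each factor. One small slip in your caveat: for an isolated vertex $v$, both $\emptyset$ and $\{v\}$ are vertex covers, so the per-component sum is $1+2=3$ rather than $1$ --- but the bound $3\cdot 2^{0}=3$ still holds (with equality), and in any case a single vertex is a connected graph, so Lemma~\ref{no-of-vc1} applies to it directly and the extra check is unnecessary.
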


\begin{proof}
Let $H_1,\cdots,H_d$ denote the components of $H$ and let $h_i$ denote the number of vertices in $H_i$. Then, any vertex cover of $H$ is the union of the vertex covers of $H_i$ for $1 \leq i \leq d$. Thus, we have the following bound from Lemma \ref{no-of-vc1}. 
$$\underset{C \in \mathcal{VC}(H)}\sum 2^{\comp(H[C])} = \prod_{i=1}^{d} \underset{C \in \mathcal{VC}(H_i)}\sum 2^{\comp(H_i[C])} \leq  \prod_{i=1}^{d} 3\cdot  2^{h_i-1} = 3^d  \cdot 2^{h-d}$$
\end{proof}
 
\noindent Now, using Lemma \ref{no-of-vc} and Theorem \ref{split-fpt}, we have the following result. 

\begin{corollary}
\label{split-fpt1}
Given a graph $G$, a split deletion set $S$ with $\comp(G[S])=d$ and a positive integer $\ell$, there is an algorithm that determines whether $G$ has a connected vertex cover of size at most $\ell$ in $O^*(3^d 2^{|S|-d})$  time. 
\end{corollary}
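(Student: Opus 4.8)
The plan is to re-run the algorithm of Theorem~\ref{split-fpt} verbatim and merely replace its crude running-time estimate by a charging argument based on Lemma~\ref{no-of-vc}. Recall that the algorithm iterates over all pairs $(Y,Z)$ with $Z = X^* \cap S$ and $Y = X^* \cap C$; for each pair it performs a polynomial amount of bookkeeping and, when $I \neq \emptyset$, one call to the bipartite \textsc{Steiner Tree} routine of Lemma~\ref{bip-st} on an instance whose terminal set $T^*$ has size equal to the number of components of $G[T \cup R]$, where $T = Y \cup Z$. That call costs $\OO^*(2^{|T^*|})$, so the total running time is $\OO^*\!\big(\sum_{(Y,Z)} 2^{|T^*|}\big)$ up to a polynomial factor (for the pairs on which the algorithm aborts already during bookkeeping, the contribution is only polynomial per pair, hence at most $2^{|S|}\cdot n^{\OO(1)} \le 3^d 2^{|S|-d}\cdot n^{\OO(1)}$, which is absorbed).

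The first step is to bound $|T^*|$ in terms of $Z$ alone. As observed in the proof of Theorem~\ref{split-fpt}, adding $R$ to $T$ does not increase the number of components, so $|T^*| = \comp(G[T \cup R]) \le \comp(G[Y \cup Z])$. Since $Y \subseteq C$ and $C$ is a clique, $G[Y]$ is connected (or empty), and therefore $\comp(G[Y \cup Z]) \le \comp(G[Z]) + 1$, giving $2^{|T^*|} \le 2\cdot 2^{\comp(G[Z])}$. Moreover, a pair $(Y,Z)$ reaches the \textsc{Steiner Tree} call only if $S \setminus Z$ is independent in $G$, i.e.\ only if $Z$ is a vertex cover of $G[S]$; and for each such $Z$ there are at most $|C|+1$ choices of $Y$.

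The second step is to sum over the relevant pairs and invoke Lemma~\ref{no-of-vc} applied to the graph $G[S]$, which has $|S|$ vertices and $\comp(G[S]) = d$ components. This yields
$$\sum_{(Y,Z)} 2^{|T^*|} \;\le\; 2(|C|+1)\!\!\sum_{Z \in \mathcal{VC}(G[S])}\!\! 2^{\comp(G[Z])} \;\le\; 2(|C|+1)\cdot 3^d\, 2^{|S|-d},$$
so the whole algorithm runs in $\OO^*(3^d 2^{|S|-d})$ time, as claimed.

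I anticipate no serious obstacle here: the only points requiring a little care are (i) justifying that it suffices to sum $2^{\comp(G[Z])}$ over vertex covers $Z$ of $G[S]$ rather than over all subsets of $S$, which follows from the independence test already present in the algorithm, and (ii) the additive $+1$ from attaching the clique part $Y$, which costs only a constant factor. Everything else is a direct substitution of Lemma~\ref{no-of-vc} for the trivial bound $\sum_{i} {k \choose i} 2^i = 3^k$ used in Theorem~\ref{split-fpt}; indeed when $d = |S|$ (so $S$ is an independent set) the estimate $3^d 2^{|S|-d} = 3^{|S|}$ recovers that theorem, as a sanity check.
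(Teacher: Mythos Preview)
Your proposal is correct and follows essentially the same approach as the paper: bound the cost of the \textsc{Steiner Tree} call for a pair $(Y,Z)$ by $2^{\comp(G[Z])}$ (up to constants), restrict the sum to vertex covers $Z$ of $G[S]$ via the independence test, and apply Lemma~\ref{no-of-vc}. The paper's proof is a two-line version of exactly this argument; your write-up simply makes explicit the $(|C|+1)$ factor from the choice of $Y$, the $+1$ in the component count, and the handling of aborted pairs, all of which the paper absorbs silently into the $\OO^*$ notation.
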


\begin{proof}
The running time of the algorithm presented in Theorem \ref{split-fpt} is upper bounded (ignoring polynomial factors) by $\underset{Z \in \mathcal{VC}(G[S])}\sum 2^{\comp(G[Z])}$ which in turn is upper bounded by $3^d  2^{|S|-d}$ from Lemma \ref{no-of-vc}. Therefore, the claimed bound follows. 
\end{proof}

Using Lemma \ref{cvc-lb}, Corollary \ref{split-fpt1} and the fact that a vertex cover of size $k$ (if one exists) can be obtained in $\OO^*(1.2738^{\ell})$ time~\cite{CKJ01}, we have the following result. 
\begin{corollary}
Given a graph $G$, a vertex cover $S$ with $\comp(G[S])=d$ and a positive integer $\ell$, there is an algorithm that determines whether $G$ has a connected vertex cover of size at most $\ell$ in $O^*(3^d 2^{|S|-d})$ time. Further, no $O^*((2-\epsilon)^{|S|})$ time algorithm exists for this problem for any $\epsilon>0$ under the Set Cover Conjecture. 
\end{corollary}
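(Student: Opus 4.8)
The plan is to obtain both parts of the statement from results already proved, since a vertex cover is a very special split deletion set. For the algorithm, note that if $S$ is a vertex cover of $G$ then $G-S$ is edgeless, and an edgeless graph is a split graph (take the independent set to be all of its vertices and the clique to be empty). Hence $S$ is a split deletion set with $\comp(G[S])=d$, and Corollary~\ref{split-fpt1} applies verbatim and solves \cvc\ on $(G,S,\ell)$ in $O^*(3^d 2^{|S|-d})$ time. If one prefers not to assume that $S$ is part of the input, one first runs the $O^*(1.2738^\ell)$ algorithm of~\cite{CKJ01}: either it produces a vertex cover $S$ with $|S|\le \ell$, after which we proceed as above, or it reports that $G$ has no vertex cover of size at most $\ell$, in which case $G$ certainly has no connected vertex cover of size at most $\ell$.

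For the lower bound, I would reduce from \cvc\ parameterized by the solution size $\ell$, which by Lemma~\ref{cvc-lb} has no $O^*((2-\epsilon)^\ell)$ algorithm under the Set Cover Conjecture. Assume for contradiction an algorithm $\mathcal A$ for the present problem running in $O^*((2-\epsilon)^{|S|})$ time for some $\epsilon>0$. Given an instance $(G,\ell)$ of \cvc\ parameterized by $\ell$, run the $O^*(1.2738^\ell)$ vertex-cover algorithm of~\cite{CKJ01}; if it finds no vertex cover of size at most $\ell$, answer ``no'' (correctly, since every connected vertex cover is a vertex cover), and otherwise obtain a vertex cover $S$ with $|S|\le \ell$ and run $\mathcal A$ on $(G,S,\ell)$, which takes $O^*((2-\epsilon)^{|S|})=O^*((2-\epsilon)^\ell)$ time. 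The overall running time is $O^*(1.2738^\ell+(2-\epsilon)^\ell)=O^*((2-\epsilon')^\ell)$ for some $\epsilon'>0$, contradicting Lemma~\ref{cvc-lb}.

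The argument is almost pure bookkeeping; the only point that needs a moment's care is that the reduction cannot simply take $S$ to be the desired solution (which is unknown), so it must spend $O^*(1.2738^\ell)$ time computing an explicit small vertex cover to serve as the structural parameter fed to $\mathcal A$. No genuine obstacle is expected: Corollary~\ref{split-fpt1} already subsumes the vertex-cover case, and the solution-size lower bound transfers immediately because the vertex cover number of any graph is at most its connected vertex cover number.
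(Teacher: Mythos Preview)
Your proposal is correct and mirrors the paper's own justification: the upper bound follows directly from Corollary~\ref{split-fpt1} because a vertex cover is a split deletion set (the remaining graph is edgeless), and the lower bound comes from Lemma~\ref{cvc-lb} together with the $\OO^*(1.2738^\ell)$ vertex-cover algorithm of~\cite{CKJ01}, exactly as you describe. The paper's proof is just the one-line lead-in invoking these three ingredients, so your write-up is in fact more detailed than the original.
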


\subsection{A Lossy Kernel}
\label{sec:lossy}
It is known that \textsc{Vertex Cover} parameterized by the size of a clique deletion set has no polynomial kernel unless \nka\ \cite{BJK14}. As \textsc{Vertex Cover} reduces to \cvc\ by just adding a universal vertex to the graph, we have the following observation.

\begin{observation}
\label{obs:cvc-clique-no-poly}
\cvc\ parameterized by the size of a clique deletion set has no polynomial kernel unless \nka.
\end{observation}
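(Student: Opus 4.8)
The plan is to obtain the non-existence of a polynomial kernel for \cvc\ parameterized by clique deletion set size as an immediate consequence of the analogous result for \textsc{Vertex Cover}, stated in \cite{BJK14}, via a simple polynomial parameter transformation. The overall strategy is: (1) recall the \textsc{Vertex Cover} hardness, (2) describe the reduction from \textsc{Vertex Cover} to \cvc\ that adds a universal vertex, (3) verify this reduction preserves both the answer and (up to an additive constant) the parameter, so that a polynomial kernel for the target problem would yield one for the source, contradicting \cite{BJK14} unless \nka.

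\begin{proof}
By \cite{BJK14}, \textsc{Vertex Cover} parameterized by the size of a clique deletion set has no polynomial kernel unless \nka. We give a polynomial parameter transformation from this problem to \cvc\ parameterized by the size of a clique deletion set. Given an instance $(G, \ell, S)$ of \textsc{Vertex Cover} where $S$ is a clique deletion set of $G$ of size $k$, construct the graph $G'$ obtained from $G$ by adding a single new vertex $u$ adjacent to every vertex of $G$, and set $\ell' = \ell + 1$ and $S' = S \cup \{u\}$. Clearly $G'$ and $S'$ can be computed in polynomial time, and $|S'| = k + 1$.

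First, $S'$ is a clique deletion set of $G'$: the graph $G' - S' = G' - (S \cup \{u\}) = G - S$ is a complete graph since $S$ is a clique deletion set of $G$. Next, we verify equivalence of the instances. If $X$ is a vertex cover of $G$ of size at most $\ell$, then $X \cup \{u\}$ is a vertex cover of $G'$ of size at most $\ell + 1$; moreover it induces a connected subgraph of $G'$, since $u$ is adjacent to every other vertex of $G'$ and hence to every vertex of $X$, so $G'[X \cup \{u\}]$ is connected. Conversely, suppose $X'$ is a connected vertex cover of $G'$ of size at most $\ell + 1$. If $u \notin X'$, then since $u$ has neighbours in $G'$ and $X'$ covers all edges incident to $u$, we have that $X'$ contains every vertex of $G$, so $G$ itself has at most $\ell + 1 - 0$ vertices; in any case we may instead take $X' := X' \cup \{u\}$ and remove an arbitrary vertex of $G$ from it if needed, but more simply: assume without loss of generality $u \in X'$ (if not, note $N_{G'}(u) = V(G) \subseteq X'$, so $(V(G) \setminus \{w\}) \cup \{u\}$ for any $w$ is a connected vertex cover of $G'$ of size $|V(G)|$, which must then be at most $\ell+1$, and removing $u$ gives a vertex cover of $G$ of size $\le \ell$). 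When $u \in X'$, the set $X' \setminus \{u\}$ is a vertex cover of $G$ (it covers every edge of $G$, as $u$ is incident to no edge of $G$) of size at most $\ell$. Hence $(G, \ell, S)$ is a yes-instance of \textsc{Vertex Cover} if and only if $(G', \ell', S')$ is a yes-instance of \cvc, and the parameter increases only by $1$. Therefore, a polynomial kernel for \cvc\ parameterized by the size of a clique deletion set would yield a polynomial kernel for \textsc{Vertex Cover} parameterized by the size of a clique deletion set, which is impossible unless \nka\ by \cite{BJK14}.
\end{proof}

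The main obstacle is essentially bookkeeping in the converse direction: ensuring that a connected vertex cover of $G'$ not containing the universal vertex $u$ still yields a small vertex cover of $G$. This is handled by observing that if $u$ is excluded then all of its neighbours (that is, all of $V(G)$) must be in the cover, which still forces $|V(G)| \le \ell + 1$, or alternatively one simply swaps in $u$ and swaps out a vertex of $G$ to normalize; either way the argument goes through without difficulty, so the whole proof is a short transformation argument.
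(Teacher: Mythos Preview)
Your proof is correct and follows exactly the same approach as the paper, which simply notes (in the sentence preceding the observation) that adding a universal vertex reduces \textsc{Vertex Cover} to \cvc\ while preserving the clique deletion set parameter up to an additive constant, and then invokes \cite{BJK14}. Your write-up of the converse direction when $u \notin X'$ is a bit convoluted---it suffices to observe that then $V(G) \subseteq X'$, so $|V(G)| \le \ell+1$, and $V(G)\setminus\{w\}$ is a vertex cover of $G$ of size at most $\ell$ for any $w$---but the argument is sound.
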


In this section, we show that \cvc\ parameterized by the size of a split deletion set admits a lossy polynomial kernel. As a consequence, we show that \cvc\ parameterized by the size of a clique deletion set admits a lossy linear kernel. The parameterized minimization problem definition of interest to us is the following. The cases specified in the definition are ordered, i.e., given $(G,S)$, an integer $k$ and a set $T \subseteq V(G)$, $CVC((G, S),k,T)$ is assigned the first value applicable. 

$CVC((G, S),k,T) = \begin{cases} -\infty &\mbox{if } |S| > k \mbox{ or } G-S \mbox{ is not a } \mbox{split graph}\\
\infty &\mbox{if } T \mbox{ is not a connected vertex cover }\\
|T| &\mbox{otherwise } \end{cases}$

\noindent Without loss of generality, we assume that $G$ has no isolated vertices as no minimum connected vertex cover contains them. Given $\alpha>1$, let $d$ be the minimum integer greater than 1 such that $\alpha \geq \frac{d-1}{d-2}$. That is, $d= \lceil \frac{2\alpha-1}{\alpha-1} \rceil$. Let $H$ denote the split graph $G-S$ whose vertex set can be partitioned into a clique $C$ and an independent set $I$. First, we bound the number of vertices in $C$ by applying Reduction Rule~\ref{rule:1}.

\begin{reduction rule}
\label{rule:1}
If $|C| \geq d$, then add a new vertex $u_{C}$ adjacent to $N(C)$ and delete $C$ to get the graph $G'$. The resulting instance is $(G',S)$.
\end{reduction rule}

Observe that $S$ continues to be a split deletion set of $G'$ as a result of applying this rule since $V(G'-S)$ can be partitioned into the clique $\{u_C\}$ and the independent set $I$.

\begin{lemma}
\label{lem:safeness-rule1}
Reduction Rule~\ref{rule:1} is $\alpha$-safe.
\end{lemma}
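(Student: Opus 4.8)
The plan is to exhibit a solution lifting algorithm and verify the $\alpha$-safeness inequality by comparing optimum values on the two instances. First I would observe that contracting the clique $C$ (with $|C|\geq d$) to a single vertex $u_C$ only identifies vertices that a near-optimal connected vertex cover must anyway take almost entirely: since $C$ is a clique, any vertex cover of $G$ omits at most one vertex of $C$, so any connected vertex cover of $G$ contains at least $|C|-1\geq d-1$ vertices of $C$. The solution lifting algorithm, given a solution $T'$ to $(G',S)$, would be: if $u_C\notin T'$, first note $T'$ must then cover all edges incident to $N(C)$ via other vertices, so $N(C)\subseteq T'$; output $T=T'$ if $C=\emptyset$-edges are absent, but in general return $T=(T'\setminus\{u_C\})\cup C$ when $u_C\in T'$, and $T=T'\cup (C\setminus\{c_0\})$ for an arbitrary $c_0\in C$ with $N_G(c_0)\setminus C\subseteq T'$ when $u_C\notin T'$ (such a $c_0$ exists because $N(C)\subseteq T'$). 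In either case $T$ is a connected vertex cover of $G$: every edge inside $C$ is covered (at most one vertex of $C$ is missing), every edge from $C$ to $N(C)$ is covered (either $C\subseteq T$, or $N(C)\subseteq T'\subseteq T$), every other edge is covered because $T\supseteq T'$ (resp.\ $T\supseteq T'\setminus\{u_C\}$ and those edges don't touch $u_C$), and connectivity is inherited because contracting/expanding a clique whose full neighborhood is present preserves connectivity of the induced subgraph.

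Next I would relate the sizes. In the case $u_C\in T'$ we have $|T| = |T'| - 1 + |C|$, and in the case $u_C\notin T'$ we have $|T| = |T'| + |C| - 1$; so in all relevant cases $|T|\leq |T'| + |C| - 1$. Symmetrically, from any connected vertex cover $X$ of $G$ one obtains a connected vertex cover $X'$ of $G'$ of size at most $|X| - (|C|-1) + 1 = |X| - |C| + 2$ by replacing $X\cap C$ with $u_C$ (this is where $|X\cap C|\geq |C|-1$ is used, together with the fact that $N_{G'}(u_C)=N_G(C)$ so coverage and connectivity transfer). Taking $X$ to be an optimum solution of $(G,S)$ gives $\mathrm{OPT}(G',S)\leq \mathrm{OPT}(G,S) - |C| + 2$, and since $\mathrm{OPT}(G,S)\geq |C|-1\geq d-1$, rearranging yields
\begin{equation}
\frac{\mathrm{OPT}(G,S)}{\mathrm{OPT}(G',S)} \;\leq\; \frac{\mathrm{OPT}(G,S)}{\mathrm{OPT}(G,S) - |C| + 2}.
\end{equation}
One then checks that the right-hand side is maximized (over the valid range) when $\mathrm{OPT}(G,S)=|C|-1$, giving the bound $\frac{|C|-1}{|C|-1-|C|+2} \cdot$, i.e.\ value $\frac{|C|-1}{1}$—wait, that is too crude; the correct accounting keeps $T'$ and $T$ linked rather than comparing to $\mathrm{OPT}$ directly, so I would instead argue: $\mathrm{OPT}(G,S) \leq \mathrm{OPT}(G',S) + |C| - 1$ and $\mathrm{OPT}(G',S) \geq \mathrm{OPT}(G,S) - |C| + 2 \geq 1$, hence $\mathrm{OPT}(G',S)\geq d-1$ is false in general; the usable inequality is $\mathrm{OPT}(G',S)\geq \mathrm{OPT}(G,S)-|C|+2$, which with $\mathrm{OPT}(G,S)\ge |C|-1$ gives only $\mathrm{OPT}(G',S)\ge 1$. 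So the clean route is the standard one: show $\frac{|T|}{\mathrm{OPT}(G,S)} \leq \max\{\frac{|T'|}{\mathrm{OPT}(G',S)},\alpha\}$ directly. Using $|T|\leq |T'|+|C|-1$, $\mathrm{OPT}(G,S)\geq \mathrm{OPT}(G',S)+|C|-2$ (obtained by the reverse direction: expand an optimum $X'$ of $G'$; if $u_C\in X'$ replace by all of $C$, gaining $|C|-1$; if $u_C\notin X'$ then $N(C)\subseteq X'$ and we add $C\setminus\{c_0\}$, gaining $|C|-1$), and $\mathrm{OPT}(G,S)\geq |C|-1$, I would split into the two cases $|T'|\leq \alpha\,\mathrm{OPT}(G',S)$ and $|T'| > \alpha\,\mathrm{OPT}(G',S)$, and in each bound $\frac{|T|}{\mathrm{OPT}(G,S)}$ by plugging in these inequalities; the worst case forces $\mathrm{OPT}(G',S)$ to its minimum value $1$ and $|C|=d$, and the inequality $\alpha\geq\frac{d-1}{d-2}$ is exactly what makes $\frac{(\,|T'|+d-1)}{\,|T'|+d-2}\leq\alpha$ hold when $|T'|\ge 1$.

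The main obstacle is this last case analysis: one must be careful that the bound $\mathrm{OPT}(G,S)\geq\mathrm{OPT}(G',S)+|C|-2$ and $\mathrm{OPT}(G,S)\geq |C|-1$ are simultaneously tight only in a degenerate configuration, and verify the ratio never exceeds $\max\{\rho,\alpha\}$ where $\rho=|T'|/\mathrm{OPT}(G',S)$, rather than sloppily comparing $\mathrm{OPT}(G,S)$ to $\mathrm{OPT}(G',S)$ in isolation (which would give a factor of $d-1$, not $\frac{d-1}{d-2}$). Concretely, writing $m=\mathrm{OPT}(G',S)$ and $c=|C|\ge d$, one shows $\frac{|T|}{\mathrm{OPT}(G,S)}\leq \frac{|T'|+c-1}{\max\{m+c-2,\ c-1\}}$ and checks that this is $\leq\max\{|T'|/m,\ \frac{c-1}{c-2}\}\leq\max\{|T'|/m,\alpha\}$, the critical computation being that $x\mapsto\frac{x+c-1}{x+c-2}$ is decreasing so its maximum over $x\ge m\ge 1$ is at $x=1$, namely $\frac{c}{c-1}\leq\frac{d}{d-1}\leq\frac{d-1}{d-2}\leq\alpha$. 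I would also handle the trivial edge cases (if $T'$ is not a connected vertex cover, or $|S|>k$, the value is $\pm\infty$ and lifting is vacuous), and note $G'$ has no isolated vertices if $G$ has none.
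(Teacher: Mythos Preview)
Your approach is the same as the paper's: lift a solution $T'$ of $(G',S)$ to a connected vertex cover $T$ of $G$ with $|T|=|T'|+|C|-1$, show $\textsc{OPT}(G',S)\le\textsc{OPT}(G,S)-|C|+2$, and combine. The paper then finishes in one line using the mediant inequality
\[
\frac{|T'|+(|C|-1)}{\textsc{OPT}(G',S)+(|C|-2)}\;\le\;\max\Bigl\{\frac{|T'|}{\textsc{OPT}(G',S)},\;\frac{|C|-1}{|C|-2}\Bigr\},
\]
and $\frac{|C|-1}{|C|-2}\le\frac{d-1}{d-2}\le\alpha$ since $|C|\ge d$.

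Two concrete problems with your write-up. First, your parenthetical justification for $\textsc{OPT}(G,S)\ge\textsc{OPT}(G',S)+|C|-2$ is backwards: expanding an optimum $X'$ of $G'$ to a connected vertex cover of $G$ of size $|X'|+|C|-1$ proves $\textsc{OPT}(G,S)\le\textsc{OPT}(G',S)+|C|-1$, which is the opposite direction. The inequality you need comes from the \emph{forward} compression you already described earlier (take an optimum $D^*$ of $G$, replace $D^*\cap C$ by $u_C$, use $|D^*\cap C|\ge|C|-1$); that is exactly what the paper does.

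Second, your ``critical computation'' that $x\mapsto\frac{x+c-1}{x+c-2}$ is decreasing and attains $\frac{c}{c-1}$ at $x=1$ is not the right argument and does not yield $\frac{c-1}{c-2}$. You correctly state that the needed bound is $\max\{|T'|/m,\,\tfrac{c-1}{c-2}\}$, but you do not prove it; the mediant inequality above is the missing step, and once you use it the extra monotonicity discussion, the case split on $|T'|\lessgtr\alpha\,\textsc{OPT}(G',S)$, and the bound $\textsc{OPT}(G,S)\ge|C|-1$ are all unnecessary. Strip the false starts and the final paragraph, fix the direction of the $\textsc{OPT}$ comparison, and your proof becomes the paper's.
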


\begin{proof}
Consider a solution $D'$ of the reduced instance. If $u_C \in D'$, then the solution lifting algorithm returns $D=(D' \setminus \{u_C\}) \cup C$ which is a connected vertex cover of $G$ such that $\textsc{CVC}((G, S), k, D)=\textsc{CVC}((G', S), k, D')-1+|C|$. Otherwise, we know that $N(C) \subseteq D'$. Let $x$ be a vertex in $C$ such that $C \setminus \{x\}$ has a neighbour in $V(G') \setminus C$. Then, $D=D'  \cup (C \setminus \{x\})$ is a connected vertex cover of $G$ such that $\textsc{CVC}((G, S), k, D)=\textsc{CVC}((G', S), k, D')+|C|-1$. In any case, we have shown that $\textsc{CVC}((G, S), k, D)=\textsc{CVC}((G', S), k, D')+|C|-1$. Now, consider an optimum solution $D^*$ for the original instance. Clearly, $|D^* \cap C| \geq |C|-1$. Then, $(D^* \setminus C) \cup \{u_C\}$ is a connected vertex cover of $G'$. Hence, $\textsc{OPT}((G', S),k) \leq \textsc{OPT}((G, S),k)-|C|+1 +1$. Combining these bounds, we have $\frac{\textsc{CVC}((G, S), k, D)}{\textsc{OPT}((G, S),k)} \leq \max \Big\{ \frac{\textsc{CVC}((G', S),k, D')}{\textsc{OPT}((G', S),k)}, \frac{|C|-1}{|C|-2} \Big\} \leq \max\Big\{ \frac{\textsc{CVC}((G', S),k, D')}{\textsc{OPT}((G', S),k)}, \alpha\}$ as $|C| \geq d$.
\end{proof}

Observe that the application of Reduction Rule~\ref{rule:1} does not make any vertex isolated in the reduced graph. Further, when $S$ is a clique deletion set and this rule is not applicable, it follows that $G - S$ has at most $d-1$ vertices. This leads to the following result. 

\begin{theorem}
\label{thm:cvc-clique-lossy}
\cvc\ parameterized by the size $k$ of a clique deletion set has a time efficient PSAKS with at most $k + \lceil \frac{2\alpha-1}{\alpha-1} \rceil$ vertices.
\end{theorem}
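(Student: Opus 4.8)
The plan is to obtain this PSAKS directly from Reduction Rule~\ref{rule:1}. The point that makes the clique deletion case easy is that when $S$ is a clique deletion set, the independent-set side of the split partition of $G-S$ is empty, so once the clique part is made small there is nothing left to bound, and a single application of the rule already achieves this.

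First I would record the structural fact that if $S$ is a clique deletion set of $G$, then $H=G-S$ is a complete graph, so in the split partition $(C,I)$ of $H$ we may take $I=\emptyset$ and $C=V(H)$. Now run Reduction Rule~\ref{rule:1}. If $|C|\ge d:=\lceil \frac{2\alpha-1}{\alpha-1}\rceil$, the rule replaces all of $C$ by the single vertex $u_C$ adjacent to $N(C)$; as observed right before the statement, $S$ is still a clique deletion set of the resulting graph $G'$ (since $V(G'-S)=\{u_C\}$ is a clique), and $|S|$ is unchanged, so we keep the parameter $k'=k$. After this application $|V(G'-S)|=1\le d-1$ and the rule is no longer applicable; and if $|C|<d$ at the outset, no application is made. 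In either case the output instance satisfies $|V(G')|=|S|+|V(G'-S)|\le k+(d-1)\le k+\lceil \frac{2\alpha-1}{\alpha-1}\rceil$.

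For the approximation guarantee I would invoke Lemma~\ref{lem:safeness-rule1}: Reduction Rule~\ref{rule:1} is $\alpha$-safe, i.e., together with its solution lifting algorithm it is a strict $\alpha$-approximate polynomial-time preprocessing algorithm. Since we apply the rule at most once, this preprocessing algorithm is exactly ``apply Rule~\ref{rule:1} if applicable'' together with the lifting algorithm of Lemma~\ref{lem:safeness-rule1}, so it is a strict $\alpha$-approximate preprocessing algorithm whose output has at most $k+\lceil \frac{2\alpha-1}{\alpha-1}\rceil$ vertices, that is, an $\alpha$-approximate kernel. Time efficiency is immediate: testing whether $|C|\ge d$, performing the contraction, and running the lifting algorithm all take time polynomial in $|V(G)|$, the only $\alpha$-dependence being in the threshold $d$ and the size bound; in particular the reduction and lifting run in $f(\alpha)\,|I|^{c}$ time for a fixed constant $c$. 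Letting $\alpha$ range over $(1,\infty)$ yields a time-efficient PSAKS.

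The theorem carries essentially no content beyond Lemma~\ref{lem:safeness-rule1} --- all the work is in the $\alpha$-safeness of Reduction Rule~\ref{rule:1} --- and the only things to watch are a few degenerate cases: $C=\emptyset$ (then $G=G[S]$ and $|V(G)|=|S|\le k$), $|C|\le 2$ (harmless, since $d\ge 3$ because $\frac{2\alpha-1}{\alpha-1}>2$ for every $\alpha>1$), and the possibility that $C$ has no neighbour in $S$ and hence forms its own component of $G$, which either makes $(G,S)$ a trivial no-instance (when another component has an edge, forcing $\textsc{OPT}=\infty$) or leaves $u_C$ isolated and to be discarded. Each of these can be dispatched in a single line. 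The real takeaway is the structural simplification: for clique deletion sets the $I$-side of the split partition vanishes, so one reduction rule already gives a linear-size lossy kernel, in contrast to the general split deletion case where bounding $I$ requires substantially more work.
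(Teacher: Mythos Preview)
Your proposal is correct and follows essentially the same approach as the paper: apply Reduction Rule~\ref{rule:1} (which is $\alpha$-safe by Lemma~\ref{lem:safeness-rule1}), observe that when $S$ is a clique deletion set the independent-set side $I$ is empty so only $C$ needs bounding, and conclude that the reduced instance has at most $k+d$ vertices. The paper's own proof is a terse two-line version of exactly this; your additional discussion of degenerate cases and the observation that at most one application of the rule is needed are correct elaborations that the paper omits.
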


\begin{proof}
Let $(G,S,\ell)$ be an instance on which Reduction Rule~\ref{rule:1} is not applicable. Then, $|S| \leq k$ and $|V(G-S)| \leq d$ where $d=\lceil \frac{2\alpha-1}{\alpha-1} \rceil$. Therefore, $|V(G)| \leq k + d$.
\end{proof}

Let us now consider the case when $S$ is not a clique deletion set. 

\begin{lemma}
\label{cvc-bound}
Let $(G,S)$ be an instance on which Reduction Rule~\ref{rule:1} is not applicable. Then, $G$ has a connected vertex cover of size at most $2k+d-1$.
\end{lemma}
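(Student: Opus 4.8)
The plan is to build an explicit connected vertex cover of size at most $2k+d-1$. We may assume $G$ is connected (otherwise, since $G$ has no isolated vertices, it has no connected vertex cover and there is nothing to prove). The first observation is that $X := S \cup C$ is a vertex cover of $G$: every edge has an endpoint outside $I$, because $I$ is an independent set in the split graph $G-S$. Moreover $|X| = |S| + |C| \le k + (d-1)$, since $|S|\le k$ and, as Reduction Rule~\ref{rule:1} is not applicable, $|C| \le d-1$. Hence it only remains to add a few vertices of $I$ so that the induced subgraph becomes connected.

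Next I would bound $\comp(G[X])$. As $C$ is a clique, $G[C]$ is connected, so building $G[X]$ from $G[C]$ by inserting the at most $|S|\le k$ vertices of $S$ one at a time raises the number of components by at most one per insertion; therefore $\comp(G[X]) \le k+1$. I would then pass to the auxiliary bipartite graph $\widehat G$ whose one side $P$ is the set of components of $G[X]$ and whose other side is $I = V(G)\setminus X$, joining a component to $y \in I$ exactly when $y$ has a neighbour in that component. Because $G$ is connected, $I$ is independent, and no edge of $G$ joins two distinct components of $G[X]$, the graph $\widehat G$ is connected. Take a minimal Steiner tree $\mathcal T$ of $\widehat G$ with terminal set $P$ and put $J := V(\mathcal T)\cap I$. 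In a minimal such tree no leaf lies in $J$, so every vertex of $J$ has degree at least $2$ in $\mathcal T$; comparing this with $|E(\mathcal T)| = |P| + |J| - 1$ (using that $\mathcal T$ is bipartite between $P$ and $J$) yields $2|J| \le |P|+|J|-1$, i.e. $|J| \le |P| - 1 \le k$.

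Finally I would check that $S \cup C \cup J$ is the desired solution. It is a vertex cover since it contains $X$, and it is connected: every component of $G[X]$ is connected in $G$, and $\mathcal T$ witnesses that these components, glued through the vertices of $J$, form a single connected subgraph of $G[X\cup J]$. Its size is $|S|+|C|+|J| \le k + (d-1) + k = 2k+d-1$.

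The argument is short; the points that need care are the component-count bound $\comp(G[X]) \le k+1$ (this is exactly where the clique structure of $C$ is used) and the inequality $|J|\le |P|-1$ for a minimal Steiner tree in a bipartite graph, together with the boundary convention that $G$ is connected. I do not expect any genuine obstacle, so the main work is just assembling these pieces in the right order.
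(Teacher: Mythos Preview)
Your proof is correct and follows essentially the same approach as the paper: take $X=S\cup C$ as a vertex cover of size at most $k+(d-1)$ with at most $k+1$ components, then add at most $k$ vertices from $I$ to make it connected. The paper simply asserts that at most $|S|$ extra vertices suffice because $G$ is connected, while you spell this out via the bipartite contraction $\widehat G$ and the degree-counting bound $|J|\le |P|-1$ for a minimal Steiner tree; this extra care is fine but not strictly needed for the level of detail the paper maintains.
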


\begin{proof}
If Reduction Rule~\ref{rule:1} is not applicable, then $|C| \leq d - 1$. Now, $T=S \cup C$ is a vertex cover of $G$ such that $G[T]$ has at most $|S|+1$ components. If $G[T]$ is not connected, then we can add at most $|S|$ vertices from $V(G) \setminus T$ to $T$ so that $G[T]$ becomes connected. Such a choice of vertices exists as $G$ is connected. Hence, it follows that $G$ has a connected vertex cover of size at most $|T|+|S| \leq 2k+d-1$. 
\end{proof}

From Lemma \ref{cvc-bound}, we have $\textsc{OPT}((G, S),k) \leq 2k+d-1$. Therefore, any vertex with at least $2k+d$ neighbours is present in any optimal solution. We define a partition of vertices of $G$ into the following three parts. 
\begin{itemize}
\item $B = \{u \in V(G) \mid d(u) \ge 2k+d \}$
\item $I_B = \{v \in V(G) \setminus B \mid N(v) \subseteq B \}$
\item $R = V(G) \setminus (B \cup I_B)$
\end{itemize}

We apply the following two reduction rules which are known from \cite{LPRS16} to bound $I_B$ (which is an independent set).

\begin{reduction rule}
\label{rule:cvc-high-degree}
If there exists $u \in I_B \cap V(G-S)$ such that $d_G(u) \geq d$, then delete $N_G[u]$ and add a new vertex $w$ adjacent to every vertex in $N_G(N_G(u)) \setminus \{u\}$. Further, add $2k + d$ new vertices $W$ adjacent to $w$ to get the graph $G'$. The resulting instance is $(G',(S \cup \{w\}) \setminus N(u))$.
\end{reduction rule}

Observe that $S'=(S \cup \{w\}) \setminus N(u)$ is a split deletion set of $G'$ as $V(G'-S')$ can be partitioned into the clique $C \cap V(G'-S')$ and independent set $(I \cap V(G'-S')) \cup W$. Further, as $|C| \leq d-1$ and $d_G(u) \geq d$, it follows that $N(u) \cap S \neq \emptyset$. As we add at most one vertex to $S$ and delete at least one vertex from $S$ to get $S'$, we have $|S'| \leq |S|$.

\begin{lemma}
\label{lem:safeness-cvc-high-degree}
Reduction Rule~\ref{rule:cvc-high-degree} is $\alpha$-safe.
\end{lemma}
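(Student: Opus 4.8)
The goal is to show Reduction Rule~\ref{rule:cvc-high-degree} is $\alpha$-safe, mirroring the structure of the proofs of Lemmas~\ref{lem:safeness-rule1} and the analogous rule in~\cite{LPRS16}. The rule picks a vertex $u\in I_B\cap V(G-S)$ of degree at least $d$, deletes its closed neighbourhood, introduces a vertex $w$ attached to $N_G(N_G(u))\setminus\{u\}$, and pads $w$ with $2k+d$ pendant-like vertices $W$. The key structural facts already in hand are: (i) since $u\in I_B$, all of $N_G(u)$ lies in $B$, so every vertex of $N_G(u)$ is forced into any (approximately) optimal solution — in particular into $D^*$ for the original instance; and (ii) $w$ has degree $\ge 2k+d>\textsc{OPT}((G',S'),k)$ by Lemma~\ref{cvc-bound} applied to the reduced instance (after checking the reduced instance still satisfies $|S'|\le k$, which we noted, and that Reduction Rule~\ref{rule:1} is not applicable — or at worst re-apply it), so $w$ is forced into any optimal solution $D'$ of $(G',S')$, and moreover no vertex of $W$ is ever needed.

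\textbf{Solution lifting.} Given a solution $D'$ to $(G',S')$, first discard $W\cap D'$ from $D'$ (they contribute nothing to covering edges other than $wW$-edges, which $w$ already covers) — this does not increase the value. If $w\notin D'$ after this, then $D'$ covers all edges of $G'$ but misses the $wW$-edges unless it contains $W$; so we may assume $w\in D'$. Define $D=(D'\setminus(\{w\}\cup W))\cup N_G(u)\cup\{u'\}$ where $u'$ is an arbitrary neighbour of $u$, equivalently $D = (D'\setminus(\{w\}\cup W))\cup N_G(u)$ with possibly one extra vertex of $N_G(u)$ already there; the point is to replace the single ``contracted'' vertex $w$ by the set $N_G(u)$ (all forced), covering every edge of $G$ incident to $N_G(u)$, and to include $u$ is unnecessary since $N_G(u)$ covers all of $u$'s edges. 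Connectivity of $G[D]$ follows because $w$'s neighbourhood in $G'$ (outside $W$) equals $N_G(N_G(u))\setminus\{u\}$, all of which is adjacent to some vertex of $N_G(u)$ in $G$, so contracting $N_G(u)$ in $G$ into $w$ and back is connectivity-preserving; the pendant block $W$ in $G'$ is connectivity-neutral. Counting: $|D| = |D'| - 1 - |W\cap D'| + |N_G(u)|$, and after the normalization $|W\cap D'|=0$, so $\textsc{CVC}((G,S),k,D) = \textsc{CVC}((G',S'),k,D') - 1 + |N_G(u)|$, and similarly an optimum $D^*$ for $(G,S)$ contains all of $N_G(u)$ (forced, since $N_G(u)\subseteq B$) and at most one of them is ``wasted'' once we contract: $(D^*\setminus N_G(u))\cup\{w\}$ is a connected vertex cover of $G'$ (the $W$-edges covered by $w$), giving $\textsc{OPT}((G',S'),k)\le\textsc{OPT}((G,S),k) - |N_G(u)| + 1$.

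\textbf{The ratio.} Write $t=|N_G(u)|\ge d$. From the two displayed bounds, $\textsc{CVC}((G,S),k,D) = \textsc{CVC}((G',S'),k,D') + t - 1$ and $\textsc{OPT}((G,S),k) \ge \textsc{OPT}((G',S'),k) + t - 1$. Setting $a=\textsc{CVC}((G',S'),k,D')$ and $b=\textsc{OPT}((G',S'),k)$ with $a\ge b$, we need $\frac{a+t-1}{b+t-1}\le\max\{\frac ab,\alpha\}$; since $\frac{a+t-1}{b+t-1}\le\max\{\frac ab,\frac{t-1}{t-1}\}=\frac ab$ when... — more carefully, the mediant inequality gives $\frac{a+(t-1)}{b+(t-1)}$ lies between $\frac ab$ and $1$, hence $\le\frac ab$, so actually the ratio is bounded by $\frac ab\le\max\{\frac ab,\alpha\}$ outright. (If one instead only forces $N_G(u)$ up to one vertex, the worst case ratio is $\frac{t}{t-1}\le\frac{d}{d-1}\le\alpha$, which is why $d$ was chosen; I would present whichever bookkeeping is cleanest, matching the $\frac{|C|-1}{|C|-2}$ style of Lemma~\ref{lem:safeness-rule1}.)

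\textbf{Main obstacle.} The delicate point is the connectivity argument for the lifted solution, together with verifying that $w$ is genuinely forced in the reduced instance — this needs that Lemma~\ref{cvc-bound}'s bound $\textsc{OPT}((G',S'),k)\le 2k+d-1$ still applies to $(G',S')$, which in turn requires that Reduction Rule~\ref{rule:1} is not applicable to $G'$ (true: the clique side of the split partition of $G'-S'$ is unchanged, $C\cap V(G'-S')$, so $|C|\le d-1$ still holds), and that $|S'|\le k$ (established above). One must also handle the corner case where $N_G(N_G(u))\setminus\{u\}=\emptyset$, i.e.\ $w$ becomes isolated except for $W$; then in $G$ the set $N_G(u)$ is a connected component's worth of forced vertices and the argument degenerates cleanly. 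I expect the write-up to spend most of its length on these connectivity and ``$w$ is forced'' verifications, with the arithmetic being the routine mediant estimate above.
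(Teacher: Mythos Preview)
Your proposal has a genuine gap in the connectivity argument for the lifted solution. You explicitly decide not to include $u$ in $D$, writing ``to include $u$ is unnecessary since $N_G(u)$ covers all of $u$'s edges.'' That is correct for the vertex-cover property but not for connectivity: $N_G(u)$ need not induce a connected subgraph of $G$, so replacing the single vertex $w$ by the set $N_G(u)$ is \emph{not} the inverse of a contraction. Concretely, take $N_G(u)=\{a_1,a_2\}$ with $a_1a_2\notin E(G)$, and suppose $w$ has exactly two non-$W$ neighbours $x_1,x_2$ in $D'$, connected in $G'[D']$ only through $w$, with $x_i$ adjacent in $G$ to $a_i$ but not to $a_{3-i}$. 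Then $G[D]$ splits into a component containing $a_1,x_1$ and one containing $a_2,x_2$. The paper's lifting instead sets $D=(D'\setminus(W\cup\{w\}))\cup N_G[u]$, i.e.\ it \emph{does} add $u$; the star centred at $u$ makes $N_G[u]$ connected, after which your contraction reasoning goes through. That extra vertex is precisely why the additive cost is $|N_G[u]|-1=|N_G(u)|$ rather than $|N_G(u)|-1$, and why the final ratio is $\max\bigl\{\frac{\textsc{CVC}((G',S'),k,D')}{\textsc{OPT}((G',S'),k)},\frac{|N(u)|}{|N(u)|-1}\bigr\}$ rather than the clean mediant bound you derive. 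Your parenthetical about the $\frac{t}{t-1}$ bookkeeping is in fact the version that actually works; the ``stronger'' $1$-safe count you lead with rests on the false connectivity claim.

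A second, smaller gap is your handling of the case $w\notin D'$. You first discard $W\cap D'$ on the grounds that ``$w$ already covers'' the $wW$-edges, which presupposes $w\in D'$; you cannot then turn around and treat the case $w\notin D'$ after that step. The clean fix, which the paper uses, is a two-case lifting algorithm: if $w\notin D'$ then necessarily $W\subseteq D'$, hence $|D'|\ge 2k+d$, and one simply returns the connected vertex cover of $G$ of size at most $2k+d-1$ supplied by Lemma~\ref{cvc-bound}, giving $\textsc{CVC}((G,S),k,D)\le\textsc{CVC}((G',S'),k,D')$ directly. Your ``Main obstacle'' paragraph correctly notes that Lemma~\ref{cvc-bound} applies to $(G',S')$, but that only forces $w$ into \emph{optimal} solutions of the reduced instance, not into an arbitrary $D'$ handed to the lifting algorithm.
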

\begin{proof}
Consider a solution $D'$ of the reduced instance. If $w \notin D'$, the solution lifting algorithm returns a connected vertex cover $D$ of $G$ of size at most $2k+d-1$ obtained from Lemma \ref{cvc-bound}. Then, we have $\textsc{CVC}((G', S'), k, D') \geq 2k+d$ since $w$ has at least $2k+d$ neighbours and $\textsc{CVC}((G, S), k, D) \leq 2k+d-1$ implying $\textsc{CVC}((G, S), k, D) \leq \textsc{CVC}((G', S'), k, D')$. Otherwise, we have $w \in D'$ and the solution lifting algorithm returns $D=(D' \setminus (W \cup \{w\})) \cup N[u]$ which is a connected vertex cover of $G$ such that $\textsc{CVC}((G, S), k, D) \leq \textsc{CVC}((G', S'), k, D')-1+|N[u]|$. Next, consider an optimum solution $D^*$ for the original instance. Clearly, $|D^*| \leq 2k+d-1$ from Lemma \ref{cvc-bound}. Thus, $B \subseteq D^*$. In particular, $N_G(u) \subseteq D^*$. Then, $(D^* \setminus N_G[u]) \cup \{w\}$ is a connected vertex cover of $G'$. Hence, $\textsc{OPT}((G', S'),k) \leq \textsc{OPT}((G, S),k)-|N_G(u)|+1$. Combining these bounds, we have $\frac{\textsc{CVC}((G, S), k, D)}{\textsc{OPT}((G, S),k)} \leq \max \Big\{ \frac{\textsc{CVC}((G', S'),k, D')}{\textsc{OPT}((G', S'),k)}, \frac{|N(u)|}{|N(u)|-1} \Big\} \leq \max \Big\{ \frac{\textsc{CVC}((G', S'),k, D')}{\textsc{OPT}((G', S'),k)}, \alpha \Big\}$ as $|N(u)| \geq d$.
\end{proof}

Recall that two non-adjacent vertices $u$ and $v$ are called false twins if $N(u)=N(v)$.

\begin{reduction rule}
\label{rule:remove-redundant}
If there exists $x \in I_B \cap V(G-S)$ such that $x$ has at least $2k+d$ false twins in $I_B \cap V(G-S)$, then delete $x$. The resulting instance is $(G-\{x\},S \setminus \{x\})$.
\end{reduction rule}

Observe that $S'=S \setminus \{x\}$ continues to be a split deletion set of $G'=G-\{x\}$ as a result of applying this rule.

\begin{lemma}
\label{lem:safeness-remove-redundant}
Reduction Rule~\ref{rule:remove-redundant} is $1$-safe.
\end{lemma}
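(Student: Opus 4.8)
The plan is to show that deleting $x$ preserves the optimum value exactly, so that a solution for the reduced instance lifts essentially for free and the rule is in fact $1$-safe. First, the bookkeeping: since $x\in I_B\cap V(G-S)$ we have $x\notin S$, hence $S'=S\setminus\{x\}=S$ and the parameter is unchanged; as split graphs are hereditary, $G'-S=(G-S)-\{x\}$ is again a split graph with the \emph{same} clique $C$, so Reduction Rule~\ref{rule:1} remains inapplicable on $(G',S)$ and Lemma~\ref{cvc-bound} applies to both instances, giving $\textsc{OPT}((G,S),k)\le 2k+d-1$ and $\textsc{OPT}((G',S),k)\le 2k+d-1$.

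The crucial ingredient is a counting argument. Let $Y$ be the set consisting of $x$ and its false twins in $I_B\cap V(G-S)$; by hypothesis $|Y|\ge 2k+d+1$, every vertex of $Y$ has the common neighbourhood $N_G(x)\subseteq B$, and $Y$ is independent. If $D$ is a connected vertex cover with $|D|\le 2k+d-1$ of either $G$ or $G'$ and some $b\in N_G(x)\setminus D$, then every vertex of $Y$ still present in the graph (at least $2k+d$ of them in $G'$, at least $2k+d+1$ in $G$) is a neighbour of $b$ and must lie in $D$ --- a contradiction. Hence $N_G(x)\subseteq D$ for every such $D$.

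Using this I would establish $\textsc{OPT}((G,S),k)=\textsc{OPT}((G',S),k)$. For $\ge$: let $D^*$ be optimum for $G$; by Lemma~\ref{cvc-bound} and the counting claim $N_G(x)\subseteq D^*$. If $x\notin D^*$, then $D^*$ is already a connected vertex cover of $G'$ (it covers all edges of $G'$ and $G'[D^*]=G[D^*]$ is connected). If $x\in D^*$, then $|Y|>|D^*|$ forces a twin $y\in Y\setminus\{x\}$ with $y\notin D^*$; the set $\hat D=(D^*\setminus\{x\})\cup\{y\}$ has the same size, still covers all edges (every edge incident to $x$ or to $y$ goes into $N_G(x)\subseteq\hat D$), and since $y$ has exactly the neighbourhood inside $\hat D$ that $x$ had inside $D^*$, the bijection swapping $x$ and $y$ is an isomorphism $G[D^*]\to G[\hat D]$, so $G[\hat D]$ is connected; as $x\notin\hat D$, it is a connected vertex cover of $G'$. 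For $\le$: an optimum $D''$ of $G'$ satisfies $N_G(x)\subseteq D''$ by the counting claim, hence it already covers every edge of $G$ incident to $x$, and $G[D'']=G'[D'']$ is connected, so $D''$ is a connected vertex cover of $G$. I expect this ``reroute through a surviving twin'' step to be the only real obstacle, and it is precisely what the hypothesis of $2k+d$ false twins is there to guarantee.

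Finally, the solution lifting algorithm. Given a solution $D'$ to $(G',S)$: if $D'$ is a connected vertex cover of $G'$ with $N_G(x)\subseteq D'$, return $D=D'$, which by the argument above is a connected vertex cover of $G$ with $\textsc{CVC}((G,S),k,D)=\textsc{CVC}((G',S),k,D')$. Otherwise $D'$ is either not a connected vertex cover of $G'$ (value $\infty$) or, by the counting claim, satisfies $|D'|\ge 2k+d$; in both subcases return a connected vertex cover $D$ of $G$ of size at most $2k+d-1$ supplied by Lemma~\ref{cvc-bound}, so that $\textsc{CVC}((G,S),k,D)\le 2k+d-1\le\textsc{CVC}((G',S),k,D')$. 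In every case $\textsc{CVC}((G,S),k,D)\le\textsc{CVC}((G',S),k,D')$ while $\textsc{OPT}((G,S),k)=\textsc{OPT}((G',S),k)$, which yields $\frac{\textsc{CVC}((G,S),k,D)}{\textsc{OPT}((G,S),k)}\le\max\{\frac{\textsc{CVC}((G',S),k,D')}{\textsc{OPT}((G',S),k)},1\}$, i.e. Reduction Rule~\ref{rule:remove-redundant} is $1$-safe.
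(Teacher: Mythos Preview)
Your proof is correct and follows essentially the same approach as the paper's: both use the abundance of false twins to force $N_G(x)$ into every small connected vertex cover, return $D'$ unchanged when it is small (hence already covers $N_G(x)$), fall back on the Lemma~\ref{cvc-bound} solution when $D'$ is large or infeasible, and for the optimum argue that if $x\in D^*$ one can swap $x$ for an absent twin. You additionally prove full equality of the optima (the paper only shows $\textsc{OPT}(G',S')\le\textsc{OPT}(G,S)$, which already suffices), and you are a bit more explicit about the bookkeeping that $S'=S$ and that Lemma~\ref{cvc-bound} remains applicable to $(G',S)$; otherwise the two arguments coincide.
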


\begin{proof}
Consider a solution $D'$ of the reduced instance. If $|D'| \geq 2k+d$, the solution lifting algorithm returns a connected vertex cover $D$ of $G$ of size at most $2k+d-1$ obtained from Lemma \ref{cvc-bound}. Then, we have $\textsc{CVC}((G', S'), k, D') \geq 2k+d$ and $\textsc{CVC}((G, S), k, D) \leq 2k+d-1$ implying $\textsc{CVC}((G, S), k, D) \leq \textsc{CVC}((G', S'), k, D')$. Otherwise, it follows that one of the false twins of $x$, say $y$, is excluded from $D'$. Thus, $N(y) \subseteq D'$ implying that $N(x) \subseteq D'$. Then, the solution lifting algorithm returns $D=D'$ which is a connected vertex cover of $G$ such that $\textsc{CVC}((G, S), k, D) = \textsc{CVC}((G', S'), k, D')$. Next, consider an optimum solution $D^*$ for the original instance. Clearly, $|D^*| \leq 2k+d-1$ from Lemma \ref{cvc-bound}. Thus, either $x$ and one of its false twins $y$ is excluded from $D^*$ or two of the false twins of $x$, say $y$ and $z$ are excluded from $D^*$. In any case, we have another optimal connected vertex cover $D^{**}$ of $G'$ that excludes $x$. Hence, $\textsc{OPT}((G', S'),k) \leq \textsc{OPT}((G, S),k)$. Combining these bounds, we have $\frac{\textsc{CVC}((G, S), k, D)}{\textsc{OPT}((G, S),k)} \leq \frac{\textsc{CVC}((G', S'), k, D')}{\textsc{OPT}((G', S'),k)}$.
\end{proof}

Now, we have the following bound.

\begin{lemma} \label{lemma:cvc-approx-kernel} 
Suppose $S$ is a split deletion set of size $k$ of $G$ and none of the Reduction Rules \ref{rule:1}, \ref{rule:cvc-high-degree} and \ref{rule:remove-redundant} is applicable on the instance $(G,S)$. Then, $|V(G)|$ is $\mathcal{O}(k+(2k+d)^{d})$.
 \end{lemma}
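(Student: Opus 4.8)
The plan is to bound the three parts $B$, $I_B$ and $R$ of the partition of $V(G)$ separately, using the fact that each reduction rule is no longer applicable.

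First I would bound $|B|$. Every vertex of $B$ has degree at least $2k+d$, and by Lemma~\ref{cvc-bound} the graph has a connected vertex cover of size at most $2k+d-1$, so $\textsc{OPT}((G,S),k) \leq 2k+d-1$; since every vertex of $B$ must lie in every optimal solution (a vertex of degree at least $2k+d$ cannot be left out of a vertex cover of size at most $2k+d-1$), we get $|B| \leq 2k+d-1$. Next I would bound $|R|$. Recall $R = V(G) \setminus (B \cup I_B)$, so every vertex of $R$ has a neighbour outside $B$; combined with $R \cap S$ having size at most $k$ and $C$ having size at most $d-1$ (Reduction Rule~\ref{rule:1} is inapplicable), I would argue that $R \setminus S$ consists of vertices of $I$ together with $C$, and each such vertex of $I$ has a neighbour in $(R \cup I_B)$-land that forces it to be counted against a bounded set; more carefully, the vertices of $R$ lie in a connected vertex cover of size $2k+d-1$ or are neighbours of it, and since the maximum degree among vertices not in $B$ is at most $2k+d-1$, the total number of such neighbours is $\mathcal{O}((2k+d)^2)$, i.e. polynomial in $k$. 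The genuinely exponential contribution comes only from $I_B$.

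The main step, and the main obstacle, is bounding $|I_B|$. Here $I_B$ is an independent set each of whose vertices has all neighbours in $B$, and $|B| \leq 2k+d-1$. I would split $I_B$ into $I_B \cap S$, which has at most $k$ vertices, and $I_B \cap V(G-S)$. For the latter, since Reduction Rule~\ref{rule:cvc-high-degree} is inapplicable, every vertex $u \in I_B \cap V(G-S)$ has $d_G(u) \leq d-1$, so $N_G(u)$ is one of at most $\binom{|B|}{\leq d-1} = \mathcal{O}((2k+d)^{d-1})$ possible subsets of $B$ (using $|B| = \mathcal{O}(k+d)$). Since Reduction Rule~\ref{rule:remove-redundant} is inapplicable, for each such neighbourhood there are at most $2k+d$ vertices of $I_B \cap V(G-S)$ realizing it (more than $2k+d$ would give one of them at least $2k+d$ false twins). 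Hence $|I_B \cap V(G-S)| \leq (2k+d) \cdot \mathcal{O}((2k+d)^{d-1}) = \mathcal{O}((2k+d)^d)$.

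Finally I would combine the bounds: $|V(G)| = |B| + |I_B| + |R| \leq (2k+d-1) + \big(k + \mathcal{O}((2k+d)^d)\big) + \mathcal{O}((2k+d)^2) = \mathcal{O}(k + (2k+d)^d)$, which is the claimed bound. The only slightly delicate points are verifying that the reduction rules genuinely apply to all relevant vertices (in particular that the degree and false-twin conditions are stated with respect to $G$ and $I_B \cap V(G-S)$, so that the $B$-part and $S$-part are handled separately) and confirming that the $R$-bound is indeed polynomial; I expect the $I_B$ counting argument to be the heart of the proof and everything else to be bookkeeping.
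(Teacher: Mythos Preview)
Your proposal is correct and follows essentially the same approach as the paper: bound $|B|$ via the connected vertex cover of size $2k+d-1$ from Lemma~\ref{cvc-bound}, bound $|I_B|$ by splitting off $I_B\cap S$ and counting possible neighbourhoods in $B$ using the inapplicability of Rules~\ref{rule:cvc-high-degree} and~\ref{rule:remove-redundant}, and bound $|R|$ polynomially. For the $R$ bound the paper's argument is a touch crisper than your sketch: every vertex of $R$ has a neighbour in $R$ (it has a neighbour outside $B$, and that neighbour cannot lie in $I_B$), so $G[R]$ has no isolated vertices and $T\cap R$ is a vertex cover of $G[R]$ with at most $2k+d-1$ vertices, each of degree at most $2k+d-1$, yielding $|R|=\mathcal{O}((2k+d)^2)$.
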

 \begin{proof}
We will bound $B$, $I_B$ and $R$ separately in order to bound $V(G)$. We know that $G$ has a connected vertex cover $T$ of size at most $2k+d-1$. As $B$ is the set of vertices of degree at least $2k+d$, $B \subseteq T$ and so $|B| \le 2k+d-1$. Every vertex in $R$ has degree at most $2k+d-1$. Therefore, as $T \cap R$ is a vertex cover of $G[R]$, $|E(G[R])|$ is $\mathcal{O}((2k+d-1)^2)$. Also, by the definition of $I_B$, every vertex in $R$ has a neighbour in $R$ and hence there are no isolated vertices in $G[R]$. Thus, $|R|$ is $\mathcal{O}((2k+d-1)^2)$. Finally, we bound the size of $I_B$. 

As Reduction Rule \ref{rule:cvc-high-degree} is not applicable, every vertex in $I_B \cap V(G-S)$ has degree at most $d-1$. For every set $B' \subseteq B$ of size at most $d-1$, there are at most $2k+d$ vertices in $I_B  \cap V(G-S)$ which have $B'$ as their neighbourhood. Otherwise, Reduction Rule \ref{rule:remove-redundant} would have been applied. Hence, there are at most $(2k+d) \cdot \binom{2k+(d-1)}{d-1} $ vertices in $I_B  \cap V(G-S)$. Finally, as none of the reduction rules increases the size of the split deletion set $S$, we have $|I_B  \cap S| \leq k$. Therefore, $|I_B|$ is $\mathcal{O}(k+(2k+d)^{d})$.
 \end{proof}
 
\noindent This leads to a PSAKS for the problem as claimed.

\begin{theorem}
\cvc\ parameterized by the size $k$ of a split deletion set admits a time efficient PSAKS with $\mathcal{O}(k+(2k+\lceil \frac{2\alpha-1}{\alpha-1} \rceil)^{\lceil \frac{2\alpha-1}{\alpha-1} \rceil})$ vertices. 
\end{theorem}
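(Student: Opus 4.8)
The plan is to turn Reduction Rules~\ref{rule:1}, \ref{rule:cvc-high-degree} and~\ref{rule:remove-redundant}, together with the solution lifting algorithms supplied by Lemmas~\ref{lem:safeness-rule1}, \ref{lem:safeness-cvc-high-degree} and~\ref{lem:safeness-remove-redundant}, into a single strict $\alpha$-approximate polynomial-time preprocessing algorithm for the parameterized minimization problem \textsc{CVC}, and then read off the output size from Lemma~\ref{lemma:cvc-approx-kernel}. Fix $\alpha>1$, put $d=\lceil\frac{2\alpha-1}{\alpha-1}\rceil$ (so $d\ge 3$ and $\alpha\ge\frac{d-1}{d-2}$), and given $((G,S),k)$ with $|S|\le k$, compute a split partition $(C,I)$ of $G-S$ in polynomial time. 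The reduction algorithm then applies the three rules exhaustively, giving Rule~\ref{rule:1} top priority so that it is used before the other two; the solution lifting algorithm is the composition, taken in reverse, of the corresponding lifting algorithms along the sequence of applications.

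First I would establish that this process runs in polynomial time. Since applying Rule~\ref{rule:1} replaces the clique side $C$ by a single vertex and none of the three rules ever enlarges the clique side of the (updated) split partition, Rule~\ref{rule:1} is applied at most once; after that, $|C|\le d-1$ holds permanently, which is exactly the hypothesis that makes Lemma~\ref{cvc-bound}, and hence the $\alpha$-safeness arguments for Rules~\ref{rule:cvc-high-degree} and~\ref{rule:remove-redundant}, applicable throughout. For the remaining rules I would use a potential such as $\mu(G,S)=|V(G)|+(2k+1)\cdot|\{v\in V(G-S):d_G(v)\ge d\}|$: each application of Rule~\ref{rule:remove-redundant} removes a vertex and cannot create new vertices of degree $\ge d$ in $V(G-S)$, so $\mu$ strictly drops; each application of Rule~\ref{rule:cvc-high-degree} deletes the vertex $u$ of degree $\ge d$ while the only vertices it introduces have degree $1$ or lie in the modulator, so the weighted term falls by at least $2k+1$ while $|V(G)|$ rises by at most $2k$, and again $\mu$ strictly decreases. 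As $\mu$ is polynomially bounded throughout and each rule application is polynomial time, the reduction algorithm --- and likewise the (polynomially long) composed solution lifting algorithm --- runs in time $f(\alpha)\,|I|^{\mathcal{O}(1)}$.

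Next I would verify the approximation guarantee. Rules~\ref{rule:1} and~\ref{rule:cvc-high-degree} are $\alpha$-safe and Rule~\ref{rule:remove-redundant} is $1$-safe, hence all three are $\alpha$-safe; moreover none of them raises $k$ or violates $|S|\le k$, so every intermediate instance is a legitimate instance of the parameterization. Invoking the standard composition property of the lossy kernelization framework~\cite{LPRS16} --- that chaining strict $\alpha$-approximate polynomial-time preprocessing algorithms again yields a strict $\alpha$-approximate polynomial-time preprocessing algorithm, because the ratios telescope under $\max\{\cdot,\alpha\}$ without the parameter $\alpha$ degrading --- the entire pipeline is a strict $\alpha$-approximate polynomial-time preprocessing algorithm for \textsc{CVC}.

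Finally, when the algorithm halts no rule is applicable, so Lemma~\ref{lemma:cvc-approx-kernel} bounds the number of vertices of the output by $\mathcal{O}(k+(2k+d)^{d})$ (and its number of edges is polynomially bounded in the same quantity, since all but the $\mathcal{O}(k)$ vertices of $B$ have degree below $2k+d$). Substituting $d=\lceil\frac{2\alpha-1}{\alpha-1}\rceil$ gives an output with $\mathcal{O}\!\big(k+(2k+\lceil\frac{2\alpha-1}{\alpha-1}\rceil)^{\lceil\frac{2\alpha-1}{\alpha-1}\rceil}\big)$ vertices, which has the form $f(\alpha)\,k^{g(\alpha)}$ with $f,g$ depending only on $\alpha$; since the construction works for every $\alpha>1$, this yields the claimed time-efficient PSAKS. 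The one step I expect to require genuine care is the termination/running-time analysis of Rule~\ref{rule:cvc-high-degree}, since that rule can increase the vertex count --- which is precisely why the potential must weight high-degree non-modulator vertices heavily enough to dominate that increase.
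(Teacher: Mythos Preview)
Your proposal is correct and follows essentially the same approach as the paper: apply Reduction Rules~\ref{rule:1}, \ref{rule:cvc-high-degree} and~\ref{rule:remove-redundant} exhaustively and invoke Lemma~\ref{lemma:cvc-approx-kernel} for the size bound. Your write-up is considerably more detailed than the paper's two-sentence proof---in particular, the potential-function argument for polynomial termination of Rule~\ref{rule:cvc-high-degree} and the explicit appeal to composition of strict $\alpha$-safe rules are points the paper leaves implicit---but the underlying strategy is identical.
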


\begin{proof}
Given $\alpha >1$, we choose $d=\lceil \frac{2\alpha-1}{\alpha-1} \rceil$ and apply the reduction rules as long as they are applicable. Then, from Lemma \ref{lemma:cvc-approx-kernel}, $|V(G)|$ is $\mathcal{O}(k+(2k+d)^{d})$.
\end{proof}

We note that this kernel is one of the few lossy kernels for a problem with respect to a structural parameter.

\section{Connected Vertex Cover parameterized by Clique Cover}


In this section, we first show that some of the ideas from the previous section can be used to give a lossy kernel for \cvc\ when parameterized by the size of the clique cover. A clique cover of a graph is a partition of its vertex set such that each part induces a clique. We assume that we are given a partition of the vertex set of the input graph into cliques. For this parameterization, the corresponding parameterized minimization problem is defined in a similar way as defined for split deletion set. 

\begin{theorem}
\label{cc-lossy}
\cvc\ parameterized by the size $k$ of a clique cover admits a time efficient PSAKS with $\mathcal{O}(k \lceil \frac{2\alpha-1}{\alpha-1} \rceil)$ vertices. 
\end{theorem}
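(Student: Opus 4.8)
The plan is to mimic the split-deletion-set argument, viewing a clique cover $\{C_1, \dots, C_k\}$ as a collection of $k$ cliques whose union is $V(G)$, and to push down the same three reduction rules adapted to this setting. The key structural fact we need first is an analogue of Lemma~\ref{cvc-bound}: if each clique $C_i$ has been shrunk to at most $d = \lceil \frac{2\alpha-1}{\alpha-1}\rceil$ vertices, then $G$ has a connected vertex cover of size at most $O(kd)$ — indeed, taking all but one vertex from each $C_i$ (total at most $kd$) gives a vertex cover, and since $G$ is connected we can add at most one connecting vertex per clique to make it connected, giving a bound like $kd + k$. Consequently $\textsc{OPT} = O(kd)$, so every vertex of degree exceeding this threshold must lie in every optimal solution; call this set $B$, and as before partition $V(G) = B \cup I_B \cup R$ where $I_B$ are the non-$B$ vertices with all neighbours in $B$, and $R$ is the rest.

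First I would state the parameterized minimization problem $CVC((G,\mathcal{C}),k,T)$ exactly as for split deletion sets (returning $-\infty$ if $\mathcal{C}$ is not a clique cover with $\leq k$ parts, $\infty$ if $T$ is not a connected vertex cover, and $|T|$ otherwise), and assume $G$ has no isolated vertices. Then I would apply the clique-shrinking analogue of Reduction Rule~\ref{rule:1}: if some $C_i$ has $|C_i| \geq d$, replace it by a single vertex $u_{C_i}$ adjacent to $N(C_i)$; the $\alpha$-safeness proof is verbatim the one in Lemma~\ref{lem:safeness-rule1}, since an optimal solution contains all but at most one vertex of $C_i$, and the ratio lost is $\frac{|C_i|-1}{|C_i|-2} \leq \frac{d-1}{d-2} \leq \alpha$. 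This step reduces us to the regime where every clique has at most $d-1$ vertices, and the clique cover is preserved throughout (shrinking a clique keeps it a clique). After this, $|B| = O(kd)$ because $B$ is contained in any connected vertex cover of size $O(kd)$, and $|R| = O((kd)^2)$ because every vertex of $R$ has degree $O(kd)$ and $R$ has no isolated vertices and admits a vertex cover of size $O(kd)$ (the restriction of the small connected vertex cover).

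The remaining work is to bound $I_B$. Here the clique-cover setting is genuinely easier than the split case: each clique $C_i$ contributes at most one vertex to the independent-ish set $I_B$ in a meaningful way, and more importantly there is no "high-degree vertex inside the independent part" obstruction to worry about — but to be safe I would still import Reduction Rules~\ref{rule:cvc-high-degree} and~\ref{rule:remove-redundant} (or their clique-cover restatements). The crucial point giving the linear rather than polynomial bound: since $\mathcal{C}$ is a clique cover of all of $V(G)$, the set $B$ is covered by at most $k$ cliques, so $B$ itself is a union of $k$ cliques; a vertex $v \in I_B$ has $N(v) \subseteq B$, but I claim that after appropriate reduction each vertex of $I_B$ can be "charged" to a clique of the cover so that $|I_B| = O(kd)$ — concretely, apply the false-twin rule (Rule~\ref{rule:remove-redundant}) to keep only $O(kd)$ vertices per distinct small neighbourhood pattern, and use that the relevant neighbourhood patterns are constrained by the clique structure of $B$. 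Combining, $|V(G)| = |B| + |I_B| + |R| = O(kd)$, which is $O(k\lceil\frac{2\alpha-1}{\alpha-1}\rceil)$ as claimed.

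\textbf{Main obstacle.} The hard part will be obtaining the \emph{linear} (not merely polynomial) bound on $|I_B|$, and correspondingly avoiding the $(kd)^2$ term that $R$ naively contributes — the theorem statement promises $O(k d)$ vertices overall, so the argument must exploit the clique-cover structure to replace "vertex cover of size $O(kd)$, hence $O((kd)^2)$ edges" reasoning with something sharper, presumably by using that $R$ is also covered by $\leq k$ cliques each of size $\leq d-1$, which already forces $|R| \leq k(d-1)$ directly without any edge-counting. In fact this observation is the real engine: once every clique has size $\leq d-1$ and the cover has $\leq k$ parts, $|V(G)| \leq k(d-1)$ immediately, and the reduction rules are only needed to reach that regime — so the bulk of the proof is verifying that Rule~\ref{rule:1}'s analogue suffices on its own and is $\alpha$-safe, with $B, I_B, R$ playing at most a bookkeeping role. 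I would double-check whether any additional rule is needed to handle cliques that cannot be shrunk because shrinking would create isolated vertices or violate connectivity, but I expect the universal-vertex-style argument handles this as in Lemma~\ref{lem:safeness-rule1}.
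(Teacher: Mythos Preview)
Your proposal eventually arrives at the correct argument, but only in its final paragraph after a long and unnecessary detour. The paper's proof is exactly your closing observation: apply Reduction Rule~\ref{rule:1} to each clique $C_i$ in the cover (Lemma~\ref{lem:safeness-rule1} gives $\alpha$-safeness verbatim), and once no clique has $\geq d$ vertices the cover itself forces $|V(G)| \leq k(d-1)$; nothing else is needed.

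Everything preceding that---the analogue of Lemma~\ref{cvc-bound}, the $B/I_B/R$ partition, the degree threshold, the import of Rules~\ref{rule:cvc-high-degree} and~\ref{rule:remove-redundant}, the edge-counting bound on $R$---is superfluous here, and parts of it would not even give the right bound (as you yourself note, naive edge-counting yields $O((kd)^2)$ for $R$, which is too weak). The clique-cover parameter is structurally simpler than the split-deletion-set parameter precisely because the cover already partitions \emph{all} of $V(G)$, so shrinking each part immediately bounds the whole vertex set; there is no separate independent part to tame. Your instinct to ``double-check whether any additional rule is needed'' is the right one, and the answer is no: Rule~\ref{rule:1} alone suffices, and the proof is three lines.
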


\begin{proof}
Without loss of generality, we assume that the graph has no isolated vertices as no minimum connected vertex cover contains them. Given $\alpha >0$, let $d= \lceil \frac{2\alpha-1}{\alpha-1} \rceil$. Let $\mathcal{C}$ denote the set of $k$ cliques in the clique cover of $G$. We bound the number of vertices in $G$ by applying Reduction Rule \ref{rule:1} on each $C \in \mathcal{C}$. From Lemma \ref{lem:safeness-rule1}, each application of this rule is $\alpha$-safe. When this rule is no longer applicable, it follows that $G$ has at most $k(d-1)$ vertices. This leads to the claimed result. 
\end{proof}

Now, we modify a reduction known from~\cite{JRV14} used in the hardness of \textsc{Feedback Vertex Set} with respect to the size of a clique cover to show that \cvc\ is $W[1]$-hard under the same parameterization.
This makes this one of the few problems that are unlikely to be in \FPT\ but have a lossy kernel.

\begin{theorem}
\label{thm:cvc-w-hard}
\cvc\ parameterized by the size of a clique cover is \W$[1]$-hard.
\end{theorem}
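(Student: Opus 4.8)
The plan is to give a parameterized reduction from \textsc{Multicolored Independent Set}, which is \W[1]-hard when parameterized by the number $k$ of color classes. Given an instance of \textsc{Multicolored Independent Set} consisting of a graph $H$ on $n$ vertices together with a partition $V(H)=V_1\uplus\cdots\uplus V_k$ of its vertex set into independent sets (the color classes), I would build a graph $G$ as follows: take $V(H)$ as the vertex set, keep every edge of $H$, turn each color class $V_i$ into a clique by adding all missing edges inside $V_i$, and finally add one extra vertex $v^\star$ made adjacent to every other vertex of $G$. The produced instance of \cvc\ is $(G,\ell)$ with $\ell = n-k+1$, and as the accompanying clique cover of $G$ we output the $k+1$ cliques $V_1,\dots,V_k,\{v^\star\}$. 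Since the parameter of the output instance is $k+1$, this is a valid parameterized reduction provided the two instances are equivalent, and establishing that equivalence is the bulk of the argument. (As usual we may assume $k\ge 2$, the case $k\le 1$ being trivial, and by adding a false twin in $H$ to any singleton class we may assume each $V_i$ has at least two vertices.)

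For the forward direction, if $I=\{u_1,\dots,u_k\}$ with $u_i\in V_i$ is a multicolored independent set of $H$, then $X=(V(H)\setminus I)\cup\{v^\star\}$ has size $n-k+1$; it is a vertex cover of $G$ because each clique $V_i$ has exactly one vertex (namely $u_i$) outside $X$ so all clique-edges are covered, while no edge of $H$ has both endpoints in $I$; and $G[X]$ is connected because $v^\star$ is adjacent to every vertex of $V(H)\setminus I$. For the reverse direction, suppose $X$ is a connected vertex cover of $G$ with $|X|\le n-k+1$. Since $n-k+1<n$, $X$ cannot avoid $v^\star$ (otherwise $N_G(v^\star)=V(G)\setminus\{v^\star\}\subseteq X$, forcing $|X|\ge n$), so $v^\star\in X$ and $X\setminus\{v^\star\}$ is a vertex cover of $G$ of size at most $n-k$. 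Because each $V_i$ is a clique, $X$ excludes at most one vertex of $V_i$; as it excludes at least $k$ vertices of $V(H)$ in total, it excludes exactly one vertex of each $V_i$. Letting $I$ be the set of these $k$ excluded vertices, $I$ picks one vertex from each color class and no edge of $H$ joins two vertices of $I$ (such an edge would be uncovered), so $I$ is a multicolored independent set of $H$. Hence the two instances are equivalent.

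The step I expect to be the main obstacle — and the reason the reduction is a \emph{modification} rather than a verbatim copy of the \textsc{Feedback Vertex Set} reduction of \cite{JRV14} (or of the analogous reduction for \textsc{Vertex Cover}) — is forcing the desired solution of $G$ to be \emph{connected} without inflating the clique cover number beyond $\OO(k)$. The universal vertex $v^\star$ resolves exactly this tension: it makes $X$ connected for free in the forward direction, it is forced into every sufficiently small connected vertex cover in the reverse direction, and it contributes only a single clique $\{v^\star\}$ to the cover (this is precisely the "add a universal vertex" device already used to reduce \textsc{Vertex Cover} to \cvc, cf.\ Observation~\ref{obs:cvc-clique-no-poly}). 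Everything else — the equivalence bookkeeping above and checking that $G$ and its clique cover are computable in polynomial time — is routine.
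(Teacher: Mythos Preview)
Your proof is correct and follows a route closely related to, but a bit cleaner than, the paper's. The paper also reduces from an independent-set problem and also uses a universal vertex to enforce connectivity, but it starts from plain \textsc{Independent Set} rather than \textsc{Multicolored Independent Set}: from $(G,k)$ it takes $k$ indexed copies of $V(G)$, makes each ``layer'' $\{(v,i):v\in V(G)\}$ a clique, adds edges between copies of the same vertex and between copies of adjacent vertices, and finally adjoins a universal vertex $x$ together with a pendant $y$ so that $\{x,y\}$ serves as the $(k{+}1)$st clique and one element of the target non-separating independent set can be parked at $y$. Your reduction is essentially what one obtains by first applying the standard $k$-copy reduction from \textsc{Independent Set} to \textsc{Multicolored Independent Set} and then performing the paper's construction; by starting from the multicolored variant directly you avoid the $k$-fold blow-up and the pendant vertex, which makes both the instance and the equivalence argument shorter. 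One tiny slip worth fixing: after concluding $v^\star\in X$, the set $X\setminus\{v^\star\}$ is a vertex cover of $G[V(H)]$, not of $G$ (the edges from $v^\star$ to the excluded vertices are not covered); this is exactly what you use in the next sentence, so it does not affect the argument.
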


\begin{proof}
We reduce the well known \W$[1]$-hard problem, \textsc{Independent Set} parameterized by solution size, to our problem. A {\em non-separating independent set} of a graph $G$ is an independent set $I$ such that $V(G)\setminus I$ is a connected vertex cover of $G$. Let $(G,k)$ be an instance of \textsc{Independent Set}. We construct a graph $G'$ on the vertex set $\{(v,i) \mid v \in V(G), i \in [k]\} \cup \{x, y\}$. We add an edge between $(v,i)$ and $(u,j)$ if and only if $i = j$ or $u = v$ or $v \in N_G(u)$. Further, for every $w \in V(G') \setminus \{x\}$, we add the edge $\{w,x\}$ to $G'$. Note that $G'$ has a clique cover of size $k+1$ as for all $j \in [k]$, $G'[\{(v,j) \mid v \in V(G)\}]$ is a complete graph and $\{x,y\}$ is a clique. We claim that $G$ has an independent set of size $k$ if and only if $G'$ has a non-separating independent set of size $k+1$. 

Suppose $S= \{v_1,\ldots,v_k\}$ is an independent set of size $k$ in $G$. Define the set $S' \subseteq V(G')$ as $\{(v_1,1), (v_2,2), \ldots, (v_k,k),y\}$. Consider any two vertices $(v_i,i)$ and $(v_j,j)$ in $S'$ such that $i \neq j$. As for each distinct $i, j \in [k]$, we have $\{v_i,v_j\} \notin E(G)$, it follows that there is no edge between $(v_i,i)$ and $(v_j,j)$. Also, by the construction of $G'$, $y$ is not adjacent to any vertex in $G'$ other than $x$. Therefore, $S'$ is an independent set of size $k+1$ in $G'$. As $x$ is adjacent to every vertex in $G'-S'$, it follows that the deletion of $S'$ does not disconnect $G'$. In other words, $S'$ is a non-separating independent set of $G'$.

Conversely, suppose $S'$ is a non-separating independent set of size $k+1$ in $G'$. Clearly, different vertices of $S'$ have to come from different cliques of the clique cover of $G'$. We may assume that $y \in S'$ as if $x \in S'$, then $(S' \setminus \{x\}) \cup \{y\}$ is another non-separating independent set of size $k+1$. This is due to the facts that $y$ is not adjacent to any vertex other than $x$ and $x$ is a universal vertex in $G'$. Define the set $S \subseteq V(G)$ as $\{v \mid (v,j) \in S' \setminus \{y\}\}$. Consider any two vertices $(v,i)$ and $(u,j)$ in $S'$. Then, $u \neq v$, $i \neq j$ and $v \notin N_G(u)$. Therefore, $S$ is an independent set in $G$ of size $k$.
\end{proof}

From Theorem \ref{thm:cvc-w-hard}, it follows that \cvc\ parameterized by $k+q$ is \W$[1]$-hard where $k$ is the size of a set $S$ whose deletion results in a graph for which there exists a clique cover of size $q$. However, the problem is in \XP\ as the algorithm in Theorem \ref{split-fpt} can be easily generalized to solve \cvc\ in $O^*(2^k n^q)$ time where $n$ is the number of vertices in the input graph. Further, by an easy adaptation of Theorem \ref{cc-lossy}, the problem admits a PSAKS leading to the following result. 

\begin{theorem}
\label{thm:cvc-mod}
Given a graph $G$ on $n$ vertices, a set $S \subseteq V(G)$ such that $|S| \leq k$ and $G-S$ has a clique cover $\mathcal{Q}$ of size $q$ and a positive integer $\ell$, there is an algorithm that determines whether $G$ has a connected vertex cover of size at most $\ell$ in $O^*(2^k n^q)$ time. Further, the problem admits a time efficient PSAKS with $\mathcal{O}(k+q \lceil \frac{2\alpha-1}{\alpha-1} \rceil)$ vertices. 
\end{theorem}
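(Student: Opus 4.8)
The plan is to establish the two assertions separately, each time lifting a result that already appears in this section. For the algorithm, I would mimic the enumeration used in Theorem~\ref{split-fpt} for the clique-deletion-set case, but carry it out over all $q$ parts of $\mathcal{Q}$ simultaneously. Write $V(G)=S\cup C_1\cup\dots\cup C_q$, where $C_1,\dots,C_q$ are the parts of $\mathcal{Q}$ (these sets are pairwise disjoint and disjoint from $S$), and let $X^*$ be an optimal connected vertex cover; we may assume $G$ is connected, as otherwise no connected vertex cover exists. Since each $C_i$ is a clique, $X^*$ can miss at most one vertex of $C_i$, so there are at most $|C_i|+1$ possibilities for $X^*\cap C_i$; combined with the $2^{|S|}\le 2^k$ possibilities for $X^*\cap S$, this gives at most $2^k\prod_{i=1}^q(|C_i|+1)$ candidate sets, and once $X^*\cap S$ and every $X^*\cap C_i$ are fixed the whole set $X^*$ is determined. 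For each candidate we check in polynomial time whether it is a vertex cover whose induced subgraph is connected, and keep track of the minimum size of a valid one (or just test against $\ell$). Because a clique cover of $G-S$ has at most $n$ parts, $\prod_i(|C_i|+1)\le(n+1)^q\le e\cdot n^q=\OO(n^q)$, so the procedure runs in $\OO^*(2^k n^q)$ time.

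For the PSAKS, I would observe that both Reduction Rule~\ref{rule:1} and its $\alpha$-safety (Lemma~\ref{lem:safeness-rule1}) use only that $C$ is a clique and that $G$ is connected; the split structure of $G-S$ is never used. Hence I apply Reduction Rule~\ref{rule:1} to each part $C\in\mathcal{Q}$ inside $G-S$, while leaving $S$ untouched. Each application is $\alpha$-safe by Lemma~\ref{lem:safeness-rule1}, does not alter $S$ (so $|S|\le k$ is preserved), and keeps $G-S$ equipped with a size-$q$ clique cover: the part $C$ is simply replaced by the singleton clique $\{u_C\}$, and $(\mathcal{Q}\setminus\{C\})\cup\{\{u_C\}\}$ is again a partition of the new $V(G-S)$ into $q$ cliques. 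When no further application is possible, every part of $\mathcal{Q}$ has at most $d-1$ vertices, where $d=\lceil\frac{2\alpha-1}{\alpha-1}\rceil$, so $|V(G-S)|\le q(d-1)$, and together with $|S|\le k$ this gives $|V(G)|\le k+q(d-1)=\OO\!\big(k+q\lceil\frac{2\alpha-1}{\alpha-1}\rceil\big)$. Since $d$ depends only on $\alpha$ and the reduction and solution-lifting steps run in polynomial time, this is a time-efficient PSAKS.

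All steps are routine given the earlier material; the only points deserving a word of care are (i) the counting bound $\prod_i(|C_i|+1)=\OO(n^q)$, which relies on $q\le n$, and (ii) verifying that the hypotheses under which Lemma~\ref{lem:safeness-rule1} was proved survive when the rule is applied to cliques of $G-S$ instead of cliques of $G$ — concretely, that the modulator $S$ stays valid and its size never grows. Neither is a genuine obstacle. I note that one could instead derive the PSAKS from Theorem~\ref{cc-lossy} by remarking that $G$ itself has a clique cover of size $q+k$ (the parts of $\mathcal{Q}$ together with the singletons of $S$), but that route yields only the weaker bound $\OO\!\big((k+q)\lceil\frac{2\alpha-1}{\alpha-1}\rceil\big)$; restricting the reduction to $G-S$ is precisely what keeps the $k$ term free of the $\alpha$-dependent factor.
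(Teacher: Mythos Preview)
Your proposal is correct and follows essentially the same approach as the paper: enumerate the at most $|C_i|+1$ choices per clique together with the $2^k$ choices on $S$ for the \XP\ algorithm, and apply Reduction Rule~\ref{rule:1} to each clique of $\mathcal{Q}$ for the PSAKS. Your added justification that $\prod_i(|C_i|+1)\le e\cdot n^q$ via $q\le n$ and your remark on why Lemma~\ref{lem:safeness-rule1} carries over are welcome clarifications, but the underlying argument is the same as the paper's.
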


\begin{proof}
Without loss of generality, we assume that the graph has no isolated vertices as no minimum connected vertex cover contains them. We also assume that the set $S$ and the clique cover $\mathcal{Q}$ of $G-S$ are part of the input. First, let us describe the \XP\ algorithm. Let $X^*$ be a connected vertex cover of $G$ of size at most $\ell$. As at least $|Q|-1$ vertices from each clique $Q \in \mathcal{Q}$ are contained in $X^*$, we guess $Y=X^* \cap (\bigcup_{Q \in \mathcal{Q}} Q)$. Then, we guess $Z=X^* \cap S$ such that $((\bigcup_{Q \in \mathcal{Q}} Q) \setminus Y) \cup (S \setminus Z)$ is an independent set. The number of choices for $(Y,Z)$ is $O^*(n^q 2^k)$. If $G[Y \cup Z]$ is not connected or has more than $\ell$ vertices, we skip to the next choice of $(Y,Z)$. Otherwise, $Y \cup Z$ is the required solution. If none of the choices leads to a solution, we declare that $G$ has no connected vertex cover of size at most $\ell$. The overall running time of the algorithm is thus $\mathcal{O}^*(2^k n^q)$.

Now, we describe an $\alpha$-approximate kernel for each $\alpha>1$. Given $\alpha >0$, let $d= \lceil \frac{2\alpha-1}{\alpha-1} \rceil$. We bound the number of vertices in $G$ by applying Reduction Rule \ref{rule:1} on each $Q \in \mathcal{Q}$. From Lemma \ref{lem:safeness-rule1}, each application of this rule is $\alpha$-safe. When this rule is no longer applicable, it follows that $G$ has at most $k+q(d-1)$ vertices. This leads to the claimed result.
\end{proof}

\section{Connected Vertex Cover parameterized by Cluster Deletion Set}

In Section \ref{cvc-cliq-del-set}, we gave an $\OO^*(2^k)$ time algorithm for \cvc\ parameterized by the size $k$ of a clique deletion set. Here, we generalize this algorithm to solve \cvc\ in $\OO^*(4^k)$ time where $k$ is the size of a cluster deletion set. Observe that this is a parameter smaller than the clique deletion set size. Further, we also describe a lossy kernel with respect to this parameterization. A classical polynomial kernel is unlikely as the size of a minimum cluster deletion set is at most the size of a minimum connected vertex cover. This is due to the fact that deleting a connected vertex cover from a graph results in a cluster graph in which every component is an isolated vertex. 

\subsection{An FPT Algorithm}
Consider an instance $(G,S,\ell)$ of \cvc\ where $S$ is a cluster deletion set. Let $H$ denote the cluster graph $G-S$. Let $X^*$ be a connected vertex cover of size at most $\ell$ (if one exists) that we are looking for. First, we guess the subset $S'$ of $S$ such that $S'=S \cap X^*$. If $S \setminus S'$ is not an independent set, we skip to the next choice of $S'$. Define the set $F$ as $N(S \setminus S') \cap V(H)$. Initialize the set $X$ to be $S'$. In the latter steps of our algorithm, we will extend $X$ to a connected vertex cover of $G$. We also update $\ell$ to $\ell - |S'|$ and delete $S \setminus S'$ from $G$. We will now describe a sequence of reduction and branching rules to be applied. The rules are applied in the order stated and a rule is applied as long as it is applicable on the instance. That is, a rule is applied only when none of the preceding rules can be applied. 

Let $I$ denote the set of isolated vertices of $H$ and $\mathcal{Q}$ denote the set of vertex sets of components of $H-I$. That is, $I$ is an independent set and each element of $\mathcal{Q}$ is a clique in $H$. These sets are updated accordingly as and when the rules are applied. First, we apply the following preprocessing rule.

\begin{preprocess rule}
\label{cluster:rule0}
If there is a clique $Q \in \mathcal{Q}$ with $N(Q) \cap X=\emptyset$ or a vertex $v \in F \cap I$ such that $N(v) \cap X=\emptyset$, then skip to the next choice of $S'$.
\end{preprocess rule}

The correctness of the first part of the rule follows from the fact that at least $|Q|-1$ vertices from $Q$ has to be in any vertex cover and any set of such vertices along with $X$ cannot be extended to a connected subgraph by only adding vertices from $V(H)$. The correctness of the second part follows from the facts that $F$ is forced into the solution and $G[X \cup \{v\}]$ cannot be extended to a connected subgraph by only adding vertices from $V(H)$. Let $F$ be partitioned into sets $Y$ and $Z$ defined as follows.
\begin{itemize}
\item $Y=\{v \in F \mid N(v) \cap X \neq \emptyset\}$, the set of vertices of $F$ that have a neighbour in $X$. 
\item  $Z=F \setminus Y$, the set of vertices of $F$ that have no neighbour in $X$.
\end{itemize}

These sets are updated over the execution of the algorithm according to the current partial solution $X$. In particular, at any point of time, $Z$ is the set of vertices in $F$ that are not adjacent to any vertex in the current partial solution $X$. Our first reduction rule is as follows.

\begin{reduction rule}
\label{cluster:rule1}
Add $Y \cup Z$ to $X$, update $\ell$ to $\ell - |Z \cup Y|$ and delete $Y$ from $H$. 
\end{reduction rule}

This rule is justified by the fact that $X^*$ must contain $F$. Due to Preprocessing Rule \ref{cluster:rule0}, every vertex $v \in Z$ is in some clique $Q$ in $\mathcal{Q}$. When Reduction Rule \ref{cluster:rule1} is no longer applicable, we have $V(H) \cap F=Z$. For each $v \in Z$, let $Q_v$ denote the clique in $\mathcal{Q}$ containing $v$. The next rule is the following. 

\begin{reduction rule}
\label{cluster:rule2}
If there is a vertex $v \in Z$ with $|Q_v \cap Z|=|Q_v|-1$, then add $Q_v \setminus Z$ to $X$, update $\ell$ to $\ell -1$ and delete $Q_v$ from $H$ and $Z$. 
\end{reduction rule}

The correctness of this rule follows from the fact that there is exactly one vertex $u$ in $Q_v \cap N(X \setminus Z)$. This vertex is forced into the solution since $Q_v \setminus \{u\} \subseteq X$ and $Q_v \setminus \{u\}$ has no neighbours in $X \setminus (Q_v \setminus \{u\})$. Now, for any clique $Q \in \mathcal{Q}$, there are at least two vertices that are not in $Z$. For a vertex $v \in V(H)$, let $\CC(v) = \{A \in G[X] \mid A$ is a component of $G[X]$ and $N_G(v) \cap V(A) \neq \emptyset\}$. The next rule is the following. 

\begin{reduction rule}
\label{cluster:rule3}
If there is a clique $Q$ in $\mathcal{Q}$ that has two vertices $u,v \notin Z$ with $\CC(u) \subseteq \CC(v)$, then update $X$ to $X \cup (Q \setminus \{u\})$ and reduce $\ell$ by the number of vertices added to $X$. Delete $Q$ from $H$ and $Z$.
\end{reduction rule}

If there exists an optimal connected vertex cover $X$ that does not contain $v$, then $u \in X$ and $X' = (X \setminus \{u\}) \cup \{v\}$ is also a connected vertex cover of $G$. This justifies the correctness of the rule. The next rule is a branching rule that is applied when $H$ has a triangle or a larger clique $Q$ with vertices not in $Z$. Further, for each $v \in Q \setminus Z$, $|\CC(v)| \geq 1$ and for every $u,v \in Q \setminus Z$, $\CC(u) \not\subseteq \CC(v)$. 

\begin{branching rule}
\label{cluster:rule4}
If there is a triangle $\{u,v,w\}$ in $H$ with $u,v,w \notin Z$, then branch into adding $u,v$ or $u,w$ or $w,v$ into $X$. In each of the branches update $\ell$ to $\ell -2$.
\end{branching rule}
The branches in this rule are clearly exhaustive as any vertex cover contains at least two vertices from a triangle. The reason for handling such special triangles will be apparent during the application of subsequent rules. When none of the rules described so far is applicable, every clique $Q \in \mathcal{Q}$ has exactly 2 vertices not in $Z$. In particular, if $Q \cap Z=\emptyset$, then $|Q| = 2$. 

\begin{reduction rule}
\label{cluster:rule5}
If there is an edge $\{u,v\}$ with $u,v \notin Z$ in $H$ and $|\CC(u)| = 1$ and $|\CC(v)| \geq 1$, then update $X$ to $X \cup (Q \setminus \{u\})$ and reduce $\ell$ by the number of vertices added to $X$ where $Q$ is the clique in $\mathcal{Q}$ that contains $u$ and $v$. Delete $Q \setminus \{u\}$ from $H$ and $Z$. Add $u$ to $I$.
\end{reduction rule}

If there exists an optimal connected vertex cover $X$ that does not contain $v$, then $u \in X$ and $X' = (X \setminus \{u\}) \cup \{v\}$ is also a vertex cover of $G$. Further, $G[X']$ is connected as the only vertex in $Q \cap X$ that is adjacent to a vertex in $X \setminus \{u\}$ is $u$ and $|\CC(u)| = 1$. Note that we crucially use the property that $Q$ has exactly two vertices adjacent to a vertex in $X \setminus \{u,v\}$ which is achieved by the application of Branching Rule \ref{cluster:rule4}. This justifies the correctness of the rule. The next rule is a branching rule that is applied until $X$ is a vertex cover (not necessarily connected) of $G$. 

\begin{branching rule}
\label{cluster:rule6}
If there is an edge $\{u,v\}$ in $H$, then in one branch we add $u$ into $X$ and in other branch we add $v$ into $X$. In both branches, update $\ell$ to $\ell - 1$ and delete the vertex added to $X$ from $H$.  
\end{branching rule}

When this rule is applied on an edge $\{u,v\}$, we have $|\CC(x)|,|\CC(y)| \geq 2$ and $\{x,y\} \cap Z=\emptyset$. The branches are exhaustive as any vertex cover contains $x$ or $y$. At this point, when none of the described rules (reduction and branching) is applicable, we have a vertex cover $X$ of $G$. That is, $V(H)$ is an independent set. However, $X$ may not necessarily induce a connected subgraph. Let $G'$ be the graph obtained from $G[V(H) \cup X]$ by contracting each component of $G[X]$ into a single vertex. $G'$ is bipartite and has a bipartition $(V(H),\widehat{X})$ where $\widehat{X}$ is the set of vertices of $G'$ corresponding to the set of components of $G[X]$. The problem is now to find a Steiner tree of $G'$ that contains $\widehat{X}$ which can be solved in $\OO^*(2^{|\widehat{X}|})$ time using Lemma \ref{bip-st}. This completes the description of the algorithm leading to the following result. 

\begin{theorem}
\label{cc-cvd-fpt}
Given a graph $G$, a cluster deletion set $S$ and a positive integer $\ell$, there is an algorithm that determines whether $G$ has a connected vertex cover of size at most $\ell$ in $O^*(4^{|S|})$ time.
\end{theorem}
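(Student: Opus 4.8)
The algorithm having already been described in full, the proof is a soundness/completeness check together with a running-time estimate; here is how I would organize it.

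\emph{Correctness.} Soundness is immediate: when a branch succeeds it returns $X$ together with the vertices of a minimum Steiner tree of the bipartite graph $G'$, while the budget has been decreased by exactly the number of vertices put into $X$, so (as the final $X$ is a vertex cover of $G$ and the tree connects its components) the output is a connected vertex cover of $G$ of size at most the original $\ell$. For completeness, fix an optimal connected vertex cover $X^*$ of $G$ with $|X^*|\le\ell$, let the run guess $S'=S\cap X^*$, and at each application of Branching Rule~\ref{cluster:rule4} or~\ref{cluster:rule6} follow a child that keeps the partial solution inside some connected vertex cover of $G$ of size at most $\ell$ (agreeing with $X^*$ on $S$). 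I would maintain exactly this invariant: the branching rules are exhaustive (a vertex cover contains two of any three pairwise adjacent vertices, resp.\ an endpoint of any edge, with a short exchange argument when the current cover happens to contain all branched-on vertices), and Preprocessing Rule~\ref{cluster:rule0} and Reduction Rules~\ref{cluster:rule1}--\ref{cluster:rule5} either correctly abort a branch or add to $X$ only vertices that lie in \emph{every} connected vertex cover extending the current $X$ --- which is precisely what the justification following each rule records. Once the rules are exhausted, $V(H)$ is independent, $X$ is a vertex cover of $G$ contained in a connected vertex cover $X'$ with $|X'|\le\ell$, and $X'\setminus X\subseteq V(H)$ induces together with $G[X]$ a connected subgraph; contracting the components of $G[X]$ turns $X'\setminus X$ into a Steiner tree of $G'$ on the terminal set $\widehat X$ with at most $|X'|-|X|$ vertices. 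As Lemma~\ref{bip-st} returns a minimum such tree, this branch outputs a connected vertex cover of size at most $|X'|\le\ell$.

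\emph{Running time.} There are at most $2^{|S|}$ choices of $S'$, so it suffices to bound the work for a fixed $S'$ with $|S'|=i$ by $\OO^*(3^i)$, since $\sum_{i=0}^{|S|}\binom{|S|}{i}3^i=4^{|S|}$. Each preprocessing/reduction rule runs in polynomial time and strictly shrinks $H$, so only polynomially many occur between consecutive branching steps, and the running time is governed by the number of leaves of the branching tree and by the Steiner computation at each leaf. I would introduce a potential $\mu$ on the partial solution --- roughly, the number of components of $G[X]$ that still need to be connected --- and verify: (a) no reduction rule increases $\mu$ (a rule-by-rule check, using that every vertex added to $X$ by Reduction Rules~\ref{cluster:rule1},~\ref{cluster:rule2},~\ref{cluster:rule3} and~\ref{cluster:rule5} already has a neighbour in $X$); (b) every child produced by Branching Rule~\ref{cluster:rule4} or~\ref{cluster:rule6} has $\mu$ smaller by at least one, using the invariants stated just before those rules ($|\CC(v)|\ge 1$ and pairwise $\CC$-incomparability of the triangle's vertices for Rule~\ref{cluster:rule4}; $|\CC(u)|,|\CC(v)|\ge 2$ for Rule~\ref{cluster:rule6}), so that in each branch at least two components merge; and (c) at the moment branching starts $\mu\le i$, and at every leaf the bipartite Steiner instance passed to Lemma~\ref{bip-st} has at most $\mu$ terminals. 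Given (a)--(c), a leaf at branching depth $d$ has potential at most $i-d$ and there are at most $3^d$ such leaves, so the work for this $S'$ is $\OO^*(\sum_{d\ge 0}3^d\,2^{i-d})=\OO^*(2^i(3/2)^i)=\OO^*(3^i)$; summing over all choices of $S'$ yields $\OO^*(4^{|S|})$.

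I expect item (c) to be the main obstacle: bounding the number of components of the partial solution by $|S'|$ at the moment branching begins, even though the forced set $F=N(S\setminus S')\cap V(H)$ can be arbitrarily large. The key claim is that the only components of $G[S'\cup F]$ that can ever trigger Branching Rule~\ref{cluster:rule4} or~\ref{cluster:rule6} are those meeting $S'$ (of which there are at most $|S'|$), whereas every ``floating'' component formed by the $Z$-vertices inside a clique of $\mathcal{Q}$ none of whose vertices is adjacent to $S'$ is first absorbed into an $S'$-component by Preprocessing Rule~\ref{cluster:rule0} and Reduction Rules~\ref{cluster:rule2},~\ref{cluster:rule3} and~\ref{cluster:rule5}, before any branching rule applies. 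Pinning this down, consistently with the prescribed order of the rules, is the technical heart of the running-time analysis; the remaining steps are routine verification against the per-rule invariants recorded above.
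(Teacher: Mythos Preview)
Your approach is essentially the paper's: guess $S'=S\cap X^*$, use $\mu=\comp(G[X])$ as the measure, argue that branching drops $\mu$ and reduction rules do not raise it (with the $Z$-vertices from Reduction Rule~\ref{cluster:rule1} handled by showing they are reabsorbed before any branching fires), and finish each leaf with the bipartite Steiner computation; the arithmetic $\sum_i\binom{k}{i}\sum_{j\le i}3^j2^{i-j}=\OO^*(4^k)$ is also the paper's.

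Two small imprecisions to fix. First, your completeness invariant says Reduction Rules~\ref{cluster:rule1}--\ref{cluster:rule5} add only vertices lying in \emph{every} connected vertex cover extending $X$; this is false for Rules~\ref{cluster:rule3} and~\ref{cluster:rule5}, whose justifications are exchange arguments (``if some optimum omits $v$, swap $u$ for $v$''), so the invariant you can maintain is containment in \emph{some} optimum-size connected vertex cover agreeing with $S'$ --- which is exactly the invariant you already state for the branching rules. Second, your clause (a) asserts that every vertex added by Reduction Rule~\ref{cluster:rule1} already has a neighbour in $X$, but the $Z$-vertices do not; you yourself flag this in the final paragraph, and the paper's resolution is the one you sketch there: adding $Z$ temporarily raises $\mu$, but Preprocessing Rule~\ref{cluster:rule0} together with Rules~\ref{cluster:rule2},~\ref{cluster:rule3},~\ref{cluster:rule5} (and, if needed, a branching step on $Q_z$) force a neighbour of each $z\in Z$ that is adjacent to the old $X$ into the solution, so the net effect on $\mu$ before the \emph{next} branching is nonpositive. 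Stating (a) as ``no sequence of reduction rules between two consecutive branching steps increases $\mu$'' is the clean formulation.
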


\begin{proof}
Given an instance $(G,S,\ell)$, we first guess a subset $S'$ of $S$ that is contained in the solution. Then, we apply the reduction rules and branching rules in the order stated. We reiterate that each rule is applied as long as it is applicable. Also, a rule is applied only when none of the earlier rules is applicable. Let $X$ be the partial solution that we are trying to extend to a connected vertex cover. Initially $X$ is $S'$. The measure that we use to bound the running time is the number $\comp(G[X])$ of components of $G[X]$ which is at most $|S'|$.

On applying either of the branching rules, it is clear that $\comp(G[X])$ drops in all the branches by at least one. We will now show that the reductions rules do not increase the measure. When Reduction Rule \ref{cluster:rule2} is applied, $Q_v$ has a neighbour in $X$ due to Preprocessing Rule \ref{cluster:rule0}. Similarly, when Reduction Rule \ref{cluster:rule3} or \ref{cluster:rule5} is applied, $Q \setminus \{u\}$ has a neighbour in $X$. Hence, these three rules do not increase $\comp(G[X])$. Now consider Reduction Rule \ref{cluster:rule1}. Adding $Y$ to $X$ do not increase $\comp(G[X])$ while adding $Z$ to $X$ will definitely increase it. However, if $Z$ is non-empty, then in subsequent rules, some neighbour of $Z$ which is also adjacent to some vertex in (current) $X$ is added to the solution. Consider a vertex $z \in Z$ on which Reduction Rule \ref{cluster:rule2} is applicable. Then, we add a neighbour $z'$ of $z$ that is adjacent to $X$. Therefore, the measure does not increase. In fact, the measure first increases by 1 (due to addition of $z$ to $X$) and then decreases by at least 1 (due to addition of $z'$ to $X$). Suppose Reduction Rule \ref{cluster:rule2} is not applicable on a vertex $z \in Z$. Then, $Q_z$ has at least 2 vertices not in $Z$. If Reduction Rule \ref{cluster:rule3} is applicable on $Q_z$, then adding $z$ into $X$ does not increase the measure. Let us consider the case when Reduction Rule \ref{cluster:rule3} is not applicable on $Q_z$. Then, either Branching Rule \ref{cluster:rule4} or Reduction Rule \ref{cluster:rule5} or Branching Rule \ref{cluster:rule6} is applicable to vertices of $Q_z$. In any case, as we add a neighbour of $z$ that is adjacent to $X$, the  measure does not increase. 

When none of the reduction and branching rules is applicable, we have a set $X$ that is a vertex cover of $G$. That is, at each leaf of the recursion tree, we have a vertex cover $X$ of $G$ that needs to be connected by adding a minimum number of vertices from $V(H)$. We construct the bipartite graph $G'$ with bipartition $(V(H),\widehat{X})$ as described earlier. Here, $\widehat{X}$ is the set of vertices of $G'$ corresponding to the set of components of $G[X]$. Observe that $\comp(G[X])$ is at most $|S'|$.
Then, the time taken at each leaf of the recursion tree is upper bounded by the time taken to find a Steiner tree of $G'$ that contains $\widehat{X}$ which is $\OO^*(2^{|\widehat{X}|})$ from Lemma \ref{bip-st}. Let $|S'|=i$. If a leaf is at depth $j$ where $j \leq i$, it follows that the $\comp(G[X])$ is at most $i - j$. Then, the Steiner tree algorithm at this leaf runs in $\OO^*(2^{i-j})$ time. 

Let $T(j)$ denote the number of leaves of the search tree at depth $j$. It is easy to verify that the search tree is a ternary tree as there are at most three branches in any rule. Thus, the number of leaves at depth $j$ is at most $3^j$. Therefore, the running time (ignoring polynomial factors) of the algorithm is upper bounded by the following value. Let $k$ denote $|S|$.
$$\sum\limits_{i=0}^k {{k}\choose{i}} \sum\limits_{j=0}^i 3^j 2^{i-j} \leq \sum\limits_{i=0}^k i{{k}\choose{i}} 3^i= 3 \sum\limits_{i=1}^k k{{k-1}\choose{i-1}} 3^{i-1} = 3k\cdot  4^{k-1}$$
Thus, the running time of the algorithm is $\OO^*(4^k)$.
\end{proof}

A degree-$i$ modulator is a set of vertices whose deletion results in a graph with maximum degree at most $i$. As a degree-$1$ modulator is also a cluster deletion set, it follows that \cvc\ can be solved in $O^*(4^{|S|})$ time when given a degree-$1$ modulator $S$ as part of input. However, observe that this algorithm runs in $\OO^*(3^k)$ time as branching rule \ref{cluster:rule6} is never applicable (and so the search tree is a binary tree). Thus, we have the following result.
\begin{theorem}
Given a graph $G$, a degree-$1$ modulator $S$ and a positive integer $\ell$, there is an algorithm that determines whether $G$ has a connected vertex cover of size at most $\ell$ in $O^*(3^{|S|})$ time.
\end{theorem}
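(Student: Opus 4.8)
The plan is to invoke the algorithm behind Theorem~\ref{cc-cvd-fpt} and argue that, when the modulator has the stronger property of being a degree-$1$ modulator, the search tree it produces is binary rather than ternary.

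First I would observe that a graph of maximum degree at most $1$ is a disjoint union of edges and isolated vertices, hence a cluster graph in which every component is a $K_1$ or a $K_2$. Therefore $S$ is in particular a cluster deletion set and the $\OO^*(4^{|S|})$ algorithm of Theorem~\ref{cc-cvd-fpt} applies unchanged; it only remains to sharpen its running-time analysis in this restricted setting.

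Next I would track the structure of $H$ throughout the execution. Deleting $S\setminus S'$ from $G$ does not affect $H=G-S$, which has maximum degree at most $1$ by hypothesis, so initially every clique in $\mathcal{Q}$ has exactly two vertices and $H$ is triangle-free. I would then check that the invariant ``$H$ has maximum degree at most $1$'' is maintained by every rule that touches $H$: Preprocessing Rule~\ref{cluster:rule0} only aborts a branch, and Reduction Rules~\ref{cluster:rule1}, \ref{cluster:rule2}, \ref{cluster:rule3} and \ref{cluster:rule5} only delete vertices from $H$ (with Rule~\ref{cluster:rule5} additionally moving one vertex into the isolated set $I$); none of them adds an edge inside $V(H)$. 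Consequently $H$ never acquires a triangle, so the ternary Branching Rule~\ref{cluster:rule4} is never applicable, and the only branching that can occur is the binary Branching Rule~\ref{cluster:rule6}. The measure argument from the proof of Theorem~\ref{cc-cvd-fpt} (that reduction rules do not increase $\comp(G[X])$ and each branch drops it by at least one) carries over verbatim, so a leaf at depth $j$ of the search tree carries a partial solution $X$ with $\comp(G[X])\le |S'|-j$, and there are at most $2^{j}$ leaves at depth $j$.

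Finally I would recompute the running-time sum. At a leaf at depth $j$ with $|S'|=i$, the bipartite Steiner tree step costs $\OO^*(2^{i-j})$ by Lemma~\ref{bip-st}, so summing over leaves of a fixed guess $S'$ of size $i$ gives $\sum_{j=0}^{i} 2^{j}\cdot 2^{i-j}=(i+1)2^{i}$, and summing over all guesses,
\[
  \sum_{i=0}^{k}\binom{k}{i}(i+1)2^{i}
    =\sum_{i=0}^{k}\binom{k}{i}2^{i}+\sum_{i=0}^{k}i\binom{k}{i}2^{i}
    =3^{k}+2k\cdot 3^{k-1}=\OO^{*}(3^{k}),
\]
as claimed. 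The only non-routine step is the structural invariant showing that no triangle is ever created in $H$, which is what makes Branching Rule~\ref{cluster:rule4} vacuous and collapses the search tree to a binary one; the rest is a re-evaluation of the geometric sums already appearing in the proof of Theorem~\ref{cc-cvd-fpt}.
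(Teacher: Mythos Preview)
Your proof is correct and follows the same strategy as the paper: invoke the cluster-deletion-set algorithm of Theorem~\ref{cc-cvd-fpt} and argue that the search tree is binary in this special case, then redo the running-time sum. The paper's one-line justification actually names Branching Rule~\ref{cluster:rule6} as the rule that never fires, but that appears to be a slip---Rule~\ref{cluster:rule6} is already binary and can still apply to a $K_2$ whose two vertices see incomparable sets of components of $G[X]$; your identification of the ternary triangle rule, Branching Rule~\ref{cluster:rule4}, as the vacuous one is the correct reason the tree is binary.
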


It is well known that the treewidth ($tw$) of a graph is not larger than the size of its minimum vertex cover, and hence of its minimum connected vertex cover. Therefore, a naive dynamic programming routine over a tree decomposition of width $tw$ solves \cvc\ in $\OO^*(2^{tw \log tw})$ time \cite{MOS05}. The running time bound has been improved to $\OO^*(2^{\OO(tw)})$ using algebraic techniques \cite{BCKN15}. If the maximum degree of a graph $G$ is upper bounded by 2, then every component of $G$ is an induced path or an induced cycle. Hence, the treewidth of $G$ is at most $2$. Therefore, if $G$ has a degree-$2$ modulator of size at most $k$, then the treewidth of $G$ is at most $k+2$. This shows that \cvc\ is \FPT\ when parameterized by the size of a degree-$2$ modulator. As \cvc\ is \NP-hard on graphs with maximum degree at most $4$, it is clear that \cvc\ when parameterized by the size of a degree-$4$ modulator is para-\NP-hard (\NP-hard for fixed values of parameter). This observation doesn't immediately carry over when parameterized by the size of a degree-$3$ modulator as \cvc\ is polynomial time solvable in sub-cubic graphs \cite{UKG88}. Nevertheless, \cvc\ is para-\NP-hard even when parameterized by the size of a degree-$3$ modulator. This is due to the fact that \textsc{Vertex Cover} in sub-cubic graphs is \NP-complete \cite{GJ79,AK2000} and \textsc{Vertex Cover} reduces to \cvc\ by just adding a universal vertex (which is a degree-$3$ modulator) to the graph. 

\begin{theorem}
\label{thm:cvc-deg-i-mod}
{\sc Connected Vertex Cover} parameterized by the size of a degree-$i$ modulator is \FPT\ when $i \leq 2$ and para-$NP$-hard for $i \geq 3$.
\end{theorem}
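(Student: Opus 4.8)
The plan is to establish the two parts of Theorem~\ref{thm:cvc-deg-i-mod} separately, since the tractability part rests on a treewidth bound and the hardness part on a simple reduction.

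\textbf{The FPT part ($i \leq 2$).} The key observation is that a graph of maximum degree at most $2$ is a disjoint union of paths and cycles, and every such graph has treewidth at most $2$. Hence if $S$ is a degree-$2$ modulator of $G$ with $|S| \le k$, one can take any tree decomposition of $G - S$ of width at most $2$ and add all of $S$ to every bag; this yields a tree decomposition of $G$ of width at most $k + 2$. (Correctness: the first two bag conditions are immediate since every edge of $G$ either lies inside $G-S$ or has an endpoint in $S$, which is now in every bag; connectivity of the subtree for each vertex of $S$ is trivial as that subtree is the whole tree.) Since $tw(G) \le k+2$, we may invoke the known single-exponential dynamic programming algorithm for \cvc\ on graphs of bounded treewidth, namely the $\OO^*(2^{\OO(tw)})$ algorithm of~\cite{BCKN15} (or the $\OO^*(2^{tw \log tw})$ algorithm of~\cite{MOS05}), to solve the instance in $\OO^*(2^{\OO(k)})$ time. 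The case $i=1$ is subsumed, and $i=0$ trivially: $G-S$ is edgeless, treewidth $\le 1$.

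\textbf{The para-NP-hardness part ($i \geq 3$).} Here it suffices to exhibit a polynomial-time reduction that, starting from an NP-hard instance, produces a \cvc\ instance together with a degree-$3$ modulator, and then note that the same instance is also a valid degree-$i$ modulator instance for every $i \ge 3$ (since a degree-$3$ modulator is in particular a degree-$i$ modulator). We use the classical fact that \textsc{Vertex Cover} restricted to subcubic (maximum degree $\le 3$) graphs is NP-complete~\cite{GJ79,AK2000}. Given a subcubic graph $G$ and integer $\ell$, form $G^+$ by adding a single new vertex $z$ adjacent to all vertices of $G$. Then $G$ has a vertex cover of size $\ell$ iff $G^+$ has a connected vertex cover of size $\ell+1$: any vertex cover $W$ of $G$ gives $W \cup \{z\}$, which covers every edge of $G^+$ and is connected because $z$ is universal in $G^+$; conversely, from any connected vertex cover $W'$ of $G^+$ of size $\ell+1$ we may assume $z \in W'$ (if not, all neighbours of $z$, i.e. all of $V(G)$, lie in $W'$, which is already at least $|V(G)| \ge \ell+1$ and contains a vertex cover of $G$ of the right size after discarding one vertex if needed), and then $W' \setminus \{z\}$ is a vertex cover of $G$ of size $\ell$. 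The single vertex $z$ is a degree-$3$ modulator of $G^+$ since $G^+ - \{z\} = G$ has maximum degree at most $3$. Thus \cvc\ is NP-hard already when the parameter (the modulator size) equals $1$, which rules out an XP algorithm unless $\P = \NP$, i.e. the problem is para-NP-hard, for every $i \ge 3$.

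\textbf{Expected main obstacle.} There is essentially no deep obstacle: both halves are short. The only point requiring a little care is the hardness direction of the \textsc{Vertex Cover}-to-\cvc\ equivalence — specifically handling the degenerate possibility that an optimal connected vertex cover of $G^+$ avoids the universal vertex $z$ — but this is easily dispatched as sketched above, and in fact the excerpt already invokes this ``add a universal vertex'' reduction twice (in Observation~\ref{obs:cvc-clique-no-poly} and after Theorem~\ref{thm:cvc-w-hard}), so it can be cited rather than re-proved. One should also remark explicitly that the boundary case $i=3$ is genuinely covered despite \cvc\ being polynomial-time solvable on subcubic graphs~\cite{UKG88}: the point is that the modulator-$3$ instance $G^+$ is itself \emph{not} subcubic (it has the high-degree vertex $z$), so subcubic tractability does not help.
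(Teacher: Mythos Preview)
Your proposal is correct and follows essentially the same approach as the paper: the FPT part uses the identical treewidth argument (graphs of maximum degree $2$ have treewidth at most $2$, so $tw(G)\le k+2$, then apply \cite{BCKN15} or \cite{MOS05}), and the hardness part uses the same reduction from \textsc{Vertex Cover} on subcubic graphs via a universal vertex. The only cosmetic difference is that the paper handles $i\ge 4$ separately by citing the \NP-hardness of \cvc\ on degree-$4$ graphs, whereas you dispatch all $i\ge 3$ at once by observing that a degree-$3$ modulator is automatically a degree-$i$ modulator; your route is marginally cleaner.
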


It is intriguing that though \cvc\ is polynomial-time solvable on sub-cubic graphs, it is \NP-hard on graphs that are just one vertex away from being a sub-cubic graph. Note that the related \textsc{Feedback Vertex Set} problem is also solvable in polynomial-time on sub-cubic graphs, but it is a major open problem whether it is polynomial-time solvable on graphs that have a degree-$3$ modulator of size 1. 

\subsection{A Lossy Kernel}
In this section, we give a PSAKS for \cvc\ parameterized by the size of a cluster deletion set. We formally define the parameterized minimization problem as follows.

$CVC((G, S),k,T) = \begin{cases} -\infty &\mbox{if } |S| > k \mbox{ or some component of } G-S \mbox{ is not a } \mbox{clique}\\
\infty &\mbox{if } T \mbox{ is not a connected vertex cover }\\
|T| &\mbox{otherwise } \end{cases}$

\noindent We now prove the following main result of this section.

\begin{theorem}
\label{thm:cluster-lossy}
\cvc\ parameterized by the size $k$ of a cluster deletion set admits a time efficient PSAKS with $\OO(k^2+\lceil \frac{2\alpha-1}{\alpha-1} \rceil \cdot \frac{k}{\alpha-1}+ \lceil \frac{\alpha}{\alpha-1} \rceil \cdot k^{\lceil \frac{\alpha}{\alpha-1} \rceil})$ vertices. 
\end{theorem}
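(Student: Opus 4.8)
The plan is to mimic the structure of the split deletion set kernelization, but now working component-by-component over the cluster graph $H = G - S$, and combining it with the known high-degree/false-twin rules from \cite{LPRS16}. First I would establish a linear bound on $\textsc{OPT}$: since $S$ hits every edge not inside a clique of $H$, taking $S$ together with all but one vertex of each clique of $H$ that has a neighbour in $S$, and then adding at most $|S|$ connector vertices, yields a connected vertex cover. The catch is that cliques of $H$ can be large, so before this works I must first shrink each clique. So the first reduction rule is the clique-compression rule (an analogue of Reduction Rule~\ref{rule:1}): for each clique $Q$ of $H$, if $|Q| \geq d$ where $d = \lceil \frac{\alpha}{\alpha-1} \rceil$, replace $Q$ by a single representative vertex $u_Q$ with $N(u_Q) = N(Q) \setminus Q$; the solution lifting algorithm re-expands $u_Q$ to $Q$ minus one vertex, and the ratio $\frac{|Q|}{|Q|-1} \leq \frac{d}{d-1} \leq \alpha$ gives $\alpha$-safeness exactly as in Lemma~\ref{lem:safeness-rule1}. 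After this rule, $|V(H)| \leq$ (number of cliques) $\cdot (d-1)$, and crucially each clique now has size at most $d-1$, so the linear $\textsc{OPT}$ bound of the form $\textsc{OPT} \leq c_1 k + c_2$ (say $2k + d - 1$ style, but with the corrected $d$) goes through.

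With a linear bound $\textsc{OPT}((G,S),k) \leq O(k)$ in hand, every vertex of degree at least this bound lies in every optimal solution, so I partition $V(G)$ into $B$ (high-degree vertices), $I_B$ (vertices whose neighbourhood is contained in $B$), and $R = V(G) \setminus (B \cup I_B)$, exactly as in the split case. The set $B$ is bounded by $O(k)$ since $B$ is contained in any connected vertex cover of size $O(k)$. For $R$: every vertex of $R$ has an $R$-neighbour and $R$ is covered by $B \cup S$ restricted appropriately — more carefully, $R \cap (B\text{-free part})$ has bounded degree, and a vertex cover of $G[R]$ of size $O(k)$ exists, so $|E(G[R])| = O(k^2)$ and hence $|R| = O(k^2)$; this is where the $k^2$ term comes from. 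For $I_B$ I re-use Reduction Rules~\ref{rule:cvc-high-degree} and \ref{rule:remove-redundant} from \cite{LPRS16}: first bound the degree of each vertex of $I_B$ outside $S$ by $d' = \lceil \frac{2\alpha-1}{\alpha-1} \rceil$ via the high-degree contraction rule (contributing the $\lceil \frac{2\alpha-1}{\alpha-1}\rceil \cdot \frac{k}{\alpha-1}$-type term through the $W$-padding of size $O(k/(\alpha-1))$ that rule adds per application), then bound the number of false twins of each such vertex by $O(k/(\alpha-1))$, so that the number of distinct neighbourhood types in $\binom{B}{\leq d}$ times the multiplicity gives $O(k/(\alpha-1)) \cdot \binom{O(k)}{d} = O(\lceil \frac{\alpha}{\alpha-1}\rceil \cdot k^{\lceil \frac{\alpha}{\alpha-1}\rceil})$ vertices in $I_B \cap V(G-S)$, plus the trivial $|I_B \cap S| \leq k$. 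Summing $|B| + |R| + |I_B|$ gives the claimed $\OO(k^2 + \lceil \frac{2\alpha-1}{\alpha-1}\rceil \cdot \frac{k}{\alpha-1} + \lceil \frac{\alpha}{\alpha-1}\rceil \cdot k^{\lceil \frac{\alpha}{\alpha-1}\rceil})$ bound, and time efficiency is immediate since every rule runs in polynomial time and each lifting step is polynomial.

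The main obstacle, and the place where I expect the argument to require genuine care rather than routine bookkeeping, is verifying $\alpha$-safeness of the clique-compression rule and the high-degree rule \emph{simultaneously} with maintaining the invariant that $S$ remains a cluster deletion set of bounded size. Unlike the split case, where contracting a clique kept one clique in the residual split graph, here contracting a clique $Q \subseteq V(H)$ to $u_Q$ must not create edges that make $G' - S$ non-cluster; since $N_H(Q) \setminus Q$ lies in distinct cliques or is empty (cliques of $H$ are its connected components minus isolated vertices), the contracted vertex $u_Q$ is an isolated vertex of $H' = G' - S$ provided it has no $H$-neighbours outside $Q$ — but if $Q$ had neighbours in other components of $H$, those edges went through $S$ originally, a contradiction, so $u_Q$ is indeed isolated in $H'$ and the cluster structure is preserved. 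Similarly, Reduction Rule~\ref{rule:cvc-high-degree} applied to $u \in I_B \cap V(G-S)$ deletes $N_G[u]$ and introduces $w$ with a padded pendant clique $W$; I must check $w$ can be placed into the independent-set side and $W$ forms a clique component, and that $|N(u) \cap S| \geq 1$ (guaranteeing the modulator does not grow), which holds because after clique-compression every clique of $H$ has size $\leq d-1 < d \leq d_G(u)$ forces $u$ to have a neighbour in $S$. Once these structural invariants are nailed down, the ratio computations are exactly those already carried out in Lemmas~\ref{lem:safeness-rule1}, \ref{lem:safeness-cvc-high-degree}, and \ref{lem:safeness-remove-redundant}, so I would state them as such and only spell out the new verifications.
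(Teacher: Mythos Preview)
Your plan has a genuine gap at its first step. After the clique-compression rule you assert that ``the linear $\textsc{OPT}$ bound of the form $\textsc{OPT} \leq c_1 k + c_2$ \ldots\ goes through.'' It does not. Compression bounds the size of each surviving clique by $d-1$, but it does nothing to bound the \emph{number} of cliques in $H = G - S$. If $H$ consists of $m$ disjoint edges with $m$ arbitrary (and $d \geq 3$, which happens for any $\alpha < 2$), none of these edges is compressed, and any vertex cover must pick at least one endpoint from each, so $\textsc{OPT} \geq m$ --- not $O(k)$. Without a bound of the form $\textsc{OPT} = O(k)$ you cannot define $B$ as the set of high-degree vertices guaranteed to lie in every optimum, and the whole $B/I_B/R$ machinery collapses. (Relatedly, your attribution of the $\lceil \frac{2\alpha-1}{\alpha-1}\rceil \cdot \frac{k}{\alpha-1}$ term to the $W$-padding of Rule~\ref{rule:cvc-high-degree} is off; see below for where it actually comes from.)

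The paper handles exactly this obstacle with a case split you never make. Writing $t$ for the number of non-singleton cliques of $H$ and $\epsilon = \alpha - 1$: if $t \geq 2k/\epsilon$, then since $\textsc{OPT} \geq t$ and the easy connected vertex cover $T$ (take $S$, all but one vertex from each non-singleton clique, and at most $k$ connectors) satisfies $|T| < 2k + \textsc{OPT}$, one gets $|T| \leq (1+\epsilon)\,\textsc{OPT}$; the reduction algorithm simply outputs a constant-size instance and the lifting algorithm returns $T$. Only when $t < 2k/\epsilon$ does the paper compress the cliques --- now the non-isolated part $F_1$ has at most $d_1 \cdot 2k/\epsilon$ vertices, which is precisely the $\lceil \frac{2\alpha-1}{\alpha-1}\rceil \cdot \frac{k}{\alpha-1}$ term --- and then turn to the isolated vertices $F_0$ of $H$. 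Even there the argument differs from your global $B/I_B/R$ split: the paper sets $S_1 = \{x \in S : |N(x) \cap F_0| \geq 2k\}$, shows $S_1$ lies in every optimum (otherwise the optimum would exceed $|T|$), and partitions $F_0$ into $A_0 = \{v : N(v) \subseteq S_1\}$ (bounded via Rules~\ref{rule:cvc-high-degree} and \ref{rule:remove-redundant} with threshold $d_2 = \lceil \frac{\alpha}{\alpha-1}\rceil$, giving the $k^{d_2}$ term) and $B_0 = F_0 \setminus A_0$ (each such vertex has a neighbour in $S \setminus S_1$, hence $|B_0| \leq 2k^2$, the source of the $k^2$ term).
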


\begin{proof}
Let $G$ be a connected graph and $S$ denote its cluster deletion set of size at most $k$. Given $\alpha >1$, define $\epsilon=\alpha-1$. Let $H$ denote the cluster graph $G-S$ and $F_0$ be the set of isolated vertices in $H$. Let  $F_1 = V(H) \setminus F_0$ and $t$ denote the number of components in $G[F_1]$. Let $C_1,\cdots,C_t$ denote the components of $G[F_1]$. Let $d_1 = \lceil \frac{2\alpha-1}{\alpha-1} \rceil$ and $d_2 = \lceil \frac{\alpha}{\alpha-1} \rceil$. Define the set $T$ as follows. We first initialize $T$ to $S$. Then, for each $i \in [t]$, we add a set $X_i$ of $|V(C_i)| - 1$ vertices of $C_i$ such that $N_G(X_i) \cap S \neq \emptyset$ to $T$. Such a set always exists as $G$ is connected. Now, $T$ is a vertex cover of $G$ and $|T| = |S| + \sum\limits_{i=1}^{t} (|V(C_i)| - 1)$. Further, $\comp(G[T])=\comp(G[S])$. If $G[T]$ is not connected, then we add at most $|S|-1$ vertices from $V(G) \setminus T$ to $T$ so that $G[T]$ becomes connected. Once again, such a set exists as $G$ is connected. Thus, we have $|T| < 2k + \sum\limits_{i=1}^{t} (|V(C_i)| - 1)$. We know that $OPT((G,S),k) \geq \sum\limits_{i=1}^{t} (|V(C_i)| - 1)$ as each $V(C_i)$ is a clique and any vertex cover excludes at most one vertex from any clique. Therefore, $|T| < 2k +OPT((G,S),k)$. If $t \geq \frac{2k}{\epsilon}$, then the reduction algorithm outputs a constant size instance $(G',S')$. Since $OPT((G,S),k) \geq t$, we have $k \leq \frac{\epsilon}{2}OPT((G,S),k)$ and it follows that $|T|$ is at most $(1+\epsilon)OPT((G,S),k)$. Equivalently, $CVC((G, S),k,T) \leq (1+\epsilon)OPT((G,S),k)$. The solution lifting algorithm simply returns $T$ which is obtained in polynomial time. 


On the other hand, suppose $t < \frac{2k}{\epsilon}$. For each $i \in [t]$, we apply Reduction Rule~\ref{rule:1} with $d=d_1$ to bound the number of vertices in $C_i$ by $d_1$. From Lemma \ref{lem:safeness-rule1}, we know that Reduction Rule~\ref{rule:1} is $\alpha$-safe. When this rule is no longer applicable, we have at most $d_1 \frac{2k}{\epsilon}$ vertices in $F_1$. Now, it remains to bound the number of vertices in $F_0$. Observe that any optimal connected vertex cover must contain any vertex $x \in S$ with $|N_G(x) \cap F_0| \geq 2k$. This is because, if some optimal connected vertex cover $T^*$ excludes $x$, then $|T^*| \geq 2k + \sum\limits_{i=1}^{t} (|V(C_i)| - 1)$. However, $T$ is a connected vertex cover with $|T| < 2k + \sum\limits_{i=1}^{t} (|V(C_i)| - 1)$ leading to a contradiction. Let $S_1 = \{x \in S \mid |N_G(x) \cap F_0| \geq 2k\}$. We first partition $F_0$ into $A_0= \{v \in F_0 \mid N_G(v) \subseteq S_1\}$ and  $B_0=F_0 \setminus A_0$. Then, we apply Reduction Rule~\ref{rule:cvc-high-degree} to vertices in $A_0$ with $d=d_2$ and Reduction Rule~\ref{rule:remove-redundant} to vertices in $A_0$ with $d=0$. Reduction Rule~\ref{rule:cvc-high-degree} is $\alpha$-safe and Reduction Rule~\ref{rule:remove-redundant} is $1$-safe from Lemmas \ref{lem:safeness-cvc-high-degree} and \ref{lem:safeness-remove-redundant}.

Suppose none of the described rules is applicable on the instance $(G,S)$. We will show that $|V(G)| = \OO(k^2+ d_2 \cdot k^{d_2}+d_1 \cdot k)$. The set $V(G)$ is partitioned into sets $S$, $F_1$, $A_0$ and $B_0$. As $G$ is connected, $G$ has no isolated vertex. As Reduction Rule~\ref{rule:1} is not applicable, each component in $G[F_1]$ has at most $d_1$ vertices. Thus, $|V(F_1)| \leq \frac{2k}{\epsilon}\cdot d_1$. For any vertex $v \in F_0$, there is a vertex $x \in S$ such that $\{x,v\} \in E(G)$. Since the reduction rules described are not applicable, every vertex $x \in A_0$ has at most $d_2-1$ neighbors in $S_1$. Further, for every $X \in {{S_1}\choose{\leq d_2-1}}$, there are at most $2k$ vertices from $A_0$ such that each of those $2k$ vertices has $X$ as its neighborhood in $G$. Therefore, $|A_0| \leq 2k\cdot \sum\limits_{i=1}^{d_2 - 1}{{k}\choose{i}} \leq 2(d_2-1)k^{d_2}$ as $|S_1| \leq k$. As $G$ is connected, each vertex in $B_0$ has a neighbour in $S \setminus S_1$. Since, the degree of any vertex in $S \setminus S_1$ is at most $2k$, it follows that $|B_0| \leq 2k^2$. Thus, $|V(G)|=|S|+|F_1|+|A_0|+|B_0|$ is at most $k+\frac{2k}{\epsilon}\cdot d_1+2(d_2-1)k^{d_2}+2k^2$ which is $\OO(k^2+\lceil \frac{2\alpha-1}{\alpha-1} \rceil \cdot \frac{k}{\alpha-1}+ \lceil \frac{\alpha}{\alpha-1} \rceil \cdot k^{\lceil \frac{\alpha}{\alpha-1} \rceil})$.
\end{proof}
As the deletion of a degree-$1$ modulator results in a graph in which every component has at most 2 vertices, we have the following result as a consequence of Theorem \ref{thm:cluster-lossy}.
\begin{corollary}
\cvc\ parameterized by the size $k$ of a degree-$1$ modulator admits a time efficient PSAKS with $\OO(k^2+\frac{k}{\alpha-1}+ \lceil \frac{\alpha}{\alpha-1} \rceil \cdot k^{\lceil \frac{\alpha}{\alpha-1} \rceil})$ vertices.
\end{corollary}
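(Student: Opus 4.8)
The plan is to obtain this as a direct specialization of Theorem~\ref{thm:cluster-lossy}, exploiting the extra structure forced by a degree-$1$ modulator. First I would observe that if $S$ is a degree-$1$ modulator of $G$, then $G-S$ has maximum degree at most $1$, so every connected component of $G-S$ is either an isolated vertex or a single edge. In particular $S$ is a cluster deletion set of $G$ in which every clique component of $H:=G-S$ has at most two vertices. Consequently the entire kernelization pipeline from the proof of Theorem~\ref{thm:cluster-lossy}---the partition of $V(G)$ into $S$, $F_1$ (the non-isolated vertices of $H$), $A_0$ and $B_0$, the case split on whether the number $t$ of non-trivial components of $H$ satisfies $t \geq \frac{2k}{\epsilon}$ with $\epsilon := \alpha-1$, and Reduction Rules~\ref{rule:1}, \ref{rule:cvc-high-degree} and~\ref{rule:remove-redundant}---carries over essentially verbatim (assuming as usual that $G$ has no isolated vertices), so it remains only to re-examine each term of the size bound.

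The only part that improves is the bound on $|F_1|$. In the general cluster case one applies Reduction Rule~\ref{rule:1} with $d = d_1 = \lceil \frac{2\alpha-1}{\alpha-1}\rceil$ to shrink each component $C_i$ to at most $d_1$ vertices, yielding $|F_1| \leq d_1 \cdot \frac{2k}{\epsilon}$. Here each $C_i$ already has at most two vertices, so Reduction Rule~\ref{rule:1} is simply never triggered on $H$, and in the branch $t < \frac{2k}{\epsilon}$ we get the cleaner estimate $|F_1| \leq 2t < \frac{4k}{\epsilon} = \mathcal{O}\!\left(\frac{k}{\alpha-1}\right)$, with no dependence on $\alpha$ beyond the $\frac{1}{\alpha-1}$ factor. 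All other estimates are unchanged: $|S| \leq k$; the non-applicability of Reduction Rules~\ref{rule:cvc-high-degree} and~\ref{rule:remove-redundant} on $A_0$ (with parameters $d = d_2 = \lceil \frac{\alpha}{\alpha-1}\rceil$ and $d=0$ respectively) gives $|A_0| \leq 2(d_2-1)k^{d_2} = \mathcal{O}\!\left(\lceil \tfrac{\alpha}{\alpha-1}\rceil \cdot k^{\lceil \alpha/(\alpha-1)\rceil}\right)$; and $|B_0| \leq 2k^2$ since every vertex of $B_0$ has a neighbour in $S \setminus S_1$ and each vertex of $S \setminus S_1$ has degree less than $2k$. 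Summing, $|V(G)| = \mathcal{O}\!\left(k^2 + \frac{k}{\alpha-1} + \lceil \tfrac{\alpha}{\alpha-1}\rceil \cdot k^{\lceil \alpha/(\alpha-1)\rceil}\right)$, and since every reduction rule used is $\alpha$-safe (Lemmas~\ref{lem:safeness-rule1}, \ref{lem:safeness-cvc-high-degree}, \ref{lem:safeness-remove-redundant}) and the reduction and solution lifting algorithms of Theorem~\ref{thm:cluster-lossy} run in polynomial time, we obtain a time-efficient PSAKS.

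There is no real obstacle here; the one bookkeeping point worth checking is that dropping Reduction Rule~\ref{rule:1} does not invalidate any invariant relied on later in the proof of Theorem~\ref{thm:cluster-lossy} (for instance that $F_1$ still consists of cliques, that $S$ remains a valid modulator, and that properties such as $\comp(G[S])$ used in the \FPT\ algorithm are irrelevant to the kernelization). Since the rule is merely never applicable when all components have size at most two, these invariants hold trivially, and the corollary follows.
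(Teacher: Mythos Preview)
Your proposal is correct and follows exactly the paper's approach: the paper simply states that since deletion of a degree-$1$ modulator leaves components of size at most two, the result follows from Theorem~\ref{thm:cluster-lossy}. You have merely spelled out the one place where the bound improves (the $|F_1|$ term no longer needs the $d_1$ factor because Reduction Rule~\ref{rule:1} is vacuously inapplicable), which is precisely the intended reading of the paper's one-line proof.
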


\section{Connected Vertex Cover parameterized by Chordal Deletion Set}
In this section, we show that \cvc\ parameterized by the size of a chordal deletion set is \FPT. It is well known that the treewidth ($tw$) of a graph is at most one more than the size of a minimum feedback vertex set (a set of vertices whose removal results in a forest). This immediately implies that \cvc\ is \FPT\ when parameterized by the size of a feedback vertex set. Recall that a chordal graph is a graph in which every induced cycle is a triangle. As forests, split graphs and cluster graphs are chordal, the minimum chordal deletion set size is at most a minimum feedback vertex set, a minimum split deletion set size and a minimum cluster deletion set. 

\begin{theorem}
Given a graph $G$, a chordal deletion set $S$ and a positive integer $\ell$, there is an algorithm that determines whether $G$ has a connected vertex cover of size at most $\ell$ in $O^*(2^{|S| \log |S|})$ time.
\end{theorem}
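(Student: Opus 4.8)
The plan is to combine the familiar ``guess the intersection with the modulator'' trick with the observation that, although $G-S$ can have arbitrarily large treewidth, it is chordal, so connectivity information inside a clique is trivial and a clique tree of $G-S$ can be turned into a decomposition on which the connectivity dynamic program underlying the known bounded-treewidth algorithm for \cvc\ has only $2^{\OO(k\log k)}$ effective states. Concretely, let $H:=G-S$ be chordal with $|S|=k$ and assume $G$ is connected. First I would enumerate all $2^{|S|}$ candidate sets $Z$ for $X^*\cap S$, where $X^*$ is a hypothetical minimum connected vertex cover. For each $Z$: discard it unless $S\setminus Z$ is an independent set; otherwise put $Z$ together with the forced set $R:=N_G(S\setminus Z)\cap V(H)$ into a partial solution. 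It then remains to compute a minimum $J\subseteq V(H)\setminus R$ such that $R\cup J$ is a vertex cover of the chordal graph $H$ and $G[Z\cup R\cup J]$ is connected; over all choices of $Z$ this yields $\textsc{OPT}$.

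For this residual problem I would run a dynamic program along a clique tree $(\mathbb{T},\{K_t\})$ of $H$ (each $K_t$ a maximal clique, adjacent bags meeting in a minimal separator), which is a tree decomposition of $H$, so $(\mathbb{T},\{K_t\cup S\})$ is a tree decomposition of $G$. Its width is unbounded, but the only data a bag $K_t\cup S$ must pass towards the root are: (i) which vertices of $S$ currently lie in the partial solution $X$; (ii) the partition of those vertices, augmented by a single token $\ast$ standing for ``$K_t\cap X$'', into the components of $G[X]$ restricted to the boundary — since $K_t$ is a clique, all of $K_t\cap X$ lies in one block, so this is a partition of a set of size at most $k+1$; and (iii) the number of vertices used so far, plus one bit recording whether the (at most one) omitted vertex of $K_t$ has already been spent, the ``omit'' choice being resolved greedily for simplicial vertices. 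The number of states is thus $B_{k+1}\cdot n^{\OO(1)}=2^{\OO(k\log k)} n^{\OO(1)}$, and the introduce/forget/merge transitions — which also account for vertices of $R$ as they enter a bag — are polynomial, mirroring the partition-based \cvc\ recurrence from the bounded-treewidth setting. Multiplying by the $2^{|S|}$ guesses for $Z$ gives the claimed $\OO^*(2^{|S|\log |S|})$ running time; correctness follows since every minimum connected vertex cover induces a valid sequence of states and, conversely, every state sequence encodes a feasible connected vertex cover of the prescribed size.

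The main obstacle I anticipate is justifying the state compression, i.e. showing that replacing the (possibly huge) clique part $K_t$ of a bag by the single token $\ast$ plus an ``at most one optional omitted vertex'' slot loses nothing: one must argue that for connectivity all of $K_t\cap X$ genuinely behaves as one super-vertex, that at most one vertex per maximal clique of $H$ is ever worth omitting from the cover, and — most delicately — that the forced set $R$ (which can meet many cliques of $H$) is threaded through the recurrence consistently, so that the procedure outputs exactly $\textsc{OPT}$ rather than a near-optimal value. A secondary point to handle carefully is the standard clique-tree bookkeeping: a vertex of $H$ may occur in several adjacent bags when it lies in a separator, so the introduce/forget steps must respect that the bags containing any fixed vertex form a connected subtree of $\mathbb{T}$.
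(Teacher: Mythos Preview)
Your proposal is correct and follows essentially the same route as the paper: take a clique tree of the chordal graph $H=G-S$, add $S$ to every bag to obtain a tree decomposition of $G$, and run the standard partition-based \cvc\ dynamic program on it, exploiting that each bag is a clique plus $S$ so the clique part collapses to a single connectivity token and at most one omitted vertex, giving $|S|^{\OO(|S|)}\cdot n^{\OO(1)}$ states per node.

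Two minor remarks. First, your outer $2^{|S|}$ enumeration of $Z=X^*\cap S$ is redundant: the paper simply lets the DP state carry $X\cap S$ directly (your item (i)), so there is no need to fix $Z$ in advance, and the forced set $R$ then falls out of the vertex-cover constraint inside the DP rather than being precomputed. This does not affect the asymptotics but streamlines the argument. Second, your ``one bit plus greedy for simplicial vertices'' handling of the omitted clique vertex is the one place where you are a bit vaguer than necessary; the paper (and the cleanest formulation) just records the identity of the at-most-one omitted vertex of $K_t$ as part of the state, costing a polynomial factor, rather than trying to resolve it greedily. With that adjustment your justification of the state compression goes through without the delicacies you flag as obstacles.
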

\begin{proof}
Let $G$ be a connected graph on $n$ vertices and $S$ denote its chordal deletion set. As $G-S$ is a chordal graph, a tree decomposition of $G-S$ of optimum width in which every bag is a clique can be obtained in linear time \cite{golumbic}. From this tree decomposition of $G-S$, a tree decomposition $\mathcal{T}=(T,\{X_t\}_{t \in V(T)})$ of $G$ can be obtained by adding $S$ to every bag. We will show that the algorithm for \cvc\ on bounded treewidth graphs \cite{MOS05} runs in $\OO^*(2^{\OO(|S| \log |S|)})$ time.  The algorithm employs a dynamic programming routine over the tree decomposition $T$. As this approach is largely standard, we only present an outline of the algorithm here.

For a node $t$ of $T$, let $T_t$ denote the subtree of $T$ rooted at $t \in V(T)$ and $V_t$ denote the union of all the bags of $T_t$. Let $G_t$ denote the subgraph of $G$ induced by $V_t$. For $t \in V(T)$, $X \subseteq X_t$ and a partition $\mathcal{P}=\{P_1,\cdots,P_q\}$ into at most $|X|$ parts, let $\Gamma[t,X,\mathcal{P}]$ denote a minimum vertex cover $Z$ of $G_t$ with $X_t \cap Z=X$ and $G_t[Z]$ has exactly $q$ connected components $C_1,\cdots,C_q$ where $P_i=V(C_i) \cap X_t$ for each $i \in \{1,\cdots,q\}$. Further, if $X$ is empty, $Z$ is required to be connected in $G_i$. Then, $\Gamma[t,\emptyset,\{\emptyset\}]$ is a minimum connected vertex cover of $G$. Observe that the total number of (valid) states per node is $|S|^{\OO(|S|)} n$. This is due to the fact that for each vertex $t \in V(T)$, $G[X_t]$ has a clique deletion set of size $|S|$. Further, each entry can be computed in $|S|^{\OO(|S|)} n^{\OO(1)}$ time. This leads to the claimed result.
\end{proof}
We remark that the lossy kernelization status of this problem remains open even when the chordal deletion set is a feedback vertex set.

\section{Concluding Remarks}
\label{sec:conclusion}

We have given lossy kernels and studied the parameterized complexity statuses of \cvc\ parameterized by split deletion set size, clique cover number and cluster deletion set size. Our \FPT\ running times (for \cvc\ parameterized by split deletion set and cluster deletion set) have slight gaps between upper bounds and lower bounds (based on well-known conjectures), and tightening the bounds is an interesting open problem. A more general direction is to explore the parameterized and (lossy) kernelization complexity of \cvc\ in the parameter ecology program. Figure~\ref{fig:parameter-hierarchy} gives a partial landscape of the (parameterized) complexity of \cvc\ under various structural parameters. Designing (lossy) kernels (or proving impossibility results) for those whose status is unknown is an interesting future direction.



\end{document}